\definecolor{ao}{rgb}{0.0, 0.5, 0.0}
\newtheorem{remark}{Remark}
\newtheorem{proposition}{Proposition}
\newtheorem{corollary}{Corollary}
\newtheorem{definition}{Definition}
\newtheorem{lemma}{Lemma}
\newtheorem{assumption}{Assumption}
\newcommand{\NotAlpha}{\varphi}
\def\0{\mbox{\tiny $0$}}
\def\1{\mbox{\tiny $1$}}
\def\2{\mbox{\tiny $2$}}
\def\3{\mbox{\tiny $3$}}
\def\4{\mbox{\tiny $4$}}
\def\5{\mbox{\tiny $5$}}
\def\6{\mbox{\tiny $6$}}
\def\7{\mbox{\tiny $7$}}
\def\8{\mbox{\tiny $8$}}
\def\9{\mbox{\tiny $9$}}
\def\circaa{\approx}
\def\k{k_{_{B}}}
\def\r{\rangle}
\def\muu{\resub{\tilde{\mu}}}
\tikzset{>=latex} % for LaTeX arrow head
\colorlet{myred}{red!80!black}
\colorlet{myblue}{blue!80!black}
\colorlet{mygreen}{green!60!black}
\colorlet{myyellow}{yellow!60!black}
\colorlet{myorange}{orange!85!red!90!black}
\colorlet{mydarkred}{red!30!black}
\colorlet{mydarkblue}{blue!40!black}
\colorlet{mydarkgreen}{green!30!black}
\tikzstyle{node}=[thick,circle,draw=myblue,minimum size=22,inner sep=0.5,outer sep=0.6]
\tikzstyle{node1}=[thick,rectangle,draw=myblue,minimum width = 1cm,  minimum height = 2cm,inner sep=-0.3,outer sep=0.3]
\tikzstyle{node in}=[node,green!20!black,draw=mygreen!30!black,fill=mygreen!5]
\tikzstyle{node hidden}=[node,blue!20!black,draw=myblue!30!black,fill=myblue!20]
\tikzstyle{node convol}=[node,orange!20!black,draw=myorange!30!black,fill=myorange!20]
\tikzstyle{node out}=[node,red!20!black,draw=myred!30!black,fill=myred!20]
\tikzstyle{connect}=[thick,mydarkblue] %,line cap=round
\tikzstyle{connect arrow}=[-{Latex[length=4,width=3.5]},thick,mydarkblue,shorten <=0.5,shorten >=1]
\tikzset{ % node styles, numbered for easy mapping with \nstyle
  node 1/.style={node in},
  node 2/.style={node hidden},
  node 3/.style={node out},
}
\def\l{\langle}
\def\m{\bar{m}}
\def\bbt{\tilde{\beta}}
\def\R{\mathcal{R}}
\def\q{\bar{q}}
\def\n{\bar{n}}
\def\b{\beta^{'}}
\def\etaM{\hat{\eta}_{M}}
\newcommand{\resub}[1]{\textcolor{black}{#1}}%era red
\newcommand{\SOMMA}[2]{\displaystyle\sum\limits_{#1}^{#2}}
\long\def \beq#1\eeq {\begin{equation} #1 \end{equation}}
\long\def \beaq#1\eeaq {\begin{equation}\begin{aligned} #1 \end{aligned}\end{equation}}
\long\def \bes#1\ees {\begin{equation}\begin{split} #1 \end{split} \end{equation}}
\long\def \bea#1\eea {\begin{eqnarray} #1 \end{eqnarray}}
\long\def \bse[#1]#2\ese {\begin{subequations}\label{#1}\begin{align} #2 \end{align}\end{subequations}}
\newcommand{\si}{\sigma_i}
\newcommand{\mb}{\bar{m}}
\title{Dense Hebbian neural networks: \\ a replica symmetric picture of supervised learning}
\author[a]{Elena Agliari\footnote{Given her role as Editor of this journal, EA had no involvement in the peer-review of articles for which she was an author and had no access to information regarding their peer-review. Full responsibility for the peer-review process for this article was delegated to another Editor.},}
\author[b,f,g]{Linda Albanese,}
\author[b,f]{Francesco Alemanno,}
\author[c]{Andrea Alessandrelli,}
\author[b,f]{Adriano Barra,}
\author[d,e]{Fosca Giannotti,}
\author[c,f]{Daniele Lotito,}
\author[c]{Dino Pedreschi.}
\affiliation[a]{Dipartimento di Matematica, Sapienza Universit\`a di Roma, Piazzale Aldo Moro, 5, 00185, Roma, Italy}
\affiliation[b]{Dipartimento di Matematica e Fisica,  Universit\`a  del Salento, Via per Arnesano, 73100, Lecce, Italy}
\affiliation[c]{Dipartimento di Informatica, Università di Pisa, Lungarno Antonio Pacinotti, 43, 56126, Pisa, Italy}
\affiliation[d]{Scuola Normale Superiore, Piazza dei Cavalieri 7, 56126, Pisa, Italy}
\affiliation[e]{Istituto di Scienza e Tecnologie dell' Informazione, Via Giuseppe Moruzzi, 1, 56124 Pisa, Italy}
\affiliation[f]{Istituto Nazionale di Fisica Nucleare, Campus Ecotekne, Via Monteroni, 73100, Lecce, Italy}
\affiliation[g]{Scuola Superiore ISUFI, Campus Ecotekne, Via Monteroni, 73100, Lecce, Italy}
\abstract{We consider dense, associative neural-networks trained by a teacher (i.e., with supervision) and we
investigate their computational capabilities analytically, via statistical-mechanics tools, and numerically,
via Monte Carlo simulations. In particular, we obtain a phase diagram which summarizes their
performance as a function of the control parameters (e.g., quality and quantity of the training dataset,  network storage, noise), that is valid in the limit of large network-size and structureless datasets.
We also numerically test the learning, storing and retrieval capabilities of these networks on structured datasets such as MNist and Fashion MNist.
As technical remarks, on the analytic side, we extend Guerra's interpolation to tackle the non-Gaussian distributions involved in the post-synaptic potentials while, on the computational side, we insert Plefka's approximation in the Monte Carlo scheme, to speed up the evaluation of the synaptic tensors, overall obtaining a novel and broad
approach to investigate supervised learning in neural networks, beyond the shallow limit.
}
\begin{document}

\maketitle

\section{Introduction}
\label{sec:intro}
%\begin{figure}[h]
%    \centering
    %\includegraphics[scale=0.95]{Nuovo_dust_3-10.png}
%    \caption{Examples of pattern reconstruction 
    %at a very low signal-to-noise ratio
%    : we select an example of a pattern, a $0$ from the MNist dataset  (first column), and we corrupt it by flipping randomly \resub{half} of its pixels (second column): this image constitutes the input $\boldsymbol \sigma^{(0)}$ that the networks experience. In particular, three networks are used, $P=2$ (first row) that is the standard Hopfield model, $P=4$ (second row) and $P=6$ (third row). In all the three \resub{rows}, in the third column we report reconstructions $\langle \boldsymbol \sigma^{(1)}\rangle$ (achieved by a single Monte Carlo (MC) step: with a degree of interaction $P=6$ the network manages to perfectly retrieve the pattern, while \resub{less dense} networks fail.}
    %\label{fig:mnistultranoise}
%\end{figure}
%
%
The research covered in this paper, and in its twin \cite{unsup} addressing the unsupervised counterpart, aims to provide an exhaustive picture of (supervised) Hebbian learning by dense networks (namely networks where interactions involve assemblies of $P>2$ units rather than standard couples, i.e. $P=2$), inspecting their emerging computational capabilities by means of statistical mechanics of disordered systems \cite{MPV,Amit,BarraLenka,Haiping}.  Indeed, in the last decades, statistical mechanical has played a pivotal role for describing and quantifying information processing by neural networks for shallow (see e.g. \cite{AGS,EmergencySN,Coolen,Engel,Kanso, mei2022generalization,montanari2022interpolation}), deep (see e.g., \cite{AAAF-JPA2021,alberici2020annealing,Kadmon,metanfetamina,Haiping-Deep}) and dense (see e.g. \cite{Densesterne,DenseRSB,Fachechi1,HopKro1,Krotov2018,AFMJMP2022}) architectures; in particular, investigations on dense spin-glasses \cite{Uffa,Antony1,burioni,Belius,Gerardo,Tommaso,Caracciolo,Subag1} constitute a fertile ground where  Hebbian theories on dense neural networks can germinate. The ultimate reward we obtain by this approach lies in the knowledge of {\em phase diagrams}, namely plots in the space of the network control parameters (e.g., network size and connectivity, storage load, noise, dataset quality and quantity)  where different emerging computational capabilities (e.g., learning, storage,  recognition, associativity, denoising) are effectively related to particular regions of these diagrams, much as like the phase diagram of the water summarizes in a plot with solely three control parameters (i.e., pressure, volume and temperature) the different {\em regimes} (vapor, liquid, solid) in which a network of water molecules can be found. 
For the various modern neural architectures, the knowledge of such diagrams -- where different working regions are split by ``computational phase transitions'' -- can be helpful in the field of Sustainable AI (SAI) as this allows preparing the network in an optimal setting for a given task, with possible  energy saving (e.g., by choosing the minimal architecture, or by avoiding training when a successful learning is theoretically forbidden) \cite{Computa1,Computa2,Computa3}. 
% Further, the mathematical control that results by statistical mechanical treatments of neural networks suggests that such a discipline may constitute a main strand also toward eXplainable AI (XAI).
\newline
Since the first wave of the statistical mechanical formalization in the late eighties and early nineties of the past century, inspecting how these networks can be trained and can retrieve the learnt information has been a central question, addressed from various perspectives (see e.g. \cite{storage1,storage2,storage3,Engel}), yet in the dense network scenario most of the results are limited to retrieval issues \cite{Baldi,Bovier,Elisabeth1}. In that simpler context, the network experiences just once a set of patterns and stores them in its Hebbian synaptic tensor for successive pattern-recognition usage but it does not undergo  a real ``learning process'', where, instead, the network typically has access only to a (noisy) sample of the reality by which it forms its own representation.
%does not have access to the patterns to store rather it has to form its own representation of them just by experiencing their noisy examples provided in form of datasets to analyze: 
Here we deepen this phenomenon moving from Hebbian storing to supervised Hebbian learning (and refer to \cite{unsup} for its unsupervised counterpart).   
\newline
We stress that, while from a biological modeling perspective these dense networks lack a clear inspiration (as neurons interact mainly in couples, although higher-order generalizations can still be seen as effective models \cite{HopfieldKrotovEffective}), in Machine Learning there are no restrictions preventing their usage, and here we prove that density can lead to significant rewards. In fact, dense networks can be used in two different operational modes, both forbidden to shallow machines, (i) a ``ultra-storage'' regime (that extends the high-load regime of the standard Hopfield model to the dense case), where these networks handle a by far larger number of patterns w.r.t. the standard $P = 2$ limit and (ii) a ``ultra-tolerance'' regime (that extends the low-load regime of the standard Hopfield model to the dense case), where these networks perform pattern recognition at prohibitive signal-to-noise ratio w.r.t. the $P=2$ limit. 
As we will quantify along the paper, these capabilities have a cost: the main flaw in the usage of dense networks lays in the large volume of training examples required to achieve a satisfactory learning. However, unlike the unsupervised case \cite{unsup}, in the supervised setting this cost is independent of the order of interactions $P$, thus, if the dataset is sufficiently large to train a dense network with a relatively low interaction order, we can ``freely'' increase the interaction order and take advantage of the related benefits.

The theory we work out here deals with networks learning from random datasets, where the analytical treatment is feasible, and it is successfully tested against Monte Carlo (MC) simulations and further corroborated on structured datasets (MNist and Fashion-MNist), where numerical results return overall a very good qualitative agreement with theoretical predictions. More technically, our analytical investigations allow us to solve for the statistical mechanics of these networks by adapting Guerra's interpolation technique (see e.g. \cite{GuerraNN,glassy,Fachechi1}).  This generalization is non-trivial: as interactions are dense the fluctuations in the post-synaptic potentials are no longer Gaussian and the universality property of the quenched noise in spin glasses \cite{CarmonaWu} can not be applied directly, however, we could prove its validity also for the current case and this required a few passages, implying, for instance, the evaluation of the lower-order momenta of the distributions of the post-synaptic potentials and the application of central limit theorem (CLT). At the numerical level, as the update of the dense synaptic tensor results is a bottleneck for any dynamical update rule, we implement Plefka's effective scheme \cite{Plefka1} on the restricted Boltzmann machine (RBM) equivalent to the supervised dense Hopfield model to speed up this evaluation;  this route provides overall a new approach for tackling the statistical mechanics of neural networks with complex architectures in broad generality. 

The remaining of this paper is structured as follows: in Sec. \ref{sec:super} we introduce the general setting of dense supervised Hebbian networks, whose analytical findings are summarized in Sec. \ref{sec:RS_Guerra_sup}, including the expression of the quenched statistical pressure and the self-consistency equations for the system observables in the high-load (Subsec. \ref{ultrastorage}) and in the low-load (Subsec. \ref{ultratolerance}) settings; the numerical results are reported in Sec. \ref{sec:numerical} along with some experiments on structured datasets (Subsec. \ref{sec:mnist}) and a comparison between unsupervised and supervised regimes (Subsec. \ref{TacciDeThe}); finally, in Sec. \ref{sec:conclusion} we comment results and discuss possible outlooks. Technical details and lengthy proofs are collected in the Appendices.

\section{Dense Hebbian Neural Network in the supervised setting}
\label{sec:super}

In this section we describe dense associative neural networks, namely Hebbian networks with a {\em density} (i.e. the order of interaction $P$ among neurons) $P >2$,  trained in a supervised regime. We start addressing  them via statistical mechanics: the ultimate purpose of this approach is to work out phase diagrams splitting the space of the control parameters (e.g. storage, density, noise and dataset size) into regions where the network shows different emerging computational capabilities. To reach this goal we need to introduce control parameters and order parameters\footnote{We recall that the ``order parameters'' are macroscopic observables which allow describing the emergent proprieties of the network.} suitable for the present case and find an explicit expression for the (quenched) statistical pressure \cite{MPV} in terms of these parameters. Then, by extremizing the	 statistical pressure w.r.t. the order parameters we get a set of self-consistency equations for the evolution of these order parameters in the space of the control parameters, whose investigation ultimately provides the phase diagrams of these networks.
As a technical remark, we stress that we work under the standard replica-symmetry (RS) assumption, namely, we assume that all the order parameters exhibit vanishing fluctuations  around their means in the thermodynamic limit. 

The dense associative neural networks studied in this manuscript are made up of $N$ binary neurons $\boldsymbol \sigma = (\sigma_1, \sigma_2, ..., \sigma_N) \in \{-1, +1\}^N$, interacting in groups of $P$  units  and the classical Hebb's rule is revised in such a way that  it can be meant as a result of a supervised learning. Specifically, it is built over a set of patterns that constitute a perturbed version of some unknown archetypes that the network has to infer, store and possibly retrieve. These archetypes are $K$ binary vectors of length $N$, denoted with $\{ \boldsymbol{\xi}^{\mu}\}^{\mu=1, \hdots , K}$ and their entries are Rademacher random variables  drawn as 
\begin{equation} \label{eq:quenched_xi}
\mathbb P (\xi_i^{\mu}) = \frac{1}{2}\left[\delta_{\xi_i^{\mu},-1} + \delta_{\xi_i^{\mu},+1}\right]
\end{equation}
for $i=1, \hdots, N$ and $\mu=1,\hdots, K$. We introduce $M$ examples for each archetype and we denote them as $\{ \boldsymbol{\eta}^{\mu,a}\}_{a=1,...,M}^{\mu=1,...,K}$ with $\boldsymbol{\eta}^{\mu,a} \in \{ -1, +1\}^N$; these are corrupted versions of the related archetypes, and theirs entries are drawn as 
%\begin{equation}
%\eta_{i}^{\mu, a}= \xi_i^\mu \cdot %\chi_i^{\mu,a},
%\end{equation}
%where $\chi_{i}^{\mu,a}$ is a Bernoullian random variable, namely 
%\begin{equation}\label{eq:quenched_chi}
%\mathbb{P}(\chi_{ i}^{\mu,a}) = \frac{1+r}{2}\delta_{\chi_i^{\mu,a},+1} +  \frac{1-r}{2}\delta_{\chi_i^{\mu,a},-1}, 
%\end{equation}
\begin{equation}\label{eq:quenched_chi}
\mathbb{P}(\eta_{ i}^{\mu,a}|\xi_{ i}^{\mu}) = \frac{1+r}{2}\delta_{\eta_{ i}^{\mu,a},+\xi_{ i}^{\mu}} +  \frac{1-r}{2}\delta_{\eta_{ i}^{\mu,a},-\xi_{ i}^{\mu}}, 
\end{equation} where $r \in (0,1)$ for $i=1, \hdots, N$ and $\mu=1,\hdots, K$. Notice that $r$ plays as a dataset-quality  control  parameter: as $r \to 1$ the examples collapse on the related archetype, while as $r \to 0$ the examples turn out to be uncorrelated with the related archetype.
\newline 
The network is supplied with the dataset $\{ \boldsymbol{\eta}^{\mu,a}\}_{a=1,...,M}^{\mu=1,...,K}$ and it is asked to reconstruct the unknown archetypes $\{ \boldsymbol{\xi}^{\mu}\}^{\mu=1,...,K}$. The available information is allocated in the interaction strength among neurons, that is in the Hebbian synaptic tensors $J_{i_1,...,i_P}^{(sup)}$ (\emph{vide infra} and Eq. (\ref{SynapticTensor})).  We now proceed by defining the following   
\begin{definition}
\label{def:H_sup}
The cost function (or \textit{Hamiltonian}) of the dense Hebbian neural network in the supervised regime is 
\begin{align}
    \mathcal H^{(P)}_{N,K,M,r}(\boldsymbol{\sigma} \vert \bm \eta)=& -\dfrac{N}{ 2\mathcal{R}^{P/2}}\SOMMA{\mu=1}{K}\left(\dfrac{1}{N^P\,M^P}\SOMMA{(i_1,\hdots,i_P)=1}{N,\hdots,N}\SOMMA{a_1,\hdots,a_P=1}{M,\hdots,M}\eta_i^{\mu,a} \cdots \eta_{i_P}^{\mu,a_P} \sigma_{i_1}\cdots\sigma_{i_P}\right) 
    \label{def:H_Psup}
\end{align}
% \begin{align}
%     \mathcal H^{(P)}_{N,K,M,r}(\boldsymbol{\sigma} \vert \bm \xi, \bm \chi)=& -\dfrac{N}{P! \mathcal{R}^{P/2}}\SOMMA{\mu=1}{K}\left(\dfrac{1}{N\,M}\SOMMA{i=1}{N}\SOMMA{a=1}{M}\xi^{\mu}_{i}\chi^{\mu, a}_{i}\sigma_{i}\right)^P 
%     \label{def:H_Psup}
% \end{align}
where $P$ is the interaction order (assumed as even) and we posed $\mathcal{R}:=r^2 + \frac{1-r^2}{M}$.\footnote{The value $\R$ corresponds to the variance of the random variable $\dfrac{1}{M} \SOMMA{a=1}{M} \eta_i^{\mu,a}$. This normalization will be convenient in the calculations presented in Sec.~\ref{sec:RS_Guerra_sup}.}
Moreover, we define  $\sum\limits_{(i_1,\hdots,i_P)}^{N,\hdots,N} =\sum\limits_{\underset{i_1\neq\hdots\neq i_P}{i_1,\hdots,i_P}}^{N,\hdots,N}$ (namely the summation in which only terms with all different "$i$" indices are taken into account)

The related Boltzmann-Gibbs partition function is defined as 
\begin{align}
    \mathcal{Z}^{(P)}_{N,K,M ,r,\beta} (\bm \eta) = \sum_{\bm \sigma} \exp \left( -\beta \mathcal H^{(P)}_{N,K,M, r}(\bm \sigma \vert \bm \eta)\right)=: \sum_{\bm \sigma} \mathcal{B}^{(P)}_{N,K, M,r,\beta}(\bm \sigma  \vert \bm \eta)
    \label{partitionfunction}
    \end{align}
    where $\mathcal{B}^{(P)}_{N,K, M,\beta}(\bm \sigma  \vert \bm \eta)$ is referred to as Boltzmann factor and $\beta \in \mathbb{R}^+$ tunes the broadness of the underlying measure\footnote{If $\beta \to 0$ the Boltzmann-Gibbs measure over $\{ - 1, +1\}^N$ is flat,  while for $\beta \to \infty$ it gets delta-peaked at configurations corresponding to the ground states.}.
    
    The quenched statistical pressure\footnote{We stress that the statistical pressure $\mathcal A$ share the same information content of the free energy  $\mathcal F$ as $\mathcal A = -\beta \mathcal F$: while the usage of $\mathcal F$ is more abundant in the Literature, preferring the former over the latter obviously does not alter the results stemming from their analyses.} of this model reads as 
    \begin{align}
         \mathcal A^{(P)}_{N,K,M,r,\beta} = \frac{1}{N}\mathbb{E}\log  \mathcal{Z}^{(P)}_{N,K,M r,\beta} (\boldsymbol \eta)
        \label{PressureDef_unsup}
    \end{align}
    where $\mathbb{E}=\mathbb{E}_{\xi}\mathbb{E}_{(\eta|\xi)}$ denotes the average over the realization of examples, namely over the distributions \eqref{eq:quenched_xi} and \eqref{eq:quenched_chi}.

    By combining the quenched average $\mathbb{E}[\cdot]$ and the Boltzmann-Gibbs average
    \begin{equation} \label{omegaNKM}
    \omega[(\cdot)]:= \frac{1}{\mathcal{Z}^{(P)}_{N,K,M ,r,\beta} (\bm \eta)} \sum_{\boldsymbol \sigma}^{2^N} ~ (\cdot) ~ \mathcal{B}^{(P)}_{N,K, M,r,\beta}(\bm \sigma  \vert \bm \eta) \, 
    \end{equation}
    -- possibly replicated over two or more replicas\footnote{A replica is an independent copy of the system characterized by the same realization of disorder, namely by the same realization of the archetypes and examples and, thus, of the synaptic tensors.} that is, $\Omega(\cdot):=\omega (\cdot) \times \omega (\cdot) ...\times \omega (\cdot)$ -- we get the expectation
       \begin{equation}
    \langle \cdot \rangle := \mathbb{E} \Omega(\cdot).
     \end{equation}
\end{definition}

We stress that from the Hamiltonian in Eq. \eqref{def:H_Psup}, we get
% \footnotesize
\begin{align}
    \mathcal H^{(P)}_{N,K,M,r}(\boldsymbol{\sigma} \vert \bm \eta)=& - \dfrac{1}{2 N^{P-1}} \SOMMA{(i_1,\cdots,i_P)}{N,\cdots,N} J_{i_1\cdots i_P}^{(sup)} \sigma_{i_1}\cdots\sigma_{i_P} \notag \\
    \nonumber
    =&-\dfrac{1}{2 \mathcal{R}^{P/2}M^P N^{P-1}}\SOMMA{\mu=1}{K}\SOMMA{(i_1,\cdots,i_P)}{N,\cdots,N}\SOMMA{a_1,\cdots,a_P}{M,\cdots,M} \eta_{i_1}^{\mu,a_1}\cdots \eta_{i_P}^{\mu,a_P}\sigma_{i_1}\cdots\sigma_{i_P}  . 
\end{align}
\normalsize
% \footnotesize
% \begin{align}
%     \mathcal H^{(P)}_{N,K,M,r}(\boldsymbol{\sigma} \vert \bm \xi, \bm \chi)= &-\dfrac{1}{P! \mathcal{R}^{P/2}M^P N^{P-1}}\SOMMA{\mu=1}{K}\SOMMA{i_1,\cdots,i_P}{N,\cdots,N}\SOMMA{a_1,\cdots,a_P}{M,\cdots,M}\xi^{\mu}_{i_1}\chi^{\mu, a_1}_{i_1}\cdots\xi^{\mu}_{i_P}\chi^{\mu, a_P}_{i_P}\sigma_{i_1}\cdots\sigma_{i_P}  + \mathcal{O}\left(K N^{1-P/2}\right); 
% \end{align}
% \normalsize
%where we neglect the subleading terms in $N$ as, in the thermodynamic limit \resub{where our theory is developed, these contributions are vanishing}. 
Thus, the entries of the interaction matrix in the supervised regime for $P\geq 2$, are
\begin{align}\label{SynapticTensor}
    J_{i_1\cdots i_P}^{(sup)}= \dfrac{1}{\mathcal{R}^{P/2}M^P }\SOMMA{\mu=1}{K}\SOMMA{a_1,\cdots,a_P}{M,\cdots,M} \eta_{i_1}^{\mu,a_1}\cdots \eta_{i_P}^{\mu,a_P}\,,
\end{align}
and they differ from those introduced in \cite{unsup} for the unsupervised regime which are, instead, 
\begin{align} \label{SynapticTensor_unsup}
    J_{i_1\cdots i_P}^{(unsup)}= \dfrac{1}{\mathcal{R}^{P/2}M }\SOMMA{\mu=1}{K}\SOMMA{a=1}{M} \eta_{i_1}^{\mu,a}\cdots \eta_{i_P}^{\mu,a}\,.
\end{align}
% \begin{align}\label{SynapticTensor}
%     J_{i_1\cdots i_P}^{(sup)}= \dfrac{1}{\mathcal{R}^{P/2}M^P }\SOMMA{\mu=1}{K}\SOMMA{a_1,\cdots,a_P}{M,\cdots,M}\xi^{\mu}_{i_1}\chi^{\mu, a_1}_{i_1}\cdots\xi^{\mu}_{i_P}\chi^{\mu, a_P}_{i_P}\,. 
% \end{align}
By matching \eqref{SynapticTensor} and \eqref{SynapticTensor_unsup}, one can see that the role of the teacher in Hebbian learning -- resembling classical supervision in Machine Learning -- consists in grouping together all the examples pertaining to the same archetype before presenting them to the network. We anticipate that in Sec.~\ref{TacciDeThe} we compare outcomes of these networks trained under the two protocols \cite{unsup}.
 
Differences are pronounced especially when $r$ (and $M$) is relatively small, while, as expected, when $r=1$ supervised and unsupervised settings both converge to the standard Hopfield model, namely to the dense Hebbian storing network, endowed with the same interaction order.

\begin{remark}
\label{dualEff}
%In Sec. \ref{sec:numerical} having an integral representation of the partition function in Eq. \eqref{partitionfunction} will be more convenient. 
We can handle the partition function $\mathcal{Z}^{(P)}_{N,K,M ,r,\beta} (\bm \eta)$ in order to get an integral representation that shall be more convenient when approaching the model numerically (see Sec. \ref{sec:numerical}). Starting from Eq. \eqref{partitionfunction}, we apply the Hubbard-Stratonovich transformation to get, 
%\blue{neglecting the diagonal term}
\begin{equation}
\begin{array}{lll}
     \mathcal{Z}^{(P)}_{N,K, \rho ,\beta} (\bm \eta)
    &=  \SOMMA{\bm \sigma}{} \int \mathcal{D}\bm\tilde{\mu}(\bm z) \exp \Bigg\{\sqrt{\dfrac{\beta'}{N^{P-1} \R^{P/2}}}\SOMMA{\mu=1}{K}&\SOMMA{i_1,\cdots i_{P/2}}{N,\cdots,N}\,\Bigg[\left( \dfrac{1}{M}\SOMMA{a_{1}=1}{M}\eta_{i_1}^{\mu,a_1} \right)\cdot \hdots
    \\\\
    &&\left.\left.\cdot\left(\dfrac{1}{M}\SOMMA{a_{P/2}=1}{M}\eta_{i_{P/2}}^{\mu,a_{P/2}} \right)\right]\sigma_{i_1}\cdots\sigma_{i_{P/2}}z_{\mu} \right\} 
\end{array}
    \label{eq:partition}
\end{equation}
where we posed $\rho:=\dfrac{1-r^2 }{r^2M}$, $\beta':=2\beta/P!$, $ \mathcal D\bm \tilde{\mu}(\bm z)=\prod\limits_{\mu} d\tilde{\mu}(z_{\mu}) $ and  $d\tilde{\mu}(z_{\mu})=\dfrac{dz_\mu}{\sqrt{2\pi}}\exp\left( -\dfrac{z_\mu^2}{2}\right)$ represents the Gaussian measure. Moreover, we have exploited 
\begin{equation}
    \begin{array}{lll}
         &\dfrac{P!}{2 N^{P-1}}\SOMMA{(i_1,\cdots i_{P})}{N,\cdots,N}\,\left( \Phi_{i_1}^{\mu} \hdots \Phi_{i_{P}}^{\mu} \right)=\dfrac{1}{2 N^{P-1}}\SOMMA{i_1,\cdots i_{P}}{N,\cdots,N}\,\left( \Phi_{i_1}^{\mu} \hdots \Phi_{i_{P}}^{\mu} \right)- \mathcal{O}(N^{P/2-1})\,,
    \end{array}
\end{equation}
and we neglect the subleading network-size terms.

We can think  of the above transformation as a mapping between  the original dense Hebbian network and a RBM where $K$ hidden neurons $z$ (equipped with a Gaussian prior) interact with the $N$ visible neurons $\boldsymbol \sigma$ grouped in sets, each made of $P/2$ neurons $\sigma_{i_1}\cdots\sigma_{i_{P/2}}$ with weight considered as $\left(M^{-1}\sum_{a_1=1}^M \eta_{i_1}^{\mu,a_1}\right) \cdots \left(M^{-1}\sum_{a_{P/2}=1}^M \eta_{i_{P/2}}^{\mu,a_{P/2}}\right)$. 
%\footnote{This equivalence holds also for the pairwise Hopfield model \cite{Contucci} and higher order-Boltzmann machines \cite{Senio1,BarraPRLdetective}.}. 
In this integral representation of the partition function there is a one-to-one correspondence  between  the number of archetypes and the amount of hidden neurons, suggesting to allocate one hidden neuron to the recognition of a specific archetype (this setting is historically known as {\em grandmother cell}  \cite{NonnaCell,prlmiriam,AM-2020}). A schematic representation of this mapping is presented for a simple case in Fig. \ref{fig:network}.
\end{remark}

%%%%QUI RETE
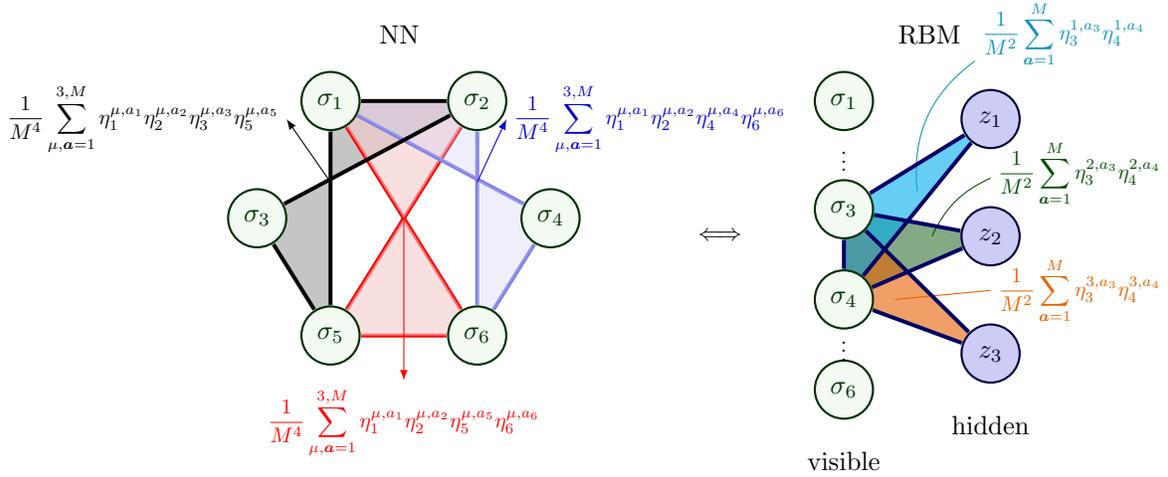
\begin{figure}[t]
\centering
\begin{tikzpicture}[x=1.95cm,y=1.2cm]

\node[node in,outer sep=0.6] (S-1) at (-3.5,0.2) {$\sigma_{1}$};
\node[node in,outer sep=0.6] (S-2) at (-2.5,0.2) {$\sigma_{2}$};
\node[node in,outer sep=0.6] (S-3) at (-4.0,-1.1) {$\sigma_{3}$};
\node[node in,outer sep=0.6] (S-4) at (-2.0,-1.1) {$\sigma_{4}$};
\node[node in,outer sep=0.6] (S-5) at (-3.5,-2.4) {$\sigma_{5}$};
\node[node in,outer sep=0.6] (S-6) at (-2.5,-2.4) {$\sigma_{6}$};

\draw[line width=1.5pt] (S-1) -- (S-5);
\draw[line width=1.5pt] (S-1) -- (S-2);
\draw[line width=1.5pt] (S-2) -- (S-3);
\draw[line width=1.5pt] (S-3) -- (S-5);

\begin{scope}[on background layer]
\path [fill=lightgray,opacity=0.9,very thin] (S-1.center) to  (S-5.center) 
    to  (S-3.center) to (S-2.center);
    \begin{scope}[on background layer]
    \draw[line width=1.5pt, red] (S-1) -- (S-6);
    \draw[line width=1.5pt, red] (S-2) -- (S-5);
    \draw[line width=1.5pt, red] (S-5) -- (S-6);
        \path [fill=myred!20,opacity=0.6,very thin] (S-1.center) to  (S-2.center) 
    to  (S-5.center) to (S-6.center);
    \begin{scope}[on background layer]
    \draw[line width=1.5pt, myblue!50] (S-6) -- (S-2);
    \draw[line width=1.5pt, myblue!50] (S-4) -- (S-1);
    \draw[line width=1.5pt, myblue!50] (S-4) -- (S-6);
        \path [fill=myblue!20,opacity=0.3,very thin] (S-1.center) to  (S-2.center)  to  (S-6.center) to (S-4.center);
    \end{scope}
    \end{scope}
\end{scope}

 \draw[<-]        (-3.8,-0.0)node[left, scale = 0.8] {$\dfrac{1}{M^4}\SOMMA{\mu,\bm a=1}{3,M}\eta_{1}^{\mu,a_1}\eta_{2}^{\mu,a_2}\eta_{3}^{\mu,a_3}\eta_{5}^{\mu,a_5}$}   -- (-3.5,-0.7) ;

 \draw[->, myblue]        (-2.5,-0.7)   -- (-2.3,0.0) node[right, scale = 0.8]{$\dfrac{1}{M^4}\SOMMA{\mu,\bm a=1}{3,M}\eta_{1}^{\mu,a_1}\eta_{2}^{\mu,a_2}\eta_{4}^{\mu,a_4}\eta_{6}^{\mu,a_6}$};

 \draw[->, red]        (-3.0,-1.1)   -- (-3.0,-2.9) node[below, scale = 0.8] {$\dfrac{1}{M^4}\SOMMA{\mu,\bm a=1}{3,M}\eta_{1}^{\mu,a_1}\eta_{2}^{\mu,a_2}\eta_{5}^{\mu,a_5}\eta_{6}^{\mu,a_6}$};

  % INPUT LAYER
\node[node in,outer sep=0.6] (NI-1) at (0,0.2) {$\sigma_{1}$};
\node[node in,outer sep=0.6] (NI-3) at (0,-1) {$\sigma_{3}$};
\node[node in,outer sep=0.6] (NI-4) at (0,-2) {$\sigma_{4}$};
\node[node in,outer sep=0.6] (NI-5) at (0,-3) {$\sigma_{6}$};
    
  % OUTPUT LAYER
\node[node hidden] (NO-1) at (1,-0.8+0.8) {$z_{1}$};
\node[node hidden] (NO-2) at (1,-0.8-0.5) {$z_{2}$};
\node[node hidden] (NO-3) at (1,-0.8-1.8) {$z_{3}$};
  
  % DOTS
  \draw[connect, line width=1.5pt] (NI-3) -- (NO-1);
  \draw[connect, line width=1.5pt] (NI-4) -- (NO-1);
  \draw[connect, line width=1.5pt] (NI-3) -- (NI-4);
  \draw[connect, line width=1.5pt] (NI-3) -- (NO-3);
  \draw[connect, line width=1.5pt] (NI-4) -- (NO-3);
  \draw[connect, line width=1.5pt] (NI-3) -- (NO-2);
  \draw[connect, line width=1.5pt] (NI-4) -- (NO-2);
  \path (NI-4) --++ (NI-5) node[midway,scale=0.8] {$\vdots$};
  \path (NI-1) --++ (NI-3) node[midway,scale=0.8] {$\vdots$};

\node[above=25, right=17,align=center] at (NI-1) {RBM};
\node[below=20, align=center] at (NI-5) {visible};
\node[below=20,align=center] at (NO-3) {hidden};
\node[above=25, right=15,align=center] at (S-1) {NN};

\begin{scope}[on background layer]
    \path [fill=mydarkgreen!80,opacity=0.6,very thin] (NI-3.center) to  (NI-4.center) 
        to  (NO-2.center);
    \draw[-,cyan!70!black] (1.5,0.5) node[above, scale = 0.8]{$\dfrac{1}{M^2}\SOMMA{\bm a=1}{M}\eta_{3}^{1,a_3}\eta_{4}^{1,a_4}$}   edge[bend right=30,right]   (0.5,-0.8);
    %\draw[->, red!80!black]        (0.5,-0.8)    -- (1.2,0.9) node[right, scale = 0.8] {\large$\SOMMA{a=1}{M}\eta_3^{2,a}\eta_4^{2,a}$};

    \path [fill=myorange,opacity=0.6,very thin] (NI-3.center) to  (NI-4.center) 
        to  (NO-3.center);
    \draw[-, myorange]        (0.35,-2.0)   -- (1.0,-1.9) node[right, scale = 0.8] {$\dfrac{1}{M^2}\SOMMA{\bm a=1}{M}\eta_{3}^{3,a_3}\eta_{4}^{3,a_4}$};

    \path [fill=cyan,opacity=0.6,very thin] (NI-3.center) to  (NI-4.center) 
        to  (NO-1.center);
    \draw[-, mydarkgreen]          (1.0,-0.6)  node[right, scale = 0.8] {$\dfrac{1}{M^2}\SOMMA{\bm a=1}{M}\eta_{3}^{2,a_3}\eta_{4}^{2,a_4}$}  edge[bend right=20,right]  (0.6,-1.3);
\end{scope}
  \node[right,scale=0.9] at (-1.05,-1.3)
    {$\Longleftrightarrow$};
\end{tikzpicture}
\caption{Representation of the dense supervised neural network (NN, left) and its dual RBM (right), setting $N=6$, $K=3$ and $P=4$. The areas filled in different colours correspond to different interactions.
The RBM is built with an hidden layer made of $K$ Gaussian neurons $\{z_\mu\}_{\mu=1,2,3}$ and a visible one made of $N$ binary variables $\{\sigma_i\}_{i=1,\hdots,6}$. In particular, any $z_\mu$ can interact simultaneously with a set of $P/2$ (namely, $2$) visible spins $\{\sigma_i, \sigma_j\}$ and the corresponding coupling is $\left(\frac{1}{M}\sum\limits_{a_1=1}^{M}\eta_i^{\mu,a_1}\right)\left(\frac{1}{M}\sum\limits_{a_2=1}^{M}\eta_j^{\mu,a_2}\right)$. In the NN, the neurons interact $4-$wise and the
synaptic weight for any set of variables $\{\sigma_i, \sigma_j,\sigma_k,\sigma_l\}$ is $\sum\limits_{\mu=1}^{K}\left(\frac{1}{M}\sum\limits_{a_1=1}^{M}\eta_i^{\mu,a_1}\right)\left(\frac{1}{M}\sum\limits_{a_2=1}^{M}\eta_j^{\mu,a_2}\right)\left(\frac{1}{M}\sum\limits_{a_3=1}^{M}\eta_k^{\mu,a_3}\right)\left(\frac{1}{M}\sum\limits_{a_4=1}^{M}\eta_l^{\mu,a_4}\right)$.}
\label{fig:network}
\end{figure}

In the thermodynamic limit $N \to \infty$ we also introduce 
\begin{definition}
    The network load is defined as
\begin{equation}
\lim_{N\to +\infty} \frac{K}{N^{b}} =: \alpha_{b} < \infty
\label{eq:carico_TDL}
\end{equation}
with $b \leq P-1$.\footnote{The case $b>P-1$ is known to lead to a black-out scenario \cite{Baldi, Bovier} not useful for computational purposes and shall be neglected here.  For more detailed considerations on the upper bound of the parameter $b$ we refer to \cite{Albanese2021}.}
Further, the statistical pressure in the thermodynamic limit is denoted as
\begin{equation}
\mathcal A^{(P)}_{\alpha_{b} , M,r,\beta} = \lim_{N \to \infty} \mathcal A^{(P)}_{N,K,M,r,\beta}.
\end{equation}
\end{definition}

According to its value, the parameter $b$ gives rise to two different operational regimes, as mentioned in Sec. \ref{sec:intro}. The case $b = P - 1$ is called \emph{high-load regime} and, as we will deepen in Sec. \ref{ultrastorage}, in this setting the synaptic tensor allows allocating a great amount of information that is sufficient to infer, store  and retrieve up to $K \circaa N^{P-1}$ archetypes, where with the symbol " $\circaa$ ", we mean \textit{at the order of magnitude of}. Conversely, when $b < P - 1$, we have the \emph{low-load regime}, deepened in Sec. \ref{ultratolerance}, where the amount of resources is oversized compared to the amount of patterns to be handled so that we can employ the resulting redundancy to cope with the presence of additional noise affecting examples on tensor entries. 

It is also convenient to introduce the $P$-independent load $\gamma$ defined by
\begin{equation}
\label{eq:load_gamma}
    \alpha_{P-1}:=\gamma\dfrac{2}{P!} \, .
\end{equation}
We observe that, as long as $P$ is fixed, the assumption $\alpha_{P-1} < \infty$  also means that $\gamma < \infty$.

Finally, we introduce a set of observables, that play as \textit{order parameters} of the theory,   
\begin{definition}
The order parameters of the dense supervised Hebbian network introduced in Def. \ref{def:H_sup} are
    \begin{equation}
    \begin{array}{lll}
    \label{eq:orderparameters}
         n_\mu&=\dfrac{r}{\mathcal{R}}\dfrac{1}{N M}\SOMMA{i,a=1}{N,M}\eta_i^{\mu,a}\sigma_i,\\ 
         m_\mu&=\dfrac{1}{N}\SOMMA{i=1}{N}\xi_i^{\mu}\sigma_i, \\
         q_{cd}&=\dfrac{1}{N}\SOMMA{i=1}{N}\sigma_{i}^{(c)}\sigma_{i}^{(d)},
    \end{array}
\end{equation}
where $\mu=1, \hdots, K$ and $c$ and $d$ label two different replicas. 
\end{definition}
We highlight that the Mattis magnetization $m_{\mu}$ 
quantifies the alignment of the network configuration $\boldsymbol \sigma$ with the archetype $\boldsymbol \xi^{\mu}$, $n_{\mu}$ compares the alignment of the network configuration with the average of all the examples labeled by $\mu$ (i.e. pertaining to the $\mu$-th archetype)  and $q_{cd}$ is the standard two-replica overlap between the replicas $\boldsymbol \sigma^{(c)}$ and $\boldsymbol \sigma^{(d)}$.

\section{Analytical findings}
\label{sec:RS_Guerra_sup}

In this section we solve the statistical mechanics of the dense supervised Hebbian network, namely we obtain the explicit expression of its quenched statistical pressure in terms of control and order parameters in the thermodynamic limit and under the RS assumption. Next, we extremize the statistical pressure to obtain a set of self-consistency equations for the order parameters, whose solution allows us to draw the phase diagram of these networks. 
To do so, we exploit Guerra's interpolation technique \cite{guerra_broken}, yet some adjustments are in order. In fact, in these dense networks the internal fields (also referred to as post-synaptic potentials) acting on neurons are non-Gaussian variables and this prevents the (direct) application of Guerra's interpolation technique. However, we can show that these random variables occur in calculations only through suitable averages which can be replaced by a single normal variable via the CLT and theorems on the universality of the quenched noise  \cite{CarmonaWu,Genovese}.  Finally at the end of Sec. \ref{subsec:S2N_sup}, we corroborate this procedure by means of MC simulations (Fig. \ref{fig:s2n_unsup}).

%\resub{we stress that each example making up the available dataset is a noisy version of its corresponding pattern, namely $\eta_i^{\mu,a}=\xi_i^\mu \chi_i^{\mu,a}$ for $\mu=1, \hdots, K$, $a=1, \hdots, M$, $i=1, \hdots, N$, therefore we can exploit this relation and work with the Hamiltonian dependent on $\bm \xi$ and $\bm \eta$. Moreover}, 
Further, we recall that, as standard (see e.g. \cite{Coolen}) and with no loss of generality, in the following we will focus on the ability of the network to learn and retrieve the first archetype $\boldsymbol \xi^1$. Thus, in the next equation, the contribution corresponding to $\mu=1$ shall be split from all the others and interpreted as the {\em signal} contribution, while the remaining ones (i.e. those with $\mu \neq 1$) make up the {\em slow-noise} contribution impairing both learning and retrieval of $\boldsymbol \xi^1$, moreover we apply the functional generator\footnote{We add the term $J \sum_i \xi_i^1 \si$ in the exponent of the Boltzmann factor to ``generate'' the expectation of the Mattis magnetization $m_1$. To do so, we evaluate the derivative w.r.t. $J$ of the quenched statistical pressure and then set $J=0$.}:
\begin{align}
\label{eq:partizione_originale}
    \mathcal{Z}^{(P)}_{N,K, \rho ,\beta} (\bm \eta)
    &= \sum_{\bm \sigma}  \exp \left[ J \SOMMA{i=1}{N} \xi_i^1 \si +\beta'\dfrac{N}{2}(1+\rho)^{P/2}n_1^{P}(\boldsymbol{\sigma}) \right. 
    \\
    &\left.+\dfrac{\b P!}{2N^{P-1}\mathcal{R}^{P/2} }\SOMMA{\mu>1}{K}\SOMMA{(i_1,\cdots i_{P})}{N,\cdots,N}\,\left( \dfrac{1}{M^{P}}\SOMMA{a_1,\dots,a_{P}}{M,\cdots,M} \eta_{i_1}^{\mu,a_1} \hdots \eta_{i_{P}}^{\mu,a_{P}}\right)\sigma_{i_1}\cdots\sigma_{i_{P}} \right]. \notag 
\end{align}

Applying the CLT on the variable $\dfrac{1}{M}\sum_a^M  \eta_i^{\mu,a}$, we can replace each $i$-th contribution in the noise term in the \emph{round} brackets in \eqref{eq:partition} with $\sqrt{\mathcal{R}}\phi_{i}^\mu$ with $\phi_i^\mu\sim \mathcal{N}(0,1)$, thus we get
\begin{align}
\label{eq:partitionfunction2}
    \mathcal{Z}^{(P)}_{N,K, \rho ,\beta} (\bm \eta)
    &=  \sum_{\bm \sigma}  \exp \left[ J \SOMMA{i=1}{N} \xi_i^1 \si +\beta'\dfrac{N}{2}(1+\rho)^{P/2}n_1^{P}(\boldsymbol{\sigma}) \right. 
    \\
    &\left.+\dfrac{\b P!}{2N^{P-1} }\SOMMA{\mu>1}{K}\SOMMA{(i_1,\cdots i_{P})}{N,\cdots,N}\,\left(\phi_{i_1}^{\mu} \hdots \phi_{i_{P}}^{\mu}\right)\sigma_{i_1}\cdots\sigma_{i_{P}} \right]. \notag 
\end{align}

\begin{remark}
\label{rem:entrpoy_rho}
As discussed in \cite{prlmiriam} for the case of pair-wise interactions, it can be proved that the conditional entropy $H(\boldsymbol{\xi}^{\mu}|\{\boldsymbol{\eta}^{\mu,a}\}^{a=1,\hdots,M})$ for each $\mu =1, \hdots , K$, that quantifies the amount of information needed to describe the original message $\boldsymbol{\xi}^\mu$ given the $M$ examples $\{\boldsymbol{\eta}^{\mu,a}\}^{a=1,\hdots,M}$, is a monotonically increasing function of $\rho$ and, in the following, with a slight abuse of language, we will refer to $\rho$ as the \textit{dataset entropy}.
\end{remark}

% Applying the CLT on the variable $\dfrac{1}{M}\sum_a^M \resub{\eta_i^{\mu,a}}$, we can replace each $i$-th contribution in the noise term in the \emph{round} brackets in \eqref{eq:partition} with $\sqrt{\mathcal{R}}\phi_{i}^\mu$ with $\phi_i^\mu\sim \mathcal{N}(0,1)$, thus we get
% \begin{align}
%    \mathcal{Z}^{(P)}_{N,K, \rho ,\beta} (\bm \eta)
%    &= \lim_{J \rightarrow 0} \mathcal{Z}^{(P)}_{N,K,\resub{\rho},\beta}( \bm \eta ;J) \notag \\
%    &= \lim_{J \rightarrow 0} \sum_{\bm \sigma} \int \resub{\mathcal{D}\bm\muu(\bm z)} \exp \left[ J \SOMMA{i=1}{N} \xi_i^1 \si +\beta'\dfrac{N}{2}(1+\rho)^{P/2}n_1^{P}(\boldsymbol{\sigma}) \right. \label{eq:partition_CLT}\\
%    &\left.+\sqrt{\green{\dfrac{P!(P/2)!}{2}}\dfrac{\beta'}{N^{P-1} }}\SOMMA{\mu>1}{K}\SOMMA{(i_1,\cdots i_{P/2})}{N,\cdots,N}\,\left(\phi_{i_1}^\mu \hdots \phi_{i_{P/2}}^\mu \right)\sigma_{i_1}\cdots\sigma_{i_{P/2}}z_{\mu} \right]. \notag
% \end{align}

If we focus on the noise term of Eq. \eqref{eq:partitionfunction2} and, in particular, on the random variable in the round brackets in \eqref{eq:partitionfunction2}, we can notice that it is given by a product of Gaussians. Using the theorem of the universality of the quenched noise stated in \cite{CarmonaWu,Genovese}, we can solve equivalently the model with a noise distribution which is a single normal random variable $\lambda_{i_1, \hdots, i_{P}}^\mu \sim \mathcal{N}(0,1)$. 
Therefore, we have reached an effective partition function for supervised dense Hebbian neural networks that reads as 
\begin{align}
    \mathcal{Z}^{(P)}_{N,K,\rho,\beta} (\bm \eta) &= \lim_{J \rightarrow 0} \mathcal{Z}^{(P)}_{N,K, \rho,\beta}( \bm \eta ;J) \notag \\
    &= \lim_{J \rightarrow 0} \sum_{\bm \sigma} \exp \left[ J \SOMMA{i=1}{N} \xi_i^1 \si +\beta'\dfrac{N}{2}(1+\rho)^{P/2} n_1^{P}(\boldsymbol{\sigma})\notag \right.\\
    &\left.+\dfrac{P!}{2}\dfrac{\beta'}{N^{P-1} }\SOMMA{\mu>1}{K}\SOMMA{(i_1,\cdots , i_{P})}{N,\cdots,N}\lambda^{\mu}_{i_1,\hdots,i_P}\sigma_{i_1}\cdots\sigma_{i_{P}} \right]
    \label{eq:partition_supervised}
\end{align}
where $\lambda^{\mu}_{i_1,\hdots,i_P}$ are the aforementioned Gaussian i.i.d. variables. 
% \resub{We anticipate that $\V= \sqrt{P(2P-3)!!/2}$ (see Eq. (\ref{eq:VV})).}
% \begin{equation} \label{eq:V_anticipo}
%    \V= \sqrt{\dfrac{P(2P-3)!!}{2}} \,.
% \end{equation} 
%\end{remark}
%

\begin{remark}
 In the case of a quality $r_{\mu}$ that is archetype-dependent and of imbalanced datasets, namely considering a different amount of items for each class $$\bm \eta= \{ \{ \eta_i^{1,a_1}\}^{a_1=1, \hdots, M_1}, \hdots ,\{\eta_i^{K,a_K}\}^{a_K=1, \hdots, M_K}\}_{i=1, \hdots, N}$$ 
$$\mathbb{P}(\eta_{ i}^{\mu,a_\mu}|\xi_i^\mu) = \frac{1+r_\mu}{2}\delta_{\eta_i^{\mu,a_\mu},+\xi_i^{\mu}} +  \frac{1-r_\mu}{2}\delta_{\eta_i^{\mu,a_\mu},-\xi_i^{\mu}},$$ 
one can introduce different ``entropies'' for each class, defined as $\rho_\mu:=\frac{1-r_\mu^2}{r_\mu^2 M_\mu}$, and, by generalizing the Hamiltonian introduced in Def. \ref{def:H_sup}, it is possible to get 
\begin{align}
    \mathcal H^{(P)}_{N,K, \bm M,\bm r}(\boldsymbol{\sigma} \vert \bm \eta)&= -N\SOMMA{\mu=1}{K}\left(\dfrac{1}{N^P\mathcal{R}_\mu^P  M_\mu^P}\SOMMA{(i_1,\cdots,i_P)}{N,\cdots,N}\SOMMA{a_1,\cdots,a_P}{M_\mu,\cdots,M_\mu}\eta^{\mu, a_1}_{i_1}\cdots\eta^{\mu, a_P}_{i_P}\sigma_{i_1}\cdots\sigma_{i_P}\right),
\end{align}
where $\mathcal{R}_\mu= r_\mu^2 + \frac{1-r_\mu^2}{ M_\mu}$.
As long as the sizes $M_{\mu}$'s are all finite, mirroring Eq. \eqref{eq:partitionfunction2}, we apply the CLT on $(M_\mu)^{-1}\sum\limits_{a=1}^M \eta_i^{\mu,a}$ and we replace it as $\phi^\mu_i\sqrt{\mathcal{R}_\mu}$, where $\phi^\mu_i \sim \mathcal{N}(0,1)$. Therefore the partition function can be written as 
\begin{align}
    \mathcal{Z}^{(P)}_{N,K, \bm \rho ,\beta} (\bm \eta)
    & = \lim_{J \rightarrow 0} \sum_{\bm \sigma}  \exp \left[ J \SOMMA{i=1}{N} \xi_i^1 \si +\beta'\dfrac{N}{2}(1+\rho_1)^{P/2}n_1^{P}(\boldsymbol{\sigma})\notag \right. \\
    &\left.+\dfrac{P!}{2}\dfrac{\beta'}{N^{P-1} }\SOMMA{\mu>1}{K}\SOMMA{(i_1,\cdots i_{P})}{N,\cdots,N}\left(\phi_{i_1}^\mu \hdots \phi_{i_{P}}^\mu \right)\sigma_{i_1}\cdots\sigma_{i_{P}} \right].
    \label{eq:partition_CLT_imb}
\end{align}
This expression is equivalent to the one in Eq. \eqref{eq:partitionfunction2}, a shift $\rho$ into $\rho_1$ apart. 
\end{remark}

Now we want to find the self-consistency relations for the order parameters introduced in Eqs. \eqref{eq:orderparameters} by Guerra's interpolation. The underlying idea behind the original technique is to introduce a generalized statistical pressure that interpolates between the original model (which is the target of our investigation but we are not able to address directly) and a simple one (which is usually a one-body model that we can solve exactly). We thus find the solution of the latter and then propagate the obtained solution back to the original model by the fundamental theorem of calculus. In this last passage we assume RS: this property makes the integral appearing in the propagating term analytical. Let us start with some definitions.

\begin{definition}
Given the interpolating parameter $t \in [0,1]$, the real-valued constants $A,\ \psi$ (whose specific values are to be set {\em a posteriori}) and the i.i.d. standard Gaussian variables $Y_i \sim \mathcal{N}(0,1)$ for $i=1, \hdots , N$,  the  interpolating partition function is given as 

\begin{equation}
\begin{array}{lll}
     \mathcal{Z}^{(P)}_{N,K,\rho,\beta}( \bm \eta ;J, t)=&\SOMMA{\lbrace\boldsymbol{\sigma}\rbrace}{} \displaystyle\int\, \mathcal{D}\bm\muu(\bm z) \exp{\Bigg[J \SOMMA{i=1}{N} \xi_i^1 \si+t\beta'\dfrac{N}{2}(1+\rho)^{P/2}n_1^{P}(\boldsymbol{\sigma})}+(1-t)\dfrac{N}{2}\psi \, n_1(\boldsymbol{\sigma})
     \\\\
      &+\sqrt{t} \, \dfrac{P!}{2}\dfrac{\beta'}{N^{P-1}}\SOMMA{\mu>1}{K}\SOMMA{(i_1, \hdots , i_{P})}{N,\dots,N}\lambda_{i_1, \hdots , i_{P}}^{\mu}\sigma_{i_1}\cdots\sigma_{i_{P}}+\sqrt{1-t}A\SOMMA{i=1}{N}Y_i\sigma_i \Bigg].
\end{array}
\label{def:partfunct_GuerraRS}
\end{equation}
\normalsize
% where {$ \mathcal D\bm \muu(\bm z)=\prod\limits_{\mu} d\muu(z_{\mu}) $} and  $d\muu(z_{\mu})=\dfrac{dz_\mu}{\sqrt{2\pi}}\exp\left( -\dfrac{z_\mu^2}{2}\right)$ represents the Gaussian measure.
%
%
    When $t=1$ we recover the original model \eqref{partitionfunction} and for $t=0$ we recover a model simple enough to be solved directly. 
\end{definition}

\begin{definition} The interpolating  statistical pressure related to the partition function (\ref{def:partfunct_GuerraRS}) is introduced as
\begin{eqnarray}
\mathcal{A}^{(P)}_{N,K,\rho,\beta}(t) &\coloneqq& \frac{1}{N} \mathbb{E} \left[  \ln\mathcal{Z}^{(P)}_{N, K, \rho, \beta}(\bm \eta, t)  \right],
\label{hop_GuerraAction}
\end{eqnarray}
where the expectation $\mathbb E$ is now also meant over $\lambda^\mu_{i_1, \hdots , i_{P}}$ and $\lbrace Y_i\rbrace_{i=1,\hdots,N}$.
\newline
In the thermodynamic limit,
\begin{equation}
\mathcal{A}_{\alpha_{b}, \rho, \beta}^{(P)}(t) \coloneqq \lim_{N \to \infty} \mathcal{A}^{(P)}_{N, K,\rho, \beta }(t).
\label{hop_GuerraAction_TDL}
\end{equation}
By setting $t=1$ the interpolating pressure recovers the original one (\ref{PressureDef_unsup}), that is $\mathcal A_{\alpha_{b},\rho, \beta }^{(P)} = \mathcal{A}^{(P)}_{\alpha_{b},\rho, \beta}(t=1)$. 
\newline
	The interpolation also implies a generalized measure, whose related Boltzmann factor reads as
	\begin{equation}
	\begin{array}{lll}
	\mathcal B_{N,K, \rho, \beta}^{(P)} 
 (\boldsymbol{\sigma} |\bm \eta; J, t )&\coloneqq&  
	\exp{\Bigg[J \sum_i \xi_i^1 \si+t\beta'\dfrac{N}{2}(1+\rho)^{P/2}n_1^{P}(\boldsymbol{\sigma})}+(1-t)\dfrac{N}{2}\psi n_1(\boldsymbol{\sigma})
 \\\\
      &&+\sqrt{t} \,  \dfrac{P!}{2}\dfrac{\beta'}{N^{P-1}}\SOMMA{\mu>1}{K}\SOMMA{(i_1, \hdots , i_{P})}{N,\dots,N}\lambda_{i_1, \hdots , i_{P}}^{\mu}\sigma_{i_1}\cdots\sigma_{i_{P}} +\sqrt{1-t}\,A\SOMMA{i=1}{N}Y_i\sigma_i \Bigg],
	\end{array}
	\end{equation}
by which the partition function can be written as $$\mathcal Z^{(P)}_{N, K, \rho, \beta}( \bm \eta; J, t ) =  \sum_{\boldsymbol \sigma} \int  \mathcal{D}\bm\muu(\bm z)   \mathcal B_{N,K, \rho, \beta}^{(P)} 
 (\boldsymbol{\sigma} |\bm \eta; J, t ).$$ \\
A generalized average follows from this generalized measure as
\beq
	\omega_{t}  [(\cdot)]  \coloneqq  \frac{1}{\mathcal Z^{(P)}_{N, K, \rho, \beta }( \bm \eta;J, t )} \sum_{\boldsymbol \sigma}\displaystyle\int \mathcal{D}\tilde{\bm\mu}(\bm z)~ (\cdot) ~   \mathcal B_{N,K, \rho, \beta }^{(P)} 
 (\boldsymbol{\sigma} |\bm \eta;J,  t )
	\eeq
	and
\beq
\langle \cdot   \rangle_{t}  \coloneqq \mathbb E [ \omega_{t} ( \cdot) ].
\eeq
%where $ \mathbb E$ denotes the average over $\lambda_{i_1, \hdots , i_{P/2}}$ and $\lbrace Y_i\rbrace_{i=1,\cdots,N}$, $\lbrace J_{\mu,a}\rbrace_{\mu=1,\hdots,K, a=1, \hdots, M}$.
Of course, when $t=1$ the standard Boltzmann-Gibbs measure and related averages are recovered.
\end{definition}

As anticipated, the following analytical results are obtained under the RS assumption, namely assuming 
the self-averaging property for any order parameter $X$, i.e. the ﬂuctuations around their expectation values vanish in the thermodynamic limit. In distributional sense, 
this corresponds to assuming
\begin{equation} \label{eq:RS}
\lim_{N \to \infty}  \mathbb P_{RS}(X) = \delta(X - \bar X)
\end{equation}
where $\bar X = \langle  X \rangle_t$ is the expectation value w.r.t. the interpolating measure. Hereafter, in order to lighten the notation, we  drop the subscript $t$ in the average symbol. 

\par\medskip
En route toward the self-consistency equations for the order parameters, we need to take the thermodynamic limit of the statistical pressure and, to this task, we must split the discussion into two cases: the high-load regime $b=P-1$, discussed in the following subsection and the low-load regime $b<P-1$, tackled in the subsequent one.

\subsection{High-load regime}\label{ultrastorage}
In this subsection we study the network at work with the maximum storage of archetypes \{$\bm\xi^{\mu}\}^{\mu=1,\hdots,K}$ it can afford. %In this \resub{high-load} regime the signal-to-noise to be preserved is the standard one, i.e.  the signal can not be lower than  the noise.
\begin{proposition}
\label{prop:highnoise}
\textit{(High-load regime)} In the thermodynamic limit ($N\to\infty$) and under the RS assumption, the quenched statistical pressure for the dense, supervised Hebbian network for even $P \geq 4$ in high-storage regime $b=P-1$ reads as 
%\footnotesize
\begin{equation}
\begin{array}{lll}
         \mathcal{A}_{\gamma,\rho  , \beta}^{(P)} (J)
     &=&\mathbb{E}\left\lbrace{\ln\Bigg[2\cosh\left(J\xi^1 +\beta'\dfrac{P}{2}(1+\rho)^{P/2-1}\n^{P-1}\tilde{\eta}+Y\beta'\sqrt{   \gamma
         \dfrac{P}{2}\q^{P-1}}\right)}\Bigg]\right\rbrace+ 
      \\\\
      &&-\dfrac{\beta'}{2}(1+\rho)^{P/2}(P-1)\n^{P}+    \gamma
         \dfrac{\b\,^2}{4}\left(1-P\q^{P-1}+(P-1)\q^P\right),
\end{array}
\end{equation}
\normalsize
with $\mathbb{E}=\mathbb{E}_{\xi}\mathbb{E}_{(\eta|\xi)}\mathbb{E}_Y$, $\etaM=\dfrac{1}{rM}\SOMMA{a=1}{M}\eta^{1,a}$, and $\bar n$ and $\q$ fulfill the following self-consistency equations
\begin{equation}
    \begin{array}{lll}
        \n=\dfrac{1}{(1+\rho)}\mathbb{E}\left\{\tanh{\left[\beta'\dfrac{P}{2}(1+\rho)^{P/2-1}\n^{P-1}\tilde{\eta}+Y\beta'\sqrt{   \gamma\dfrac{P}{2}\q^{P-1}}\right]}\tilde{\eta}\right\},
        
        \\\\
        \q=\mathbb{E}\left\{\tanh{}^2{\left[\beta'\dfrac{P}{2}(1+\rho)^{P/2-1}\n^{P-1}\tilde{\eta}+Y\beta'\sqrt{    \gamma
         \dfrac{P}{2}\q^{P-1}}\right]}\right\}.
        %\p=\b\V\q^{P/2}
        \label{eq:n_selfRS_sup} 
    \end{array}
\end{equation}
\normalsize

Furthermore, considering the auxiliary field $J$ linked to $\bar{m}$ as $\bar{m}= \left. \nabla_J \mathcal{A}^{(P)}_{\alpha_{P-1},\rho,\beta   }( J)\right\vert_{J=0}$, we have
\begin{equation}
    \begin{array}{lll}
    \m=\mathbb{E}\left\{\tanh{\left[\beta'\dfrac{P}{2}(1+\rho)^{P/2-1}\n^{P-1}\tilde{\eta}+Y\beta'\sqrt{   \gamma\dfrac{P}{2}\q^{P-1}}\right]}\xi\right\}.
    \end{array}
    \label{eq:m_selfRS_sup}
\end{equation}
\normalsize
\end{proposition}

For the proof we refer to Appendix \ref{app:proof}.

\subsection{Low-load regime} \label{ultratolerance}
In this subsection we inspect a peculiarity of dense networks \cite{BarraPRLdetective}: when set in the low-load regime ($b<P-1$), these networks can work even in the presence of severe noise affecting the tensor entries and mimicking flaws  occurring during the learning stage \cite{AgliariDeMarzo}.
\begin{proposition}
\textit{(Low-load regime)} In the thermodynamic limit ($N\to\infty$), under the RS assumption, the quenched statistical pressure for the dense, supervised Hebbian network for even $P \geq 4$ and a low-storage regime $b<P-1$ reads as
\begin{equation}
\begin{array}{lll}
   \mathcal{A}_{0,\rho, \beta }^{(P)}( J) &=&\mathbb{E}\left\{\ln{}\left\lbrace{2\cosh\left[J+\beta'\dfrac{P}{2}(1+\rho)^{P/2-1}\n^{P-1}\tilde{\eta}\right]}\right\rbrace  +\dfrac{\beta'}{2}(1+\rho)^{P/2}(1-P)\n^{P}\right\}
\end{array}
\end{equation}
\normalsize
with $\mathbb{E}=\mathbb{E}_{\xi}\mathbb{E}_{(\eta|\xi)}$ and $\bar n$ fulfill the following self-consistency equation
\begin{equation}
\label{eq:nlow}
    \begin{array}{lll}
        \n=\dfrac{1}{r(1+\rho)}\mathbb{E}\left\{\tanh{\left[\beta'\dfrac{P}{2}(1+\rho)^{P/2-1}\n^{P-1}\tilde{\eta}\right]}\tilde{\eta} \right\}.
    \end{array}
\end{equation}
Furthermore, considering the auxiliary field $J$ linked to $\bar{m}$  as $\bar{m}= \nabla_J \mathcal{A}^{(P)}_{\rho,\beta , r, \alpha_{b} }( J)\vert_{J=0}$, we have
\begin{equation}
    \begin{array}{lll}
        \m=\mathbb{E}\left\{\tanh{\left[\beta'\dfrac{P}{2}(1+\rho)^{P/2-1}\n^{P-1}\tilde{\eta}\right]}\xi\right\}.
    \end{array}
\end{equation}
\end{proposition}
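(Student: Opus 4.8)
The plan is to establish the closed form of the pressure through Guerra's interpolation, exploiting the fact that in the low-load window $b<P-1$ the slow-noise ($\mu>1$) channel becomes thermodynamically silent, so the whole computation collapses onto a purely signal-driven one-body problem. Concretely, I would start from the interpolating pressure of Eq.~\eqref{def:partfunct_GuerraRS} and invoke the fundamental theorem of calculus, writing $\mathcal{A}^{(P)}_{N,K,\rho,\beta}(t=1)=\mathcal{A}^{(P)}_{N,K,\rho,\beta}(t=0)+\int_0^1 \mathrm{d}t\,\partial_t\mathcal{A}^{(P)}_{N,K,\rho,\beta}(t)$. Differentiating in $t$ pulls down, inside the interpolating average $\omega_t$: the signal monomial $\tfrac{\beta'}{2}(1+\rho)^{P/2}n_1^P$; the $\lambda$-noise coupling carrying the prefactor $\sqrt{\beta'\V/N^{P-1}}$; and the compensating streaming terms weighted by the constants $A,B,C,\psi$. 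For each Gaussian carrier $\lambda$, $Y$, $\tilde J$ I would apply Wick's lemma (Gaussian integration by parts) to turn the stochastic contribution into order-parameter expectations.

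The pivotal step is the noise power-counting. After integrating by parts, the $\mu>1$ sector feeds $\partial_t\mathcal{A}$ a contribution of order $K\,N^{1-P}$ times a bounded $\bar q$-dependent factor, i.e. proportional to $\lim_{N\to\infty}K/N^{P-1}=\alpha_{P-1}$. Since $b<P-1$ forces $\alpha_{P-1}=0$, every $\V$-, $B$- and $C$-dependent piece vanishes in the thermodynamic limit; consistently the natural choice is $A=B=C=0$, so that neither the two-replica overlap $\bar q$ nor the auxiliary Gaussian field $Y$ survive, and the quenched average reduces to $\E=\E_\chi$. This is precisely why the whole large-deviation/stability apparatus needed to fix $\V$ in the ultra-storage case is immaterial here.

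What remains is purely the signal. Under the RS ansatz the Mattis-type parameter $n_1$ concentrates on $\bar n$, so $\omega_t(n_1^P)\to\bar n^P$ and $\omega_t(n_1)\to\bar n$. Fixing $\psi=\beta' P(1+\rho)^{P/2}\bar n^{P-1}$ makes $\partial_t\mathcal{A}$ equal to the $t$-independent constant $\tfrac{\beta'}{2}(1+\rho)^{P/2}(1-P)\bar n^{P}$, so the $t$-integral is trivial. The residual $t=0$ problem factorizes over sites: the $z_\mu$ integrals give irrelevant Gaussian constants, and the spin trace yields $\E\ln 2\cosh\!\big(J+\tfrac{\psi}{2}\tfrac{r}{\mathcal{R}}\chiM\big)$. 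Using $\tfrac{r}{\mathcal{R}}=\tfrac{1}{r(1+\rho)}$ (which follows from $\mathcal{R}/r^2=1+\rho$) together with the Mattis gauge $\sigma_i\to\xi_i^1\sigma_i$ reproduces the effective field $\beta'\tfrac{P}{2}\tfrac{(1+\rho)^{P/2-1}}{r}\bar n^{P-1}\chiM$, so that summing the two pieces gives exactly the claimed statistical pressure.

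Finally I would derive the self-consistency relation by extremizing the pressure in $\bar n$ (equivalently imposing $\bar n=\langle n_1\rangle$); cancelling the common factor $\beta'\tfrac{P}{2}(P-1)\bar n^{P-2}$ returns Eq.~\eqref{eq:nlow}, and the magnetization follows from $\bar m=\nabla_J\mathcal{A}^{(P)}_{\rho,\beta,r,\alpha_b}(J)\big|_{J=0}$. The hard part is not the routine Guerra bookkeeping but the clean justification that the non-Gaussian $\mu>1$ channel is genuinely subleading for $b<P-1$: one must verify that the large-deviation Gaussianization of the noise and its variance $\V$ enter the thermodynamic limit only through the vanishing prefactor $\alpha_{P-1}$, and separately that retaining the signal's empirical mean $\chiM$ (rather than its CLT surrogate) is legitimate, so that the dataset-quality dependence is faithfully preserved in the final equations.
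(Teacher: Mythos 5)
Your proposal is correct and follows essentially the same route as the paper: the paper omits this proof, stating it mirrors the high-load argument of Appendix~\ref{app:proof}, and your sketch is exactly that Guerra interpolation with the correct observation that every $\V$-, $\bar q$- and $\bar p$-dependent term is ultimately proportional to $\lim_{N\to\infty}K/N^{P-1}=\alpha_{P-1}=0$ when $b<P-1$, leaving only the signal one-body problem. Your choice $\psi=\beta' P(1+\rho)^{P/2}\bar n^{P-1}$, the identity $r/\mathcal{R}=1/[r(1+\rho)]$, and the extremization in $\bar n$ all reproduce the stated pressure and self-consistencies; the only cosmetic difference is that the paper tracks the vanishing of the noise sector through the rescaling $N^{b-P/2}\bar p\to\bar p$ rather than by setting $A=B=C=0$ at the outset, but the conclusion is identical.
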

The proof is similar to the one presented in Appendix \ref{app:proof} for the high-load regime, therefore we omit it. 

As mentioned in the preamble of this subsection, now we briefly dwell on the ability of this system to work even when an extensive amount of noise %(scaling with the size of the network) 
is added in the synaptic tensor.
To do so, we need to present some details regarding the signal-to-noise analysis, which is a technique used to ascertain the stability of a given configuration.
In the absence of external noise, the neuronal dynamics is described by the following updating rule
\begin{equation}    \sigma_i^{(n+1)}=\sigma_i^{(n)}\mathrm{sign}\left[\sigma_i^{(n)}h_i\left(\bm\sigma^{(n)}|\bm\eta\right)\right],
    \label{eq:dynam_rul}
\end{equation}
where
\begin{align}
    h_i(\boldsymbol{\sigma}|\bm \eta)=& \dfrac{1}{N^{P-1}M^P \mathcal{R}^{P/2}}\SOMMA{\mu=1}{K}\SOMMA{(i_2,\cdots,i_P)\neq i}{N,\cdots,N}\SOMMA{a_1,\cdots,a_P}{M,\cdots,M} \eta^{\mu, a_1}_{i}\eta^{\mu, a_2}_{i_2}\cdots\eta^{\mu,a_P}_{i_P}\sigma_{i_2}...\sigma_{i_P} \, ,\label{eq:campo_h}
\end{align}
is the internal local field acting on the $i$-th neuron\footnote{It is easy to see, using Eq.\eqref{eq:campo_h} in Eq. \eqref{def:H_Psup}, that
 \begin{equation}
    -\mathcal{H}^{(P)}_{N,K,M, r}(\boldsymbol{\sigma} \vert \bm \eta)=\SOMMA{i=1}{N}h_i(\bm \sigma|\bm \eta)\sigma_i \, .
\end{equation}}.
Then, the configuration $\boldsymbol \sigma = \boldsymbol \xi^{\mu}$ is stable under the dynamics \eqref{eq:dynam_rul} as long as $h_i(\bm \xi^\mu|\bm \eta)\xi_i^\mu>0$ for $i=1,\hdots, N$.
Now, the l.h.s. in \eqref{eq:campo_h} can be split into a \textit{signal} term $S$ and a \textit{noise} term $R$. More precisely, without loss of generality, if we focus on the the stability of the configuration $\bm\xi^1$, $S$ includes the field contribution which tends to align the neural configuration with the chosen pattern $\bm\xi^1$, while $R$ includes the remaining contributions, which tend to destroy the correlation of the neural configuration and the first pattern. For the supervised setting we have
\begin{equation}
\label{eq:S_N}
    \begin{array}{lll}
         S&=&  \dfrac{1}{N^{P-1}M^P \mathcal{R}^{P/2}}\SOMMA{(i_2,\cdots,i_P)\neq i}{N,\cdots,N}\SOMMA{a_1,\cdots,a_P}{M,\cdots,M} \eta^{1, a_1}_{i}\eta^{1, a_2}_{i_2}\cdots\eta^{1,a_P}_{i_P}\xi_{i}^1\xi_{i_2}^1\cdots\xi_{i_P}^1\,,
         \\\\
         R&=& \dfrac{1}{N^{P-1}M^P \mathcal{R}^{P/2}}\SOMMA{\mu>1}{K}\SOMMA{(i_2,\cdots,i_P)\neq i}{N,\cdots,N}\SOMMA{a_1,\cdots,a_P}{M,\cdots,M} \eta^{\mu, a_1}_{i}\eta^{\mu, a_2}_{i_2}\cdots\eta^{\mu,a_P}_{i_P}\xi_{i}^1\xi_{i_2}^1\cdots\xi_{i_P}^1\,.
    \end{array}
\end{equation}
We can estimate the largest amount of archetypes that the system can retrieve, asking for the largest value of $K$ which still ensures that $S$ is {more than or at the order of magnitude of} $R$ (from now on expressed as $S\gtrsim R$).
%, where with the symbol " $\gtrsim$ ", we mean \textit{more than or at the order of magnitude of} . 
It can be proved that, as a function of the network size $N$ and the amount of patterns $K$, the behaviours of the signal and the noise terms presented in \eqref{eq:S_N} are $S\circaa N^0$ and  $R\circaa K/N^{P-1}$ respectively. Therefore, we expect that when $K \circaa N^{P-1}$ any additional source of noise may impair retrieval. On the other hand, although our model assumes that the coupling tensor \eqref{SynapticTensor} is devoid of flaws, in general, the communication among neurons can be disturbed and this may imply a supplementary noise affecting the synaptic weight \cite{BarraPRLdetective,AgliariDeMarzo}. 
We model this noise by introducing an additional, random contribution as
\begin{equation}
    \sum_{\mu=1}^{K}\sum_{a_1, \hdots, a_P}^{M,\hdots,M} \eta_{i_1}^{\mu,a_1} \cdots  \eta_{i_P}^{\mu,a_P} \xrightarrow[\;\;\;\;\;]{} \sum_{\mu=1}^{K}\sum_{a_1, \hdots, a_P}^{M,\hdots,M} \eta_{i_1}^{\mu,a_1} \eta_{i_1}^{\mu,a_1} \cdots  \eta_{i_P}^{\mu,a_P} + w\: \tilde{J}_{ i_1, \hdots, i_P}^{\mu, a_1, \hdots, a_P}
    \label{eq:J_con_noise}
\end{equation}
where $\tilde{J}^{\mu, a_1, \hdots, a_P}_{ i_1, \hdots, i_P}\sim\mathcal{N}(0,1)$, $K\circaa\gamma N^b$ and $w\circaa\tau N^\delta$, with $\gamma,\tau \in \mathbb{R}$ and $b,\delta \in \mathbb{R}^{+}$. This ``synaptic noise'' has an effect on the noise contribution, in such a way that  $R\circaa K w^2/N^{P-1}$, thus, to guarantee retrieval, namely $S\gtrsim R$, we need  
\begin{equation}
    \begin{cases}
        b\leq P-1 
        \\
        \delta \leq \dfrac{P-1-b}{2}
    \end{cases}\,.
    \label{eq:delta}
\end{equation}
Therefore, provided that $b$ is strictly smaller than $P-1$, the system can tolerate a synaptic noise that grows with the system size and its rate of growth can be made larger according to the load. Furthermore, if we now set $\delta =(P-1-b)/2$, by calculations analogous to those presented in the first part of this section, using the noisy synaptic tensor defined in Eq.\eqref{eq:J_con_noise},  we recover the self-consistency equations for the order parameters that display exactly the same expressions of the ones for the high-storage regime, namely Eqs. \eqref{eq:n_selfRS_sup}, as long as we replace $\b$ with $\b \tau$. 
For an exhaustive description of the computation we refer to \cite{unsup}, where we treated in details the unsupervised regime.

\subsection{High-load with low entropy datasets}
\label{sec:datalimit_sup}

As explained in Remark~\ref{rem:entrpoy_rho} the parameter $\rho = (1-r^2)/(Mr^2)$ quantifies the amount of information needed to describe the original message
$\boldsymbol\xi^\mu$ given the set of related examples $\{\boldsymbol\eta^{\mu,a}\}_{a=1,...,M}$. In this section we focus on the case $\rho \ll 1$ that corresponds to a low-entropy dataset or, otherwise stated, to a high-informative dataset.

The advantage of the present analysis is that, under the condition  $\rho \ll 1$, we obtain a relation between $\bar n$ (a natural order parameter of the model) and $\bar m$ (a practical order parameter of the model\footnote{It is worth recalling that the model is supplied only with examples -- upon which the $n^{\mu,a}$'s are defined -- while it is not aware of archetypes  -- upon which the $m^{\mu}$'s are defined. The former constitute natural order parameters and, in fact, the Hamiltonian $\mathcal H^{(P)}_{N,K,M,r}$ in \eqref{def:H_Psup} can be written in terms of the example overlaps; the latter are practical order parameters through which we can assess the capabilities of the network.}), thus, the self-consistency equation for $\bar n$ can be recast into a self-consistency equation for $\bar m$ and 
%there is no need to introduce the functional generator $J$, moreover, 
the numerical solution of the self-consistency equations can be achieved
% its numerical solution versus the control parameters allows us to get the phase diagram for the system 
more straightforwardly.  
%Second, the self consistency equation for $\bar m$ can be directly compared with an \resub{explicit} expression obtained from the stability analysis approach (developed for $\rho \ll 1$ {and} $T \to 0$) in such a way that we can finally estimate $\mathcal V$, see Sec.~\ref{subsec:S2N_sup}.   
\newline
As explained in Appendix \ref{app:proof}, we start from the self-consistency equations found in the high-load regime \eqref{eq:n_selfRS_sup}-\eqref{eq:m_selfRS_sup} and we exploit the CLT in order to replace $\tilde{\eta}$ with $1+\lambda\sqrt{\rho}$, where $\lambda\sim\mathcal{N}(0,1)$. In this way we reach the simpler expressions
\begin{eqnarray}
    (1+\rho)\n&=&\m +\beta ' \dfrac{P}{2}\rho\left(1+\rho\right)^{P/2-1}(1-\q)\n^{^{P-1}},\label{eq:n_Mgrande}
    \\
    \q&=&\mathbb{E}_{_Z}\left[\tanh{}^{\2}{ g(\beta, Z, \bar{n})}\right],\label{eq:n_of_M_unsup}
    \\
    \m&=& \mathbb{E}_{_Z}\left[\tanh{ g(\beta, Z, \bar{n})}\right],\label{eq:m_Mgrande}
\end{eqnarray}
where
\begin{align}
    &g(\beta, Z, \bar{n})=\beta '\dfrac{P}{2}\n^{^{P-1}}(1+\rho)^{P/2-1}+\beta 'Z\sqrt{\rho\dfrac{P^2}{4}\n^{^{2P-2}}(1+\rho)^{P-2}+\gamma \dfrac{P}{2}  \,\q^{^{P-1}}\;}\;
    \label{eq:g_of_super_n}
\end{align}
and $Z \sim \mathcal{N}(0,1)$ is a standard Gaussian variable. 
\\
Now, as long as $\rho \ll 1$, we can truncate the right-hand-side of \eqref{eq:n_Mgrande} into $\n (1+\rho)= \m$  and, by replacing $\n$ with $\m/(1+\rho)$  in the argument of the hyperbolic tangent, we get
\begin{equation}
    g(\beta ,Z,\m)=\Tilde{\beta}\dfrac{P}{2}\m^{P-1}+Z \Tilde{\beta} \sqrt{\rho\left(\dfrac{P}{2}\m^{P-1}\right)^2 +\gamma\dfrac{P}{2}(1+\rho)^{P}\q^{P-1}}
    \label{eq:g_supervised_large_M}
\end{equation}
where we posed
\begin{equation}
\tilde{\beta} = {\beta'}{(1+\rho)^{-\frac{P}{2}}} =  
%\b\left(\dfrac{r^2}{\R}\right)^{\frac{P}{2}} = 
\frac{2 \beta}{P!} {(1+\rho)^{-\frac{P}{2}}}.
\label{eq:tildebeta}
\end{equation}

As a consequence, we are left with only two out of the three self-consistency equations, namely only the ones for $\q$ and $\m$, while keeping the error (resulting in this truncation) numerically small for $\tilde{\beta}^{-1}$ small (i.e., in the region of interest, where the network shows emergent computational capabilities, namely in the retrieval region depicted in Fig.~\ref{fig:my_label2}); this yields a significant speed-up in the numerical evaluation of the order parameters. The validity of this approximation is corroborated in Fig.~\ref{fig:check_n} where we plot $\n$ versus $r$ for different values of $\rho$, $\gamma$ and $\tilde{\beta}^{-1}$ and compare the outcomes obtained with and without the truncation.

With this truncation we further handle Eqs.~\eqref{eq:n_Mgrande}-\eqref{eq:m_Mgrande} and we compute their zero-temperature limit. 
%Beyond an interest in the ground-state \emph{per se}, here this analysis is pivotal because, combined with the results obtained by the stability analysis  discussed in the next Sec.~\ref{subsec:S2N_sup}, it allows us to estimate the factor $\V$ that appears in the self-consistency equations and whose knowledge is mandatory to solve those equations numerically.
As detailed in the Appendix \ref{app:proof}, by taking the limit $\beta \to \infty$ in Eqs.~\eqref{eq:n_Mgrande}-\eqref{eq:m_Mgrande} we get
\begin{equation}
    \begin{array}{lll}
         \m = \mathrm{erf}\left[\dfrac{P}{2}\dfrac{\m^{P-1}}{G}\right] \ ,
          \\\\
         G:=\sqrt{2\left[\rho\left(\dfrac{P}{2}\m^{P-1}\right)^2 +\gamma \dfrac{P}{2}(1+\rho)^{P}\right]} \ ,
         \\\\
         \q=1.
    \end{array}
    \label{eq:noiseless}
\end{equation}
where we use the definition of the P-independent load presented in Eq. \eqref{eq:load_gamma}.

\begin{figure}[t]
    \centering
    \includegraphics[width =15cm]{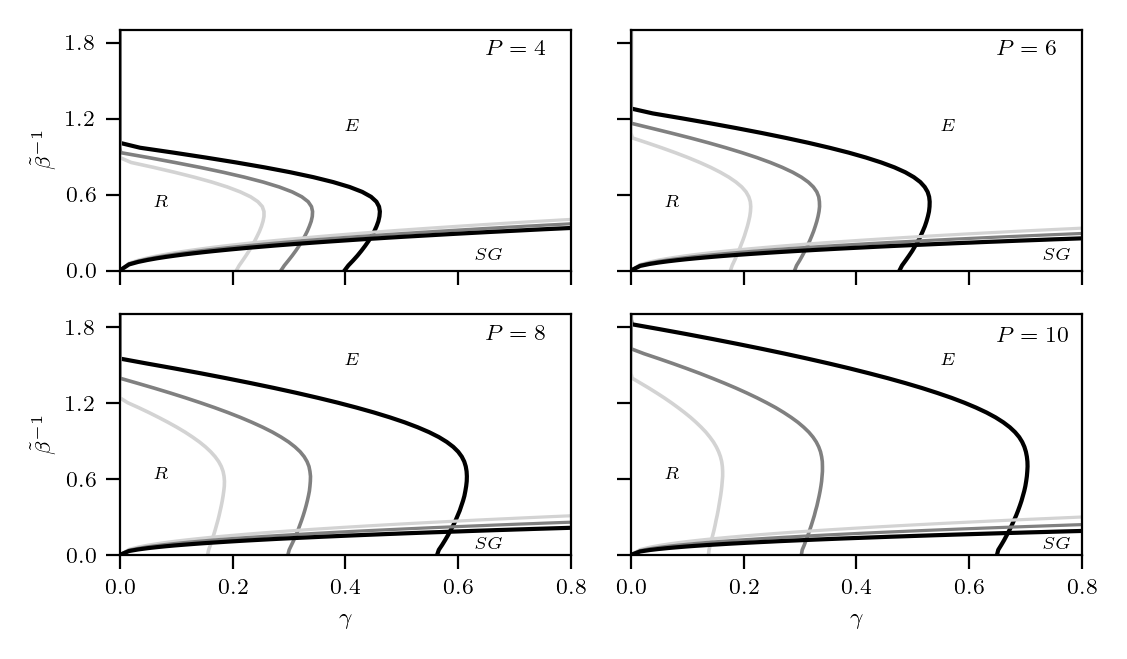}
    \caption{Phase diagrams of the dense Hebbian network, trained under supervised learning and comparison with standard dense Hopfield networks (black line).  Different panels correspond to different values of $P$ and in each panel we plot the $P$-independent load $\gamma$ vs the rescaled temperature $\tilde{\beta}^{-1}$ (as introduced respectively in Eqs. \eqref{eq:load_gamma} and \eqref{eq:tildebeta}), while we fixed $r=0.2$ and two values of $\rho$ (which are $\rho=0.1$ light gray and $\rho=0.05$ gray). 
    % \resub{In x-axis there is the P-independent load network $\gamma$ as defined in Eq. \eqref{eq:load_gamma} and y-axis $\tilde{\beta}$ as in Eq. \eqref{eq:tildebeta}}. 
     We notice that the retrieval zone gets wider as $P$ increases and the re-entrance in the transition line delimiting this region is expected to be a pathology of the RS solution, that should be removed by a RSB-scheme \cite{Albanese2021}. Moreover, in each plot we have a region of the phase diagram which is an intersecting zone between retrieval and spin-glass regions, that shrinks as $P$ increases. In this region the self-consistency equations \eqref{eq:n_of_M_unsup}--\eqref{eq:n_Mgrande} have both a \textit{retrieval solution} ($\m\sim 1$ and $\q \sim 1$) and a \textit{spin-glass solution} ($\m\sim 0$ and $\q\sim 1$), but only the former is stable}. 
    \label{fig:my_label2}
\end{figure}

Finally, in Fig. \ref{fig:my_label2} we plot the phase diagrams obtained by solving the self-consistency equations \eqref{eq:n_Mgrande}-\eqref{eq:m_Mgrande} for different values of $\rho$ and for different values of $P$: the three regions represented by the letters (R, SG, E) are, respectively, the  {\em retrieval region} (where both the Mattis magnetization and the overlap are close to one),  the {\em spin glass region} (where solely the overlap is close to one, but there is no longer Mattis magnetization) and the {\em ergodic region} (where both the order parameters are null). It is also instructive to compare the transition lines that split the phase diagram of these dense networks trained with supervised learning with those obtained for the standard Hebbian dense networks: the retrieval region of the latter works as an upper bound for the retrieval region of the former. In fact, in the next section we prove that, in the large dataset-size limit, supervised Hebbian learning reaches the theoretical bound predicted by Hebbian storing (i.e. the width of the various regions of their phase diagrams do coincide).

% NELLA CAPTION DI FIG. 3 SI PARLA DI 'INSTABILITY REGION' VA CHIARITO COSA SI INTENDE \\ COSI'? NI': PERCHE' SI CHIAMA REGIONE DI INSTABILITA' E QUALE STATO PREVALE?

% \green{HO RIMODIFICATO}

\begin{figure}[t]
    \centering
    \includegraphics [width=15.5cm]{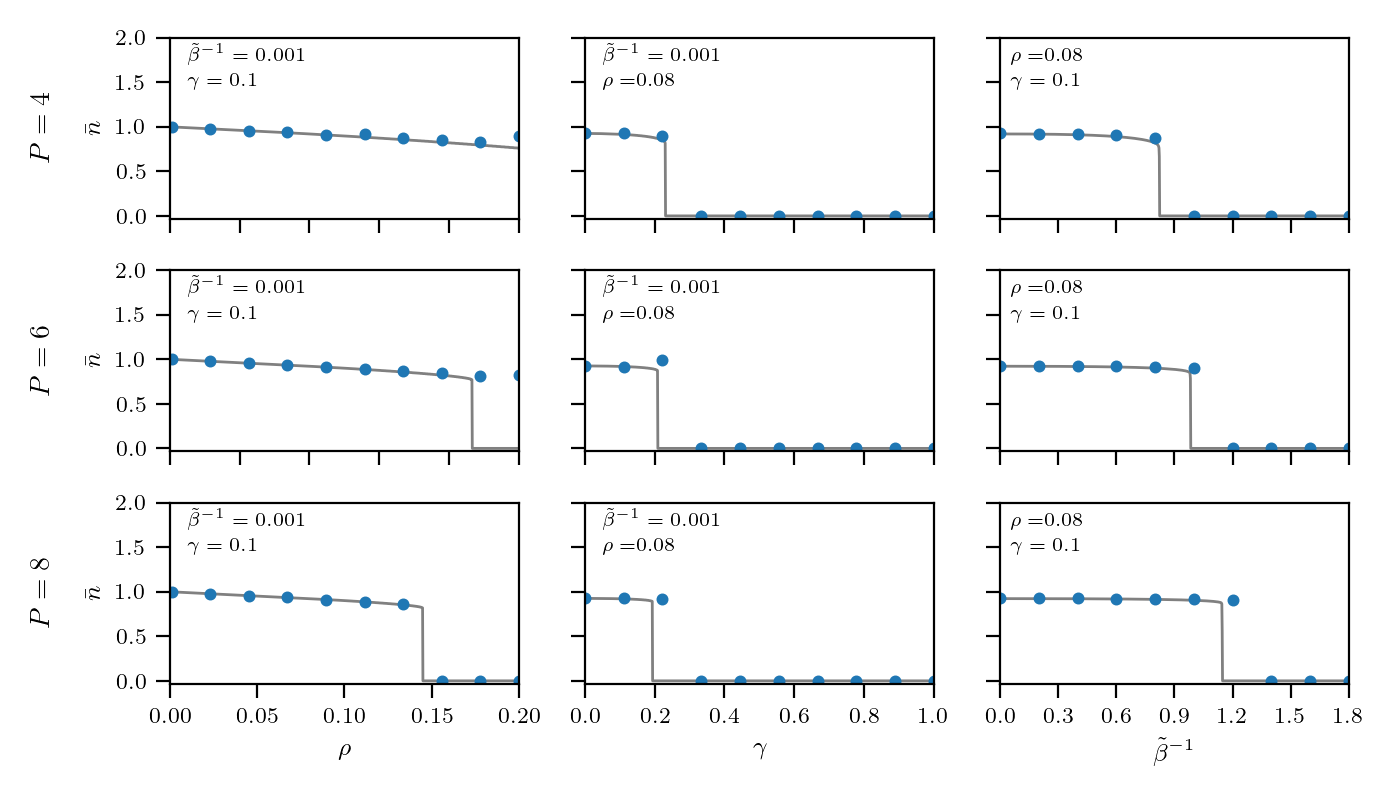}
        \caption{In order to check the goodness of the truncation $\n(1+\rho) = \m$ we compute $\n$ solving numerically \eqref{eq:n_Mgrande}-\eqref{eq:m_Mgrande}, with and without approximation, for different values of $P$, $\tilde\beta^{-1}$, $\gamma$ and $\rho$.  In particular, in this mosaic of panels, each row corresponds to a different value of $P$ (from top to bottom $P=4$, $P=6$, and $P=8$) and each columns corresponds to a different quantity which $\n$ is plotted versus (from left to right $\rho$, $\gamma$, and $\tilde\beta^{-1}$), while the other considered parameters are kept fixed as reported. In each panel we compare the results obtained using the exact expression of $\n$ \eqref{eq:n_Mgrande} (blue dots) and those obtained using the approximated one (namely $\n=\m(1+\rho)^{-1}$, solid grey line). Although the approximation does not seem entirely accurate, it is reasonable for our purposes, and in particular for $\rho \to 0$ and $\beta \rightarrow \infty$.}  
    \label{fig:check_n}
\end{figure}

\section{Numerical findings}
\label{sec:numerical}

\subsection{Stability analysis and Monte Carlo simulations}
\label{subsec:S2N_sup}
% RISCRIVERE TUTTA LA PARTE RESUB.
% A PARTE I REFUSI E LE IMPRECISIONI LINGUISTICHE, NON E' CHIARO PER CHI LEGGE COSA SI INTENDE PER "the noise distribution of the network is not scale-free" ABBIAMO MAI DEFINITO UNA NOISE DISTRIBUTION?\\
% \green{MOdificato leggermente, ma ancora di sistemare}
% \\
%%%%%%%%%%%%%%%%%%%%%%%%%%%%%%%%%%%%%%%%%%%%%%%%%
%%%%%%%%%%%%%%%%%%%%%%%%%%%%%%%%%%%%%%%%%%%%%%%%%
%%%%%%%%%%%%%%%%%%%%%%%%%%%%%%%%%%%%%%%%%%%%%%%%%
%%%%%%%%%%%%%%%%%%%%%%%%%%%%%%%%%%%%%%%%%%%%%%%%%
% \resub{As mentioned at the beginning of Sec. \ref{sec:RS_Guerra_sup}, to avoid the problem of non-gaussianity of the post-synaptic potentials distribution, we exploit probability theory to prove that this one is not heavy-tailed. Thus, using the central limit theorem, we approximate it as a Gaussian distribution with the suitable mean and variance. In order to find the suitable value of the variance $\V$, in this Section, at first, independently from the statistical mechanical route\footnote{This new approach needs no Gaussian approximation therefore it is $\V$ independent.}, we achieve in the noiseless limit ($\beta \to +\infty$)  an explicit expression  of the self-consistency equation for $\m$. Then, comparing it with the corresponding noiseless-limit self equation for $\m$ (i.e. Eq. \eqref{eq:noiseless}), we are able to estimate $\V$.} 
%%%%%%%%%%%%%%%%%%%%%%%%%%%%%%%%%%%%%%%%%%%%%%%%%
%%%%%%%%%%%%%%%%%%%%%%%%%%%%%%%%%%%%%%%%%%%%%%%%%
%%%%%%%%%%%%%%%%%%%%%%%%%%%%%%%%%%%%%%%%%%%%%%%%%
%%%%%%%%%%%%%%%%%%%%%%%%%%%%%%%%%%%%%%%%%%%%%%%%%
As mentioned at the beginning of Sec. \ref{sec:RS_Guerra_sup}, to avoid the problem of non-gaussianity of the post-synaptic potentials distribution, we exploit the CLT and the universality of the quenched noise \cite{Genovese, CarmonaWu} and, in the thermodynamic limit, we approximate the non-gaussian post-synaptic potentials as (namely the terms in the round brackets in Eq. \eqref{eq:partitionfunction2}) with a Gaussian distribution with the suitable mean and variance. 
Now, in this Section, independently from the statistical mechanical route, using a stability analysis approach\footnote{This new approach needs no Gaussian approximation.}, we achieve in the noiseless limit ($\beta \to +\infty$)  an explicit expression  of the self-consistency equation for $\m$. The comparison of the results obtained from this approach with noiseless limit self equations  obtained in Eqs. \eqref{eq:noiseless}  by Guerra's Interpolation technique provides a corroboration of the correctness of our results. 
%We  suppose that the network is in a retrieval configuration, say $\bm \sigma = \bm \xi^1$ without loss of generality, we evaluate the local field $h_i(\bm \xi^1)$ acting on the generic neuron $\sigma_i$, and inspect if and when $h_i(\bm \xi^1)\si >0$ is satisfied for any $i=1, \hdots, N$: this is known as {\em stability analysis}.

%\resub{To do so, we suppose that the network is in a retrieval configuration, say $\bm \sigma = \bm \xi^1$, without loss of generality, we evaluate the local field $h_i(\bm \xi^1)$ acting on the generic neuron $\sigma_i$, and, using the spin configuration  updating rule (\eqref{eq:dynam_rul}), we inspect how the value of the Mattis-magnetization changes after one step of MC update: this is known as {\em stability analysis}.}

\par\medskip
Resuming Eq. \eqref{eq:dynam_rul} obtained within the signal-to-noise analysis, we now consider the configuration $\boldsymbol \sigma ^{(1)} = \boldsymbol \xi^1$ and we update the system to get $\boldsymbol \sigma ^{(2)}$, whose related Mattis magnetization reads as
\begin{equation}
    m_1^{(2)}=\dfrac{1}{N}\SOMMA{i=1}{N}\xi_i^1\sigma_i^{(2)}=\dfrac{1}{N}\SOMMA{i=1}{N}\mathrm{sign}\left[\xi_i^{1}h_i^{(1)}\Big(\boldsymbol{\xi}^{1}\Big\vert\bm\eta\Big)\right].
    \label{eq:m_1_step}
\end{equation}
Recall the definition of the internal local field in Eq. \eqref{eq:campo_h}:  
\begin{align*}
    h_i(\boldsymbol{\sigma}|\bm \eta)=& \dfrac{1}{2N^{P-1}M^P \mathcal{R}^{P/2}}\SOMMA{\mu=1}{K}\SOMMA{\underset{}{(i_2,\cdots,i_P)}}{N,\cdots,N}\SOMMA{a_1,\cdots,a_P}{M,\cdots,M}\eta^{\mu, a_1}_{i}\eta^{\mu, a_2}_{i_2}\cdots\eta^{\mu,a_P}_{i_P}\sigma_{i_2}...\sigma_{i_P} \, 
\end{align*}
and using the CLT on the sums over $\mu, i_2,\hdots,i_P$, and $a_1,\hdots,a_P$, we can replace each term $\xi_i^1h_i(\bm \xi^1|\bm\eta)$ appearing in the argument of the sign function in Eq.\eqref{eq:m_1_step},  as $\mu_1+z_i\sqrt{\mu_2}$ with $z_i \sim \mathcal{N}(0,1)$ and
\begin{eqnarray}
\mu_1&\coloneqq \mathbb{E}_{\xi}\mathbb{E}_{(\eta|\xi)}\left[\xi_i^{1}h_i^{(1)}\Big(\boldsymbol{\xi}^{1}\Big\vert\bm\eta\Big)\right],\label{eq:mu1}
\\
\mu_2&\coloneqq \mathbb{E}_{\xi}\mathbb{E}_{(\eta|\xi)}\left\{\left[ h_i^{(1)}\Big(\boldsymbol{\xi}^{1}\Big\vert\bm\eta\Big)\right]^2\right\}.\label{eq:mu2}
\end{eqnarray}
Thus, Eq. \eqref{eq:m_1_step} becomes
\begin{equation} \label{eq:bibo}
    m_1^{(2)}=\dfrac{1}{N}\SOMMA{i=1}{N}\mathrm{sign}\left(\mu_1 +z_i\sqrt{\mu_2-\mu_1^2}\right)\,.
\end{equation}
For large values of $N$, the arithmetic mean appearing in the r.h.s. of Eq.~\eqref{eq:bibo} can be replaced with the expected value, that is, being $g(z)=\mathrm{sign}(\mu_1 + z \sqrt{\mu_2})$,
\begin{equation}
    \dfrac{1}{N}\SOMMA{i=1}{N} g(z_i) \;\;\mathrm{with}\;\; z_i\sim \mathcal{N}(0,1) \xrightarrow[N\to\infty]{} \mathbb{E}[g(z)]=\displaystyle\int \dfrac{dz}{\sqrt{2\pi}}e^{-\frac{z^2}{2}} g(z)\,.
    \label{eq:largemean}
\end{equation}
Therefore, as $N\gg1$, we can rewrite Eq. \eqref{eq:m_1_step} as
\begin{equation}
    m_1^{(2)}=\displaystyle\int\dfrac{dz\,e^{-\frac{z^2}{2}}}{\sqrt{2\pi}} \mathrm{sign}\left(\mu_1+z\sqrt{\mu_2-\mu_1^2}\right)= \mathrm{erf}\left(\dfrac{\mu_1}{\sqrt{2(\mu_2-\mu_1^2)}}\right) \,.\label{eq:m1}
\end{equation}

We carry out the computations of $\mu_1$ and $\mu_2$ in Appendix \ref{app:momenta}, while hereafter we simply report their values
\begin{eqnarray}\label{CoeffiMu1}
\mu_1&=& \dfrac{1}{(1+\rho)^{P/2}} \, ,
\\ \label{CoeffiMu2}
\mu_2&=& \dfrac{1}{(1+\rho)^{P}}\left[\alpha_{P-1}(P-1)!(1+\rho)^P+1+\rho\right] \, .
\end{eqnarray}
By replacing $\mu_1$ and $\mu_2$ in Eq. \eqref{eq:m1} with the r.h.s. of Eqs.  \eqref{CoeffiMu1} \eqref{CoeffiMu2}, we get
\begin{equation}
    m_1^{(2)}= \mathrm{erf}\left\{\left[2\left(\alpha_{P-1}(P-1)!(1+\rho)^P+\rho\right)\right]^{-1/2}\right\} \, .
\end{equation}
Remarkably, this is an explicit expression depending directly on the control parameters of the system.

Thus, the expression for $m_1^{(2)}$, using the explicit expression of $\alpha_{P-1}= 2\gamma/P! $ (see Eq. \eqref{eq:load_gamma}), becomes
\begin{equation}
\label{eq:expr_m1}
    m_1^{(2)}= \mathrm{erf}\left\{\left[{2\left(\gamma\dfrac{2}{P}(1+\rho)^P+\rho\right)}\right]^{-1/2}\right\}.
\end{equation}

%We can finally evaluate the parameter $\V$, \resub{by comparing \eqref{eq:expr_m1} and}
% via stability analysis we obtained
% \begin{equation}
%     m_1^{(2)}\sim \mathrm{erf}\left(\dfrac{1}{\sqrt{2\left[\gamma\dfrac{2}{P}(1+\rho)^P+\rho\right]}}\right),
% \end{equation}
% while, from 
% the self-consistency equations in \eqref{eq:noiseless} for $\beta \to \infty$, that is
% \begin{equation}
%     \begin{array}{lll}
%          \m = \mathrm{erf}\left\{\left[{{2\left(\gamma\dfrac{4}{P^2(2P-3)!!}\V^2(1+\rho)^P+\rho\,\;\right)}}\right]^{-1/2}\right\}.
%     \end{array}
% \end{equation}
% To have a perfect match between the two equations, we have to fix the variance of the noise term, namely $\V$, as 
% \begin{equation} \label{eq:VV}
%     \V= \sqrt{\dfrac{P(2P-3)!!}{2}} \,.
% \end{equation}

% Therefore, \resub{we can replace the value of $\V$ found in \eqref{eq:VV} in Eq. \eqref{eq:g_supervised_large_M} which} finally becomes
% \begin{align}
%     &g({\beta}, \gamma, Z)=\Tilde{\beta}\dfrac{P}{2}\m^{^{P-1}}+\Tilde{\beta} Z\sqrt{\rho\left(\dfrac{P}{2}\m^{^{P-1}}\right)^2+\gamma\dfrac{P}{2}(1+\rho)^P\,\q^{^{P-1}}\;}  \ .
%  \end{align}

%In order to corroborate the (stability analysis driven) estimate of $\V$, 
In order to check the goodness of the RS solution, we run numerical computations using MC simulations with Plefka's dynamics \cite{Plefka1,SollichBraviOpper}: this technique allows us to speed up the updating procedure of the synaptic tensor which is enormous in dense Hebbian networks. We refer to Appendix \ref{app:plefka} for discussions and proofs about this method, whereas, here we compare the estimates for the magnetization $\bar m$ and the susceptibility $\partial_r \bar m$ obtained by Eq.~\eqref{eq:expr_m1} and by simulations finding an excellent alignment, see Fig. \ref{fig:s2n_unsup}. 

Finally, in order to get some insights into the dynamical process, in Fig.~\ref{fig:sup_attractor}, assuming that attractors are centered at $\hat{\eta}_M$, we plot 
the Hamming distance between $\hat{\eta}_M$ and the initial state $\boldsymbol \sigma^{(0))}$, versus the Hamming distance between $\hat{\eta}_M$ and the final state $\boldsymbol \sigma^{(\infty)}$, the latter being obtained using MC simulations with Plefka's dynamics. We highlight that, as $\rho$ increases, the width of the attractors is only slightly enlarged.

\begin{figure}[tb]
    \centering
    \includegraphics[scale=0.41]{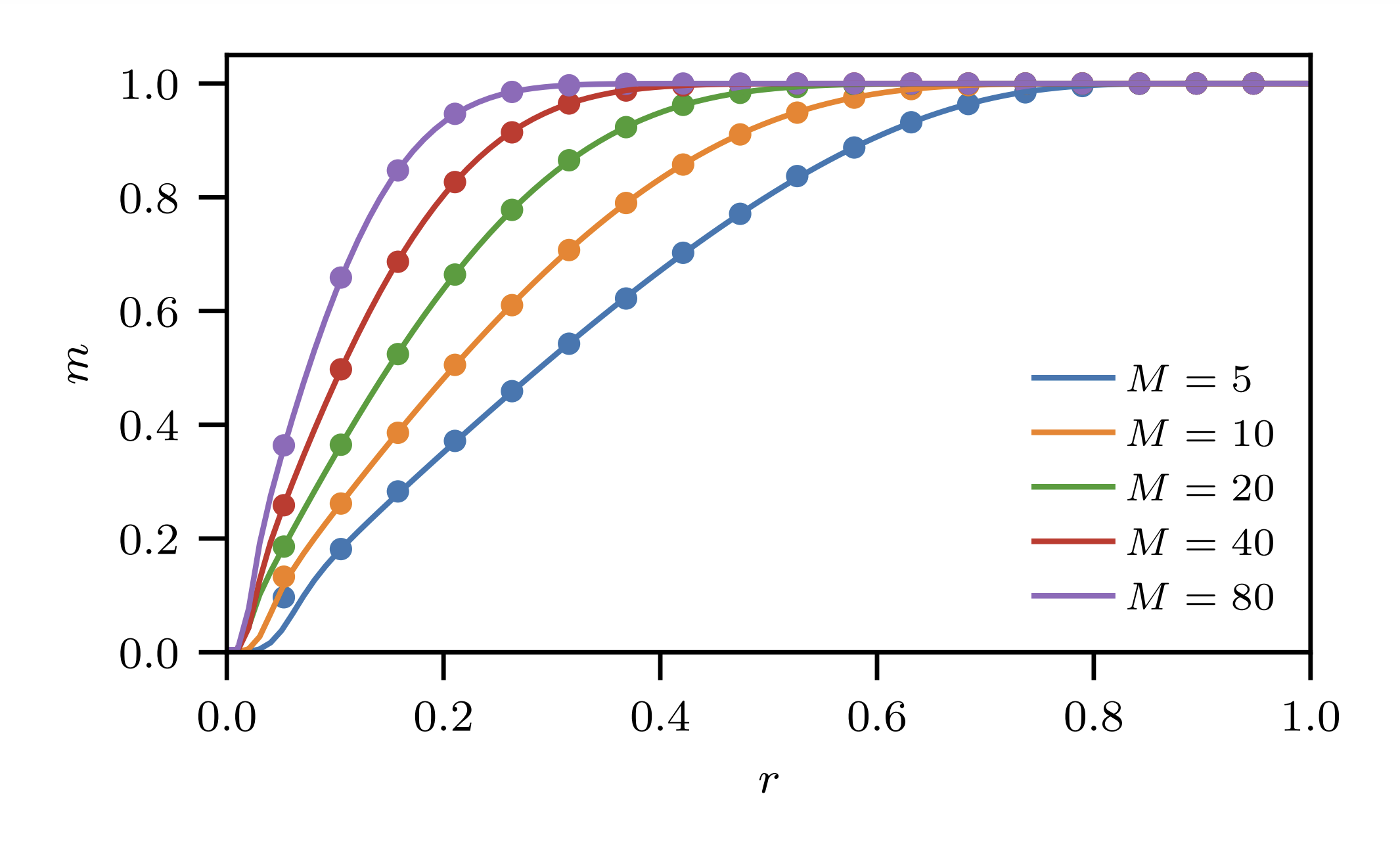}
    \includegraphics[scale=0.41]{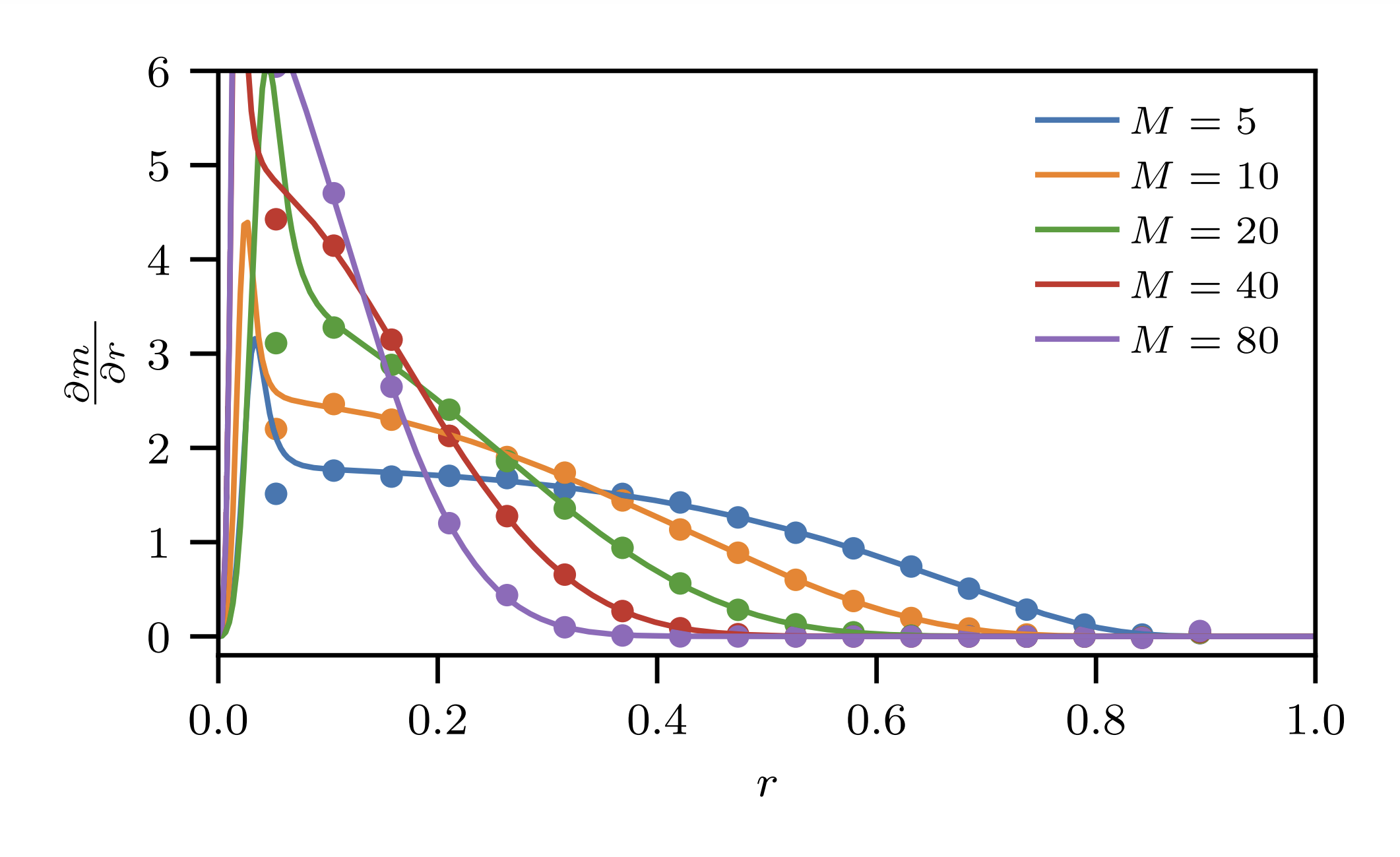}
    %$P=4, \ N=500, \ K=400,$
    \caption{Comparisons between MC simulations equipped with Plefka's dynamic (lines) and stability analysis (dots). 
    The number of examples $M$ varies as specified by the legend, while the number of neurons and classes and are kept fixed at $N=500, \ K=400$. As a consequence, the load is fixed below its critical value, $\gamma<\gamma_C$. The simulations concern the supervised regime with degree of interaction $P=4$ 
    %and corroborate our estimate for $\V$ (see Eq. \eqref{eq:VV})
    . In particular, we report the  archetype magnetization $m$ and its susceptibility $\partial_{r}m$ at various training set sizes $M$ by making the noise $r$ in the training set vary from $0$,  where all the examples are pure random noise, to $1$, where there is no difference among examples and archetype.
    We note that in the small noise limit $r\to1$ the network always perfectly retrieves the archetype as expected, whereas, for $r\to 0$, no retrieval is possible.
    We notice that, when $m > 0.1$, the stability analysis perfectly captures the behaviour of the susceptibility predicted by Plefka's dynamics. 
    }
    \label{fig:s2n_unsup}
\end{figure}
  
\begin{figure}[tb] 
         \centering
         \includegraphics[width=11cm]{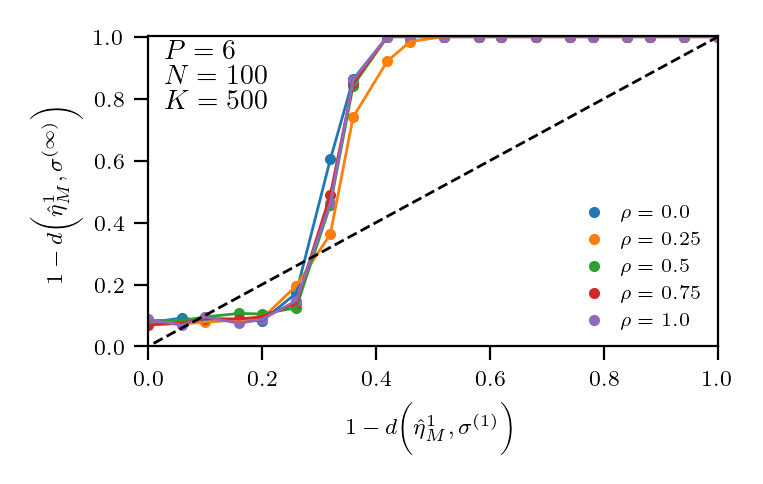}
    \caption{Investigation on the basin of the attractors in a dense network built of by $N=100$ neurons, interacting with a density $P=6$ and handling $K=500$ patterns. 
    For different values of the dataset entropy $\rho$, as reported in the legend, we run MC simulations mixed with Plefka's dynamics starting from the configuration $\boldsymbol \sigma^{(1)}$ and collecting the final configuration $\boldsymbol \sigma^{(\infty)}$ reached by the neural dynamics. Then, we plot the related Hamming distances with respect to the average $\hat{\boldsymbol \eta}_M^1$ among the items of the class labeled with $1$. Configurations relatively close to $\hat{\boldsymbol \eta}_M^1$ tend to get closer (data points fall in the upper region identified by the bisector represented by the dashed line) and when the initial distance is smaller than approximately $0.5$, $\hat{\boldsymbol \eta}_M^1$ plays as an attractor.}
    \label{fig:sup_attractor}	
\end{figure}

\subsection{Critical load and bounds for the dataset size}\label{sec:o_times}
The explicit expression obtained for $m_1^{(2)}$ also allows us to assess a necessary amount of examples which ensure a good performance of the network. In fact,  
if the network has to retrieve one of the archetypes successfully, we need to require that this one-step MC magnetization $m_1^{(2)}$ is larger than  $\mathrm{erf}(\Theta)$ where $\Theta\in\mathbb{R}^+$ is a tolerance level, thus we obtain the criterion 
\begin{equation}
     \dfrac{1}{\sqrt{2\left[\rho +\gamma\dfrac{2}{P}(1+\rho)^{P}\right]}}>\Theta,
 \end{equation}
that can be written as 
\begin{equation}
     1>2\Theta^2\left[\rho +\gamma\dfrac{2}{P}(1+\rho)^{P}\right].
     \label{eq:retriv_condiction_superv}
 \end{equation}

Setting the confidence level $\Theta=1/\sqrt{2}$, which corresponds to the fairly standard condition $$\mathbb{E}_{\xi}\mathbb{E}_{(\eta|\xi)}[\xi_i^1h_i^{(1)}(\boldsymbol \xi^1)]>\sqrt{\mathrm{Var}[\xi_i^1h_i^{(1)}(\boldsymbol \xi^1)]}$$ (see Eqs.  \eqref{eq:mu1}, \eqref{eq:mu2}, \eqref{eq:m1}), the previous relation determines a lower bound for $M$, denoted as $M^{(sup)}_\otimes(r,P,\gamma)$, that guarantees the signal (l.h.s) to be prevailing over the noise (r.h.s.).
\newline
\newline
We now discuss a few special cases under the assumption $r \ll 1$ to inspect if the developed theory smoothly recovers known limits.
\newline 
In the low-load regime $\gamma=0$, the expression in \eqref{eq:retriv_condiction_superv} becomes 
\begin{equation}
    M>\dfrac{1}{2}\dfrac{(1-r^{^2})}{r^{\2}} \Longrightarrow M^{(sup)}_\otimes(r,P,0)\circaa \dfrac{1}{2r^{\2}}
\end{equation}
where the last relation holds for $r \ll 1$: note that, as expected \cite{KanterPowLow1,KanterPowLow2}, the power-law scaling  $M(r)\propto r^{-2}$ in the dataset threshold for learning is recovered.  

If $\gamma \neq 0$ and $P=2$, the classic Hopfield picture  \cite{AgliariDeMarzo} is also recovered
\begin{equation}
    M>\dfrac{\sqrt{\gamma}}{r^{\2}\sqrt{2}}\;\Longrightarrow M^{(sup)}_\otimes(r,2,\gamma)= \sqrt{\dfrac{\gamma}{2}}\dfrac{1}{r^{\2}}.
\end{equation}
Finally, if $\gamma \neq 0$ and $P>2$, we have
\begin{equation}
\begin{array}{lll}
\label{eq:loadM}
   M>\left(\dfrac{\gamma}{P}\right)^{1/P}\dfrac{1}{r^{\2}}\;\Longrightarrow M^{(sup)}_\otimes(r,P,\gamma)= \left(\sqrt[P]{\dfrac{\gamma}{P}}\right)\dfrac{1}{r^{\2}}.
    \end{array}
\end{equation}
 
These results  are corroborated by numerical simulations: in Fig. \ref{fig:sup_normal} we plot the number of examples $M$ w.r.t. $\gamma_C$  at different values of $P$ where we define the critical load $\gamma_C$ as the load beyond which a black-out scenario emerges, as predicted by Hebbian storing, namely $\lim\limits_{\gamma\to\gamma_C^{-}}\m\neq 0$ and $\lim\limits_{\gamma\to \gamma_C^{+}} \m= 0$. When $M$ is chosen following the prescription in Eq. \eqref{eq:loadM}, in supervised Hebbian learning a retrieval zone starts to appear. For $M \gg M_\otimes^{(sup)}$ supervised Hebbian learning and Hebbian storing -- set at a given interaction order $P$ -- yield analogous performances, namely, the retrieval regions resulting from the two schemes are the same. 

\begin{figure}[t]
    \centering
     \begin{subfigure}[b]{0.48\textwidth}
         \centering
         \includegraphics[width=1.02\textwidth]{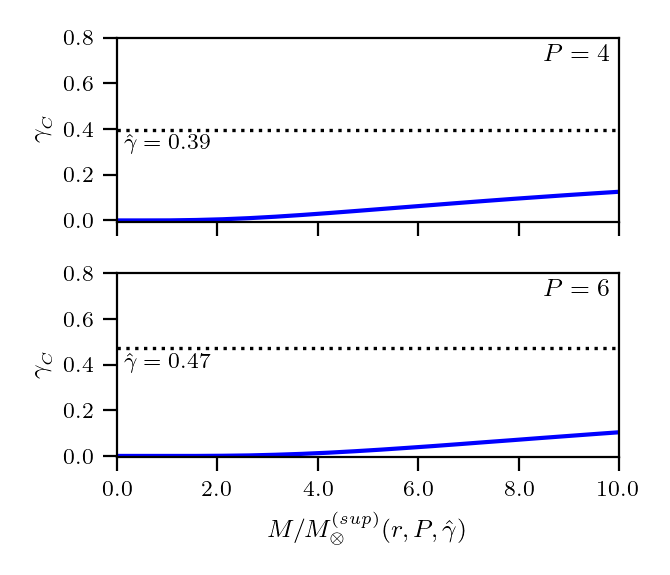}
     \end{subfigure}
     %\hspace{1cm}
     \begin{subfigure}[b]{0.48\textwidth}
         \centering
         \includegraphics[width=1.02\textwidth]{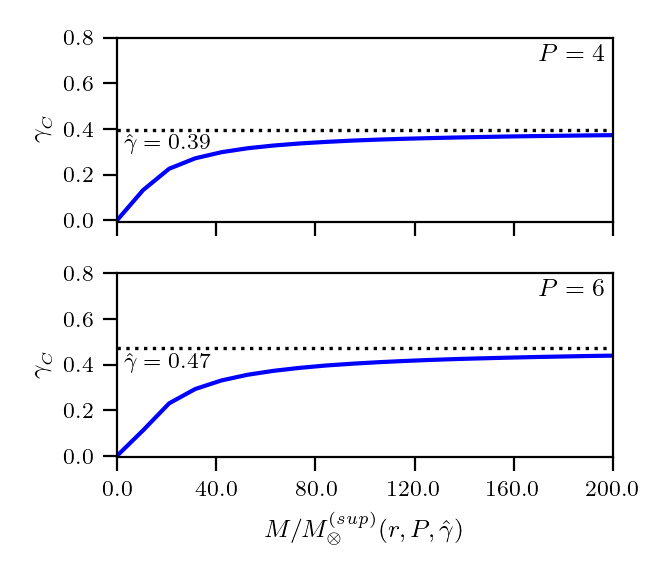}
     \end{subfigure}
     \caption{To inspect how the critical storage evolves as the dataset size varies, we numerically solve the self-consistency equations in the $\beta \to \infty$ limit, namely Eqs. \eqref{eq:noiseless}, for $r=0.2$. In the left column we plot the critical load $\gamma_C$ (blue line) w.r.t. the ratio of the number of examples $M$ and the related lower bound $M_{\otimes}^{sup}$  (see Eq. \eqref{eq:loadM}) for different degrees of interaction $P=4$ (upper panels) and $P=6$ (lower panels): as $\gamma_C$ detaches from zero a non-null retrieval region appears in the phase diagram. In the right column we inspect the large dataset-size limit, namely we broaden the $x-$axis to inspect how $\gamma_C$ increases with $M$. In particular, denoting with $\hat{\gamma}$ the critical load for the standard dense Hopfield network with the same interaction order \cite{EmergencySN}, as long as $M \gg M^{(sup)}_{\otimes}(r, P, \gamma=\hat\gamma)$, $\gamma_C$ saturates to $\hat\gamma$} (represented by the horizontal dashed line). 
    \label{fig:sup_normal}
\end{figure}

\subsection{Application to structured dataset}
\label{sec:mnist}

In the previous sections we worked out the theory under the assumption of a random, structureless dataset and we showed general agreement with numerical simulations run on finite-size architectures.
Now, we extend the numerics to cope with structured datasets: examples are no longer  random, but the pixels in the images are correlated in order to create a real pattern \cite{leonelli2021effective}; in particular, we explore MNist and Fashion MNist datasets. 
\begin{figure}[t]
     \centering
     \begin{subfigure}[b]{0.4\textwidth}
         \centering
         \includegraphics[width=\textwidth]{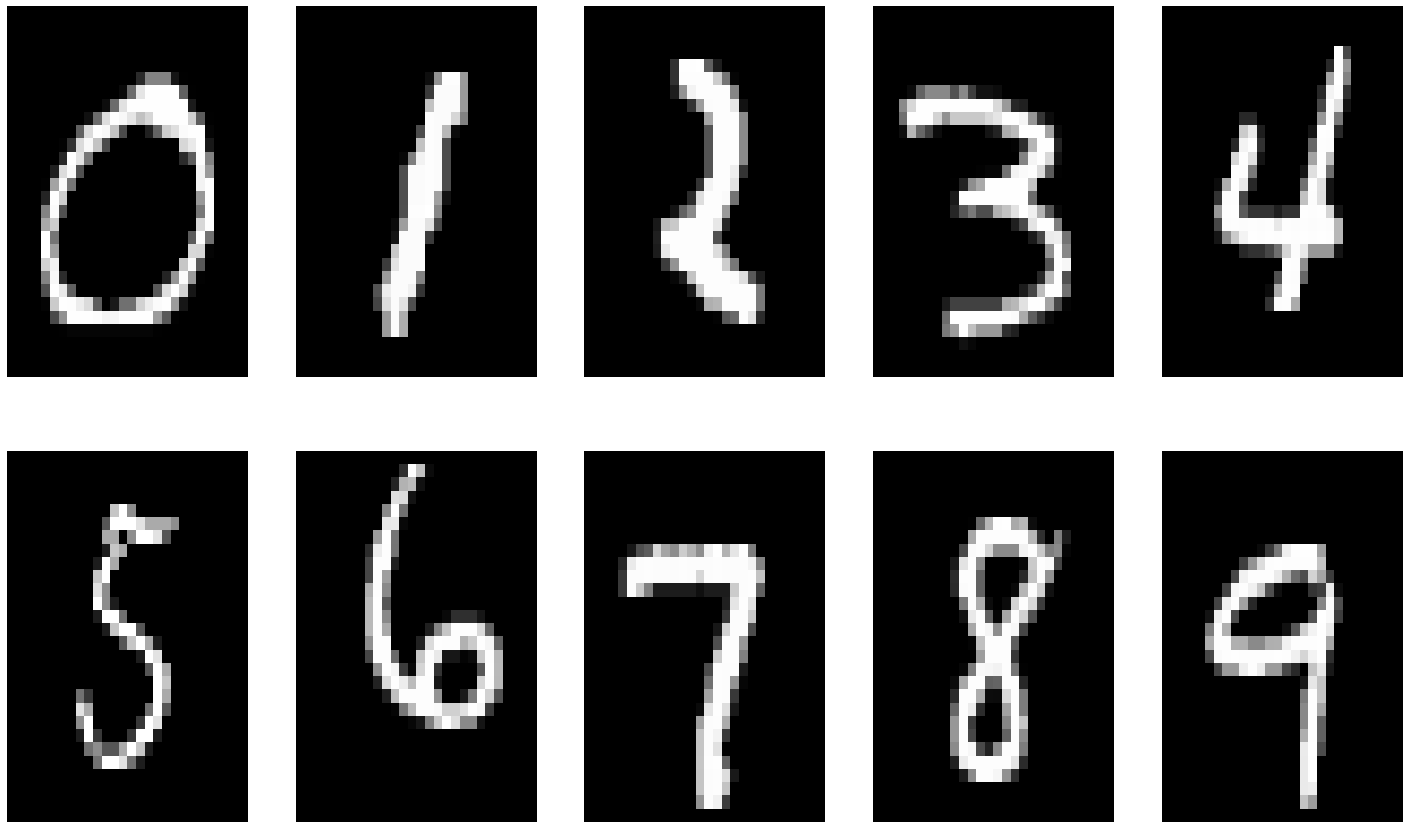}
     \end{subfigure}
     \hspace{1cm}
     \begin{subfigure}[b]{0.4\textwidth}
         \centering
         \includegraphics[width=\textwidth]{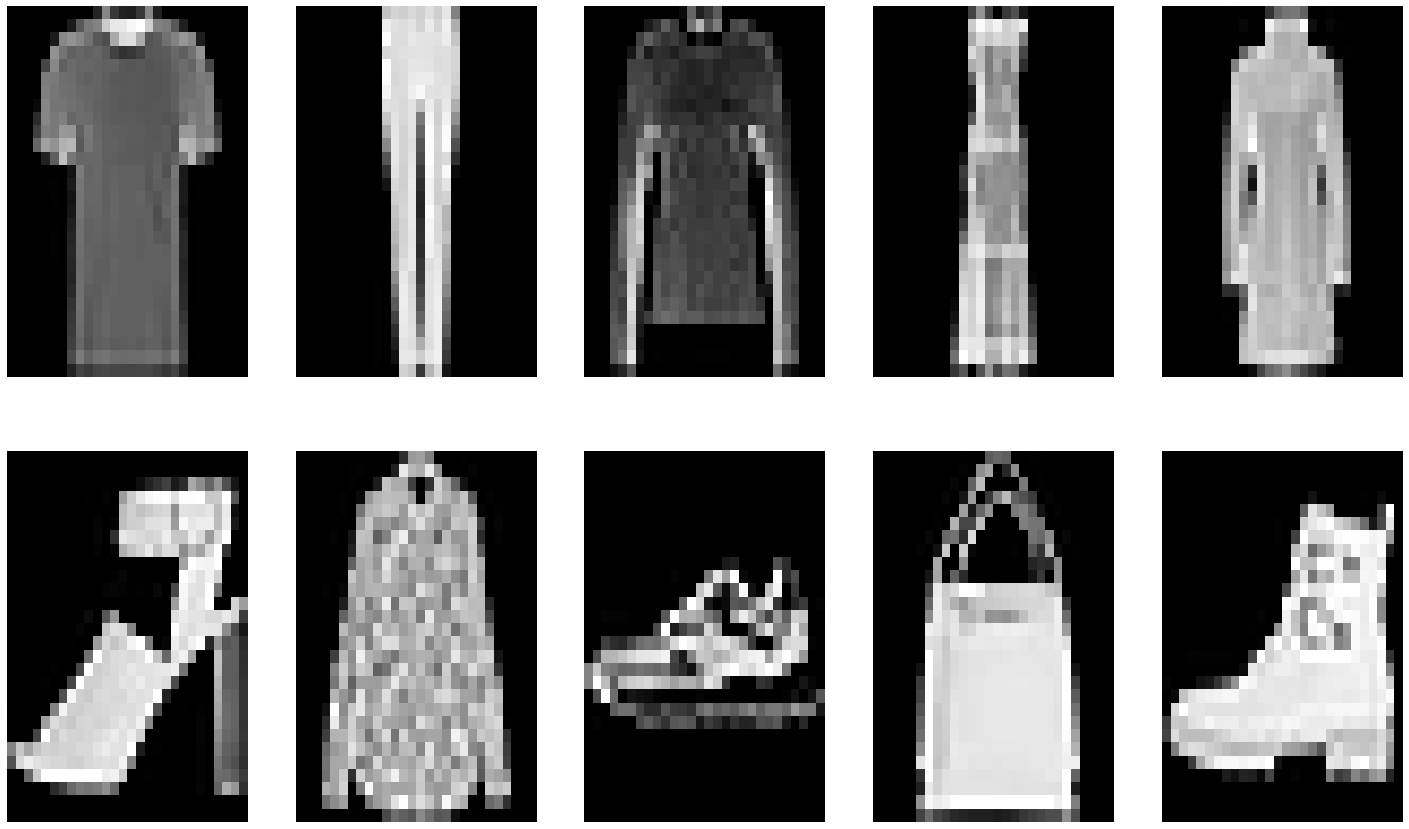}
     \end{subfigure}
        \caption{We report one example chosen randomly from each class contained in MNist (left) and Fashion MNist (right) datasets.}
        \label{fig:datas}
\end{figure}

We recall that MNist is a dataset for image classification built up of handwritten single digits, it has a training set of 60,000 examples, and a testing set of 10,000 examples \cite{deng2012mnist}.
This dataset is made of square 28×28 pixel grayscale images.
Analogously, Fashion-MNIST  comprises 28×28 grayscale images of 70,000 fashion products from 10 categories,  it shares with the MNIST dataset the same image size, data format and the structure of training and testing splits \cite{xiao2017fashion}. By way of example, in Fig. \ref{fig:datas} an item from each class in MNist (left) and Fashion MNist dataset (right) are shown. 

\begin{figure}[t]
    \centering
    \includegraphics[width=12.0cm]{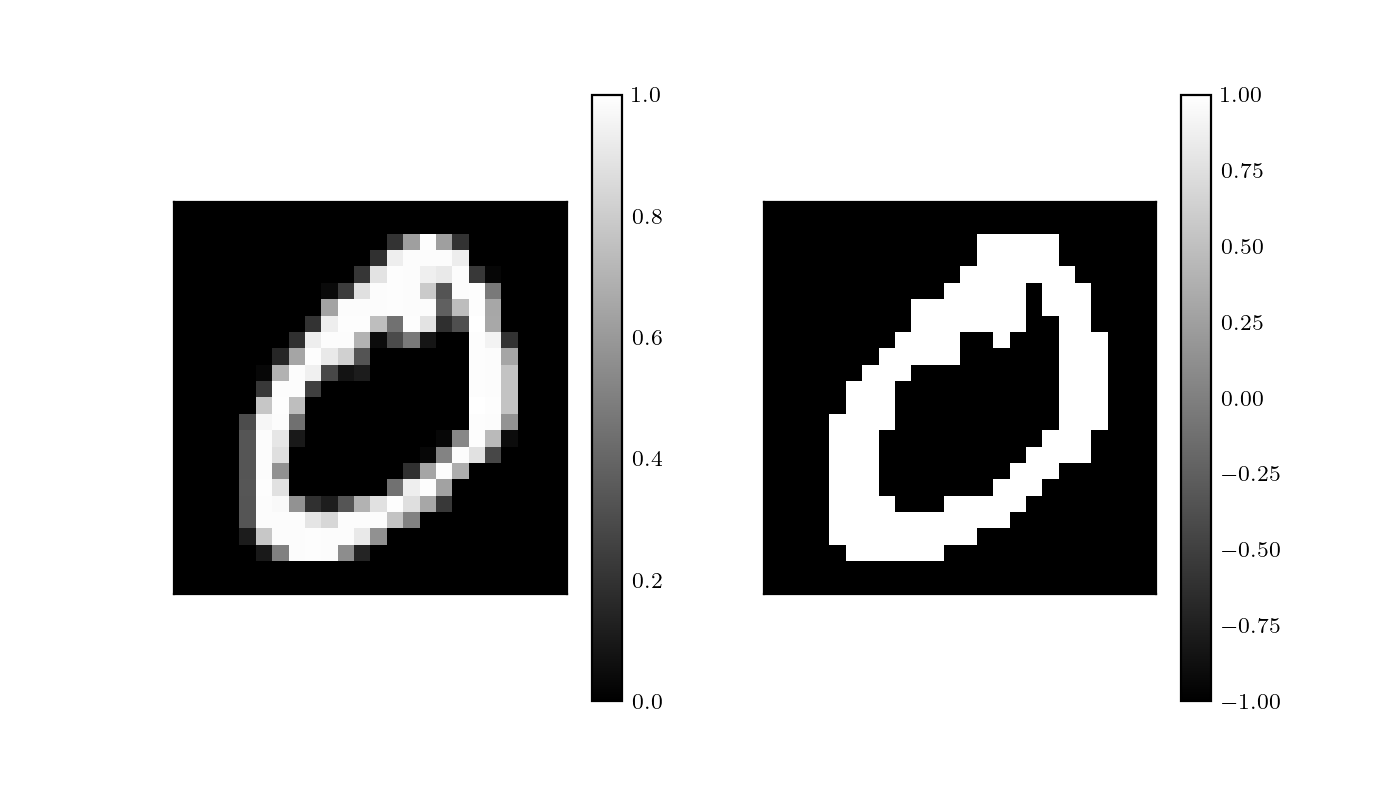}
    \caption{Example from the original real MNist dataset $\lbrace\check{\boldsymbol{\zeta}}^{\mu,a}\rbrace_{a=1,\cdots,M}^{\mu=1,\cdots,K}$ (left) and its binary version  (right); the overall set of binarized patterns constitutes the dataset $\left\lbrace\boldsymbol{{\zeta}}^{\mu,a}\right\rbrace_{a=1,\cdots,M}^{\mu=1,\cdots,K}$.}
    \label{fig:binarizzazione}
\end{figure}

In order to understand the procedure of classification \cite{xiao2015learning, frenay2013classification, veit2017learning}, further notions are in order: from now on we will indicate the structured dataset as $\left\lbrace \check{\boldsymbol{\zeta}}^{\mu,a} \right\rbrace_{a=1,\cdots, M}^{\mu=1,\cdots,K}$, with $\check{\boldsymbol{\zeta}}^{\mu,a}\in \mathbb{R}^N$, where $\mu$ labels the different classes and $a$ labels the items pertaining to the same class.
Before designing the network and inspecting its performances, data must be binarized, namely we map $\left\lbrace \check{\boldsymbol{\zeta}}^{\mu,a} \right\rbrace_{a=1,\cdots, M}^{\mu=1,\cdots,K}$ into its binary version $\left\lbrace {\boldsymbol{\zeta}}^{\mu,a} \right\rbrace_{a=1,\cdots, M}^{\mu=1,\cdots,K}$, ${\boldsymbol{\zeta}}^{\mu,a}\in \{-1,1\}^N$. To this goal, for each pixel $i$ we evaluate the average $\bar{\bar{\zeta}}_i$ over $\mu$ and over $a$, that is
\begin{equation}
    \bar{\bar{\zeta}}_i = \dfrac{1}{KM}\SOMMA{\mu=1}{K}\SOMMA{a=1}{M}\check\zeta_i^{\mu,a}
\end{equation}
and set
\begin{equation}
\zeta_i^{\mu,a}=
    \begin{cases}
    +1\;\;\;\;\mathrm{if}\;\;\;\;\check{\zeta}_i^{\mu,a}>\bar{\bar{\zeta}}_i,
    \\
    -1\;\;\;\;\mathrm{if}\;\;\;\;\check{\zeta}_i^{\mu,a}<\bar{\bar{\zeta}}_i.
    \end{cases}
\end{equation}
In order to show how this procedure works, an example from real and binary MNist dataset are shown in Fig. \ref{fig:binarizzazione}.
The change in the notation for the dataset items with respect to the previous sections ($\bm\eta$ vs $\bm \zeta$) emphasizes that the dataset is now structured and the related archetypes are not defined. 
\begin{figure}[t]
     \centering
         \includegraphics[width=15.5cm]{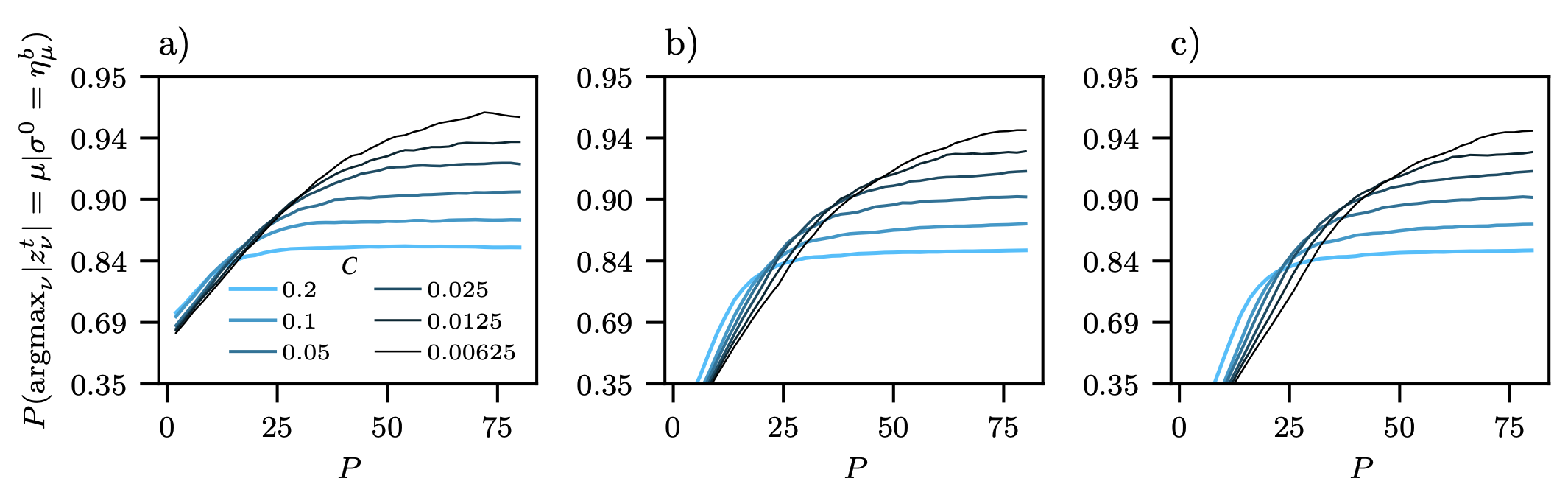}
        \caption{Accuracy of the dense supervised Hebbian networks evaluated in terms of the probability of classifying correctly a test example chosen randomly from the MNIST dataset, as a function of the interaction order $P$, and for different choices of $C$ (as shown by the legend). The different panels $a),b),c)$ corresponds to $1,2,3$ steps of parallel Plefka's dynamics (see Appendix \ref{app:plefka}). Notice that the accuracy grows with $P$ until a saturation accuracy level is reached; we point out that the maximum value of the performance plateau grows as the number of examples increase, namely for $C\to 0$.}
        \label{fig:super_mnist}
\end{figure}

\begin{figure}[t]
     \centering
         \includegraphics[width=15.5cm]{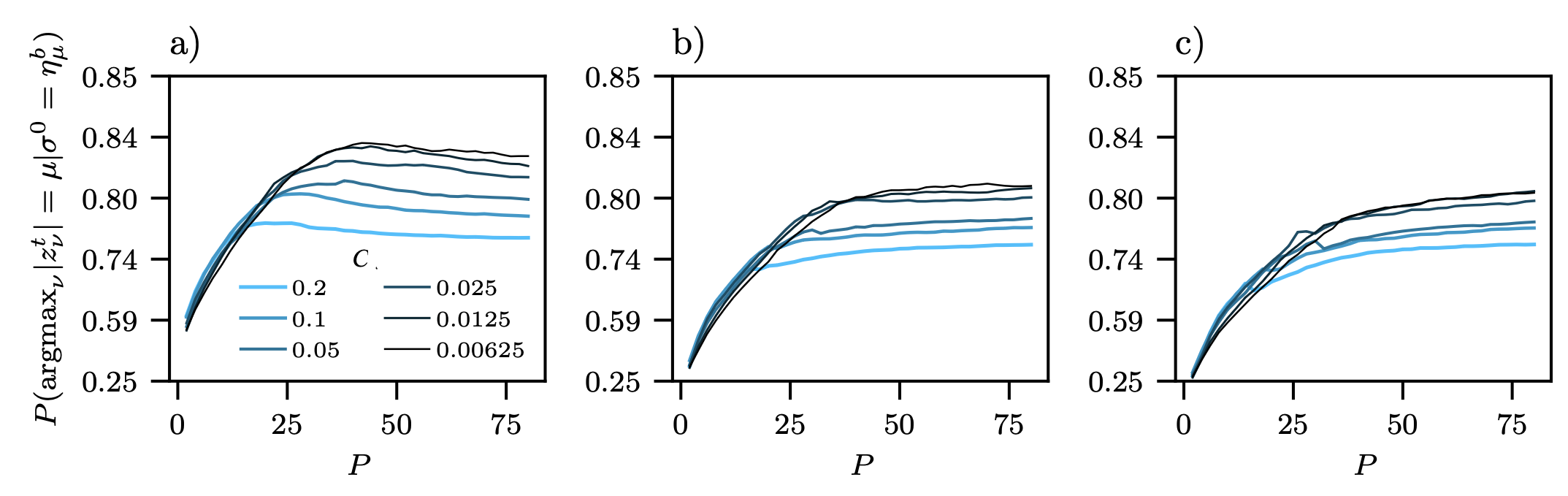}
         \label{fig:sup_normal_fmnist}
        \caption{We repeat the numerical analysis already done in Fig. \ref{fig:super_mnist} for the Fashion-MNIST dataset. Again, the different panels $a),b),c)$ corresponds to $1,2,3$ steps of parallel Plefka's dynamics. Again, the classification performance tends to increase with $P$ until a saturation accuracy level whose value grows as the number of examples increase ($C\to 0$).}
        \label{fig:super_fmnist}
\end{figure}

Now, for each class, we assess the related quality denoted as $r_\mu$ and obtained as follows: given $i$ and $\mu$, we count the number of positive $N_{i,\mu}^+$  and negative $N_{i,\mu}^-$  pixels over the class sample $ \{\zeta_i^{\mu,a}\}_{a=1...M}$, whence
\begin{equation}
    r_{\mu} = \dfrac{1}{N} \SOMMA{i=1}{N} \dfrac{|N_{i,\mu}^+-N_{i,\mu}^-|}{N_{i,\mu}^++N_{i,\mu}^-}.
\end{equation}
Then, recalling that in the random theory a number of examples bigger than $M\circaa r^{-2}$ guarantees the existence of a non-vanishing retrieval region, we can choose an arbitrary, finite value $C$ in $]0;1]$ and introduce for each $\mu=1, \hdots ,K$
\begin{equation}
    M_{\mu}= \dfrac{1}{C}\dfrac{1}{r_{\mu}^2 }, 
\end{equation}
that we expect being a sufficient number of items for correctly processing the $\mu$-th class.

To check the network performance in classifying these structured datasets we rely upon the integral expression of the partition function (see Eq. \ref{eq:partition}), namely we use the dual representation of the network in terms of a generalized RBM, as discussed in Remark \ref{dualEff}: in this representation the network has an input visible layer of size $N = 784$, corresponding to the number of pixels of each item and whose neurons are denoted as $\boldsymbol\sigma = (\sigma_1, \cdots, \sigma_N)$, and it also has a hidden layer of size $K = 10 $, corresponding to the overall number of classes and whose neurons are denoted as $\boldsymbol z = (z_1, \cdots , z_K )$. We fix the weights of the machine as in the grandmother-cell setting\footnote{The  weight associated to the link connecting the $i$-th visible neuron and the $\mu$-th hidden neuron is set directly, without training, equal to $\dfrac{1}{M} \sum_{a=1}^M \zeta_i^{\mu,a}$.}, namely by assigning to each hidden neuron a specific class, one per neuron. 
% NON FACCIAMO VERAMENTE TRAINING, MA FISSIAMO I PESI IN ACCORDO CON GRANDMOTHER CELL, GIUSTO?
% \green{SI}
%-, as this guarantees that -by marginalizing the partition function over the hidden neurons- the dense Hebbian network has the archetypes as candidate minima of its memory landscape.  
Finally,  just as a matter of practical convenience, coupled to this hidden layer, the network is endowed with an output visible layer of the same size $K$ and whose neurons are denoted as $\boldsymbol \pi = (\pi_1, \cdots, \pi_K)$ which is a softmax layer relating  $\pi_\mu$ with the probability that the input supplied to the network belongs to the $\mu$-th class.\\
Let us now specify the activation functions of the hidden and of the output neurons. As detailed in Appendix \ref{app:plefka} (see Eqs.~\eqref{eq:P_a} - \eqref{eq:P_b}), exploiting Plefka's dynamics equations 
``in tandem'' to make the system evolve, we get
\begin{align}
\left\langle z_{\mu} \right\rangle &=\frac{1}{\sqrt{  N^{P-1}}} \left(\SOMMA{i=1}{N} (\hat{\zeta}_M)_i^{\mu} \,\left\l\sigma_i\right\r\right)^{P/2},
     \label{eq:P_b_anticipo}
     \\
    \left\l \sigma_i \right\r&= \tanh{\left\{\bbt \dfrac{P}{2}\dfrac{1}{\sqrt{N^{P-1}}}\SOMMA{\mu=1}{K}\left[\left\langle z_{\mu} \right\rangle(\hat{\zeta}_M)_i^{\mu} \left(\SOMMA{j=1}{N} (\hat{\zeta}_M)_j^{\mu} \,\left\l \sigma_j \right\r\right)^{P/2-1}\right]\right\}},
    \label{eq:P_a_anticipo}
\end{align}
where $(\hat{\zeta}_M)_i^\mu:= \dfrac{1}{rM}\sum_{a=1
}^M \zeta_i^{\mu,a}$. Thus, we proceed as follows: we start from an initial configuration $\boldsymbol{\sigma}^{(0)}$ initializing the visible layer as $\boldsymbol\sigma^{(0)} = \boldsymbol{\zeta}^{\mu,b}$, where $\boldsymbol{\zeta}^{\mu,b}$  is an item sampled from the $\mu$-th class, then, using \eqref{eq:P_b_anticipo}, we evaluate the related $\l z_{\mu}^{(0)}\r$ for any $\mu$ and we use them in Eq. \eqref{eq:P_a_anticipo} to get $\l \sigma_i^{(0)} \r$ for any $i$, the latter is then used in Eq. \eqref{eq:P_b_anticipo} to get $\l z_{\mu}^{(1)}\r$, and we proceed this way, bouncing from \eqref{eq:P_a_anticipo} to \eqref{eq:P_b_anticipo}, for a few iterative steps. Finally, the accuracy of the classification process is checked  with the output layers which is determined as $\pi_{\mu} = \mathrm{softmax}(|\boldsymbol{z}|)$ and returns the classification probability for the input; the absolute value of the hidden neuron is carried out element-wise and it is meant to preserve the gauge invariance characterizing the model.
We  recall that in the softmax layer  a free parameter denoted with $\theta$ is applied, that is
\begin{equation}
    \pi_\mu(|\boldsymbol{z}^{(t)}|) = \dfrac{e^{\theta |z_{\mu}^{(t)}|}}{\sum\limits_{\nu=1}^{K} e^{\theta |z_{\nu}^{(t)}|}}\,,
    \label{eq:softmax}
\end{equation}
where the superscript $(t)$ refers to the number of steps made with Plefka's dynamics, and $\theta$ tunes the broadness of the distribution, playing a role similar to the temperature in the statistical mechanics framework. If we let $\theta\to\infty$ the softmax collapses to a delta function peaked at $\mathrm{argmax}_{\mu=1,\cdots,K}(|\boldsymbol{z}|)$.
\\
As $\theta\to\infty$, we identify the probability of correctly classifying a given example as 
\begin{equation}
    \mathcal P_+%=\pi_\mu(\boldsymbol\sigma ^{(0)}= \boldsymbol{\resub{\zeta}}^{\mu,b})
    = \mathbb P\left(\mathrm{argmax}_{\mu}( |\boldsymbol z^{(t)}|)=\mu | \boldsymbol\sigma^{(0)}=\boldsymbol{\zeta}^{\mu,b}\right)
\end{equation}
that is the probability of classifying an example $\zeta^{\mu,b}$ as correctly belonging to the class $\mu$-th class. This probability is shown by varying the order of interactions $P$ in Figs. \ref{fig:super_mnist} and \ref{fig:super_fmnist} respectively for MNist and Fashion MNist dataset. Instead with
\begin{equation}
    \mathcal P_-%=\pi_\mu(\boldsymbol\sigma ^{(0)}= \boldsymbol{\resub{\zeta}}^{\nu\neq\mu,b})
    = \mathbb P\left(\mathrm{argmax}_{\mu}( |\boldsymbol z^{(t)}|)=\mu | \boldsymbol\sigma^{(0)}\neq\boldsymbol{\zeta}^{\mu,b}\right)
\end{equation}
we identify the class-specific probability of misclassification, in other words it is the probability of classifying an example $\zeta^{\nu\neq\mu,b}$ as belonging to the $\mu$-th class. These probabilities are normalised as $\mathcal P_+ + 9\mathcal P_- = 1$ since in total there are 10 different classes for both datasets.

Let us discuss results regarding the performances of the network. \\
It is trivial to show that in the random case the classification is perfect. As far as the MNist concernes, 
as shown in Fig.~\ref{fig:super_mnist}, the classification is not flawless but it seems to improve as the order of interactions $P$ of the network increases. 
As for the Fashion MNist, as shown in Fig.~\ref{fig:super_fmnist}, the results are a bit worse that those on MNist dataset. This could be related to the presence of an elevated number of details which increase the structure of the images. 
%However, also in this dataset we can state that using one single step of dynamic gives us better performances. 

\subsection{Comparison between unsupervised and supervised regimes}\label{TacciDeThe}

\begin{figure}[t]
    \centering
    \includegraphics[width = 15.5cm]{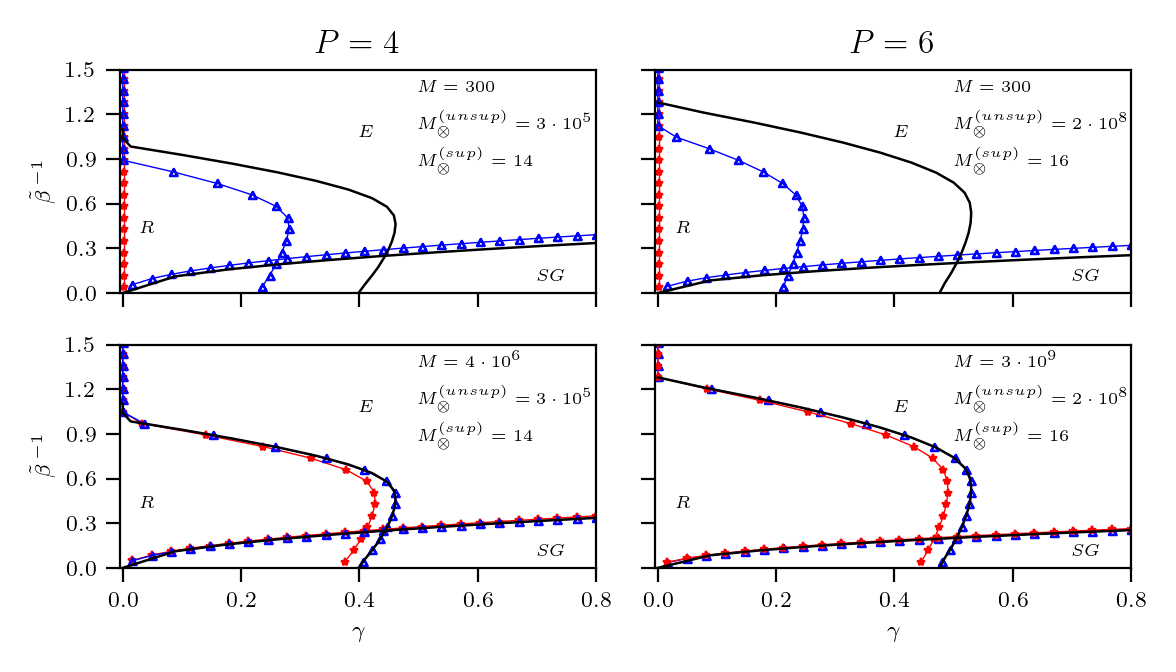}
    \caption{Comparison between supervised (blue triangles) and unsupervised (red stars) transition lines for $r=0.2$ considering $P=4$ (left) and $P=6$ (right). For each $P$, in the first row we choose $M=300$, on the other hand in the second row  we increase $M$ up to  $M=4\cdot 10^6$  and $M=3\cdot 10^9$ for $P=4$ and $P=6$ respectively. \eqref{eq:scaling_Ms}). Note that, fixed $P$, as long as $M>M_\otimes^{(unsup)}$  also $M>M_\otimes^{(sup)}$ thus, fixed the dataset quality, if the unsupervised model has a non-null retrieval region, also the supervised model has a non-null retrieval region (while, as shown in the first row, the contrary is not the case). 
    Even if both unsupervised and supervised models (for $M\to +\infty$)  approach the phase diagram of the standard Hebbian storage by  the P-spin Hopfield model (solid black line), the retrieval region of the supervised regime is always bigger than the unsupervised one, this is due to the fact that, denoting with $\hat{\gamma}$ the critical load for the standard dense Hopfield network, the noisy part of supervised model depends on $M_\otimes^{(sup)}= (\sqrt[P]{\hat\gamma/P\,})r^{-2}$ while in the unsupervised one it depends on $M_\otimes^{(unsup)}=(\hat\gamma/P)r^{-2P}$ and $M_\otimes^{(unsup)}>M_\otimes^{(sup)}$.} 
    \label{fig:sup_confronto}
\end{figure}

\begin{figure}[t]
    \centering
    
         \centering
         \includegraphics[width = 15cm]{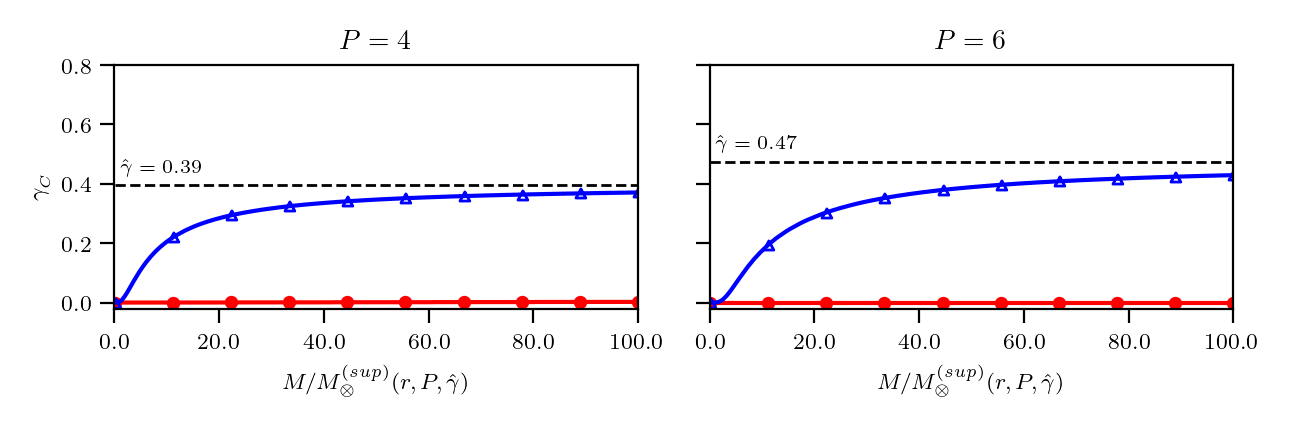}
    \caption{Comparison between supervised (blue triangles) and unsupervised (red stars) critical load $\gamma_C$  w.r.t. the ratio between the number of examples $M$ and the related lower bound $M_{\otimes}^{sup}(r,P,\hat\gamma)$ (see Eq. \eqref{eq:loadM}),  considering $P=4$ (left) and $P=6$ (right) in the null-temperature limit networks with $r=0.2$.  
    %the minimum number of examples to have a retrieval region scales as found in supervised regime from analytical computation in Sec. \eqref{sec:datalimit_sup} (namely $M > M_\otimes(r,P,\gamma)$). 
   In the figure  we also represent
    the critical load of the standard Hebbian storage, whose value is indicated by $\hat\gamma$ (dashed black line).
    Note that, fixed $P$ and the dataset quality, as long as $M/M_\otimes^{(sup)}(r,P,\hat\gamma)>1$ the supervised model has a non-null retrieval region, namely $\gamma_C>0$, while the unsupervised model, even for $M=100 M_\otimes^{(sup)}(r,P,\hat\gamma)$ has a null retrieval region. This behavior is related to the fact that the critical number of examples needed to the unsupervised model is much larger than the supervised one ($M_\otimes^{(sup)}(r,P,\hat\gamma)\ll  M_\otimes^{(unsup)}(r,P,\hat\gamma)$).
    }
    \label{fig:phase_confronto}
\end{figure}

The purpose of this section is to compare dense Hebbian neural networks in supervised and unsupervised \cite{unsup} settings. 
In a nutshell, these networks are both made of $N$ neurons whose interactions involve groups of $P$ units and the relate strengths read as  
\begin{align}
    J^{(unsup)}_{i_1i_2\hdots i_P} &\sim \frac{1}{ M N^{P-1}} \sum_{\mu=1}^K \sum_{a=1}^M \eta_{i_1}^{\mu,a} \cdots\:\eta_{i_P}^{\mu,a}, \label{J_unsup}\\
    J^{(sup)}_{i_1i_2\hdots i_P} &\sim \frac{1}{ M^P N^{P-1}} \sum_{\mu=1}^K \sum_{a_1, \hdots, a_P=1}^{M, \hdots, M} \eta_{i_1}^{\mu,a_1} \cdots\:\eta_{i_P}^{\mu,a_P} \label{J_sup}
\end{align}
where $\left\lbrace\boldsymbol \eta^{\mu}_a \right\rbrace_{a=1,...,M}^{\mu=1,...,K}$ represent the items of a dataset, corresponding to some perturbed versions of an archetype from a set $\left\lbrace \bm\xi^\mu\right\rbrace^{\mu=1, \hdots , K}$  which is never supplied to the network. In Eq.\eqref{J_sup}, the model can build its own archetypes starting from a set of examples already labeled in each class. Instead, 
in Eq.\eqref{J_unsup} there is no teacher that knows the labels and can thus cluster the examples archetype-wise, this is why the two generalizations are associated to, respectively, unsupervised and supervised protocols \cite{EmergencySN,prlmiriam}. 

In both settings there is a distinction between high- and low-load regimes and, in the former, by analytical investigations we reach the following expressions for the quenched statistical pressure:

%\footnotesize
\begin{align}
\mathcal{A}_{\gamma, M, r,\beta}^{(P, unsup)} 
     =& \mathbb{E}\left \{ \ln{2\cosh{\left[\tilde\beta \dfrac{P}{2}\n^{P-1}(1+\rho)^{P-1} \tilde{\eta}+Y \tilde\beta \sqrt{\gamma (1+\rho_{_P})
     \dfrac{P}{2}\q^{^{P-1}} }\right]}}\right\}+\notag
         \\
         &-\dfrac{\tilde\beta}{2}(P-1)(1+\rho)^{P}\n^P+  \gamma \dfrac{\tilde\beta^2  (1+\rho_P)}{4}\left(1-P\q^{P-1}+(P-1)\q^P\right)\,,
\\\notag \\
         \mathcal{A}_{\gamma, M,r, \beta}^{(P, sup)} 
     =&\mathbb{E}\left\lbrace{\ln\Bigg[2\cosh\left(\tilde\beta\dfrac{P}{2}(1+\rho)^{P-1}\n^{P-1}\tilde{\eta}+Y\tilde\beta\sqrt{   \gamma
         \dfrac{P}{2}(1+\rho)^P\q^{P-1}}\right)}\Bigg]\right\rbrace+ \notag
      \\
      &-\dfrac{\tilde\beta}{2}(P-1)(1+\rho)^{P}\n^{P}+\gamma \dfrac{\tilde\beta^2  (1+\rho)^{P}}{4}\left(1-P\q^{P-1}+(P-1)\q^P\right)
\end{align}
\normalsize
where $\mathbb{E}= \mathbb{E}_{\xi}\mathbb{E}_{(\eta|\xi)}\mathbb{E}_Y$, $\rho_P=\frac{1-r^{2P}}{r^{2P}M}$ and $\tilde{\beta}=\b (1+\rho)^{-P/2}$.

Moreover, the minimal number of examples the 
network needs in order to have a non-null retrieval region with a non-null load $\gamma$ in unsupervised \cite{unsup} and supervised (Sec.~\ref{sec:o_times}) settings are, respectively, 
\begin{eqnarray}
   M^{(sup)}_\otimes (r,P,\gamma)&=& \left( \dfrac{\gamma}{P}\right)^{1/P} \dfrac{1}{r^{\2}}  , \label{eq:M_sup} 
   \\
    M^{(unsup)}_\otimes (r,P,\gamma)&=&\dfrac{\gamma}{P} \dfrac{1}{r^{2P}} = \left[ M^{(sup)}_\otimes (r,P,\gamma)\right]^P. \label{eq:M_unsup}
\end{eqnarray}
Thus, since $r \in (0,1)$ and $\gamma > 0$, the number of necessary examples in the former regime is always smaller than in the latter where $P\geq 2$ even; in other words, the teacher, by grouping examples, ``simplifies'' the task. 
%: supervised learning allows to work even when small datasets are available,  still achieving by far better performances if compared with the unsupervised counterpart. 

%After the analytical description, let us focus on numerical results. 

In Fig. \ref{fig:sup_confronto} we compare the phase diagrams of dense networks achieved in unsupervised vs supervised settings. We notice that,  fixing the order of interaction $P$ and the dataset quality $r=0.2$, $M_{\otimes}^{(sup)}$ in Eq. \eqref{eq:M_sup} is not a sufficient amount of examples for the unsupervised setting to be able to retrieve (as shown in the first row of the panel), since it is much smaller than the amount necessary (see Eq. \eqref{eq:M_unsup})\footnote{In \cite{unsup} it is also discussed what happens when we do not respect this lower bound for unsupervised regime.}.

In order to preserve network's abilities, as  long as  $r\ll 1$,  the dataset size for the supervised and unsupervised model has to scale, respectively, according to the relations
\begin{eqnarray}
M^{(sup)}_\otimes\sim r^{-2}\;\;\;& \mathrm{and} &\;\;\;\;M^{(unsup)}_\otimes\sim r^{-2P}\,.
\label{eq:scaling_Ms}
\end{eqnarray}   
%This is equivalent setting $\rho\sim \mathcal{O}(1)$ and  $\rho_{_P}\sim \mathcal{O}(1)$. 
If we fix the dataset quality $r$, and we use a number of examples that leads to a non-null retrieval region for the unsupervised model, which allows us to have a non-null supervised retrieval zone (as illustrated in the second row of Fig. \ref{fig:sup_confronto}). However the opposite is not  the case; to deepen this, we show in Fig. \ref{fig:phase_confronto}, the behaviour of the critical load for different values of $M$ greater than $M^{(sup)}_\otimes$, supporting the fact that the presence of supervision in learning dramatically improves learning outcomes.

\section{Conclusion and outlooks}
\label{sec:conclusion}

In this paper we investigated the dense Hopfield neural network endowed with supervised Hebbian couplings.
The theory is analytically developed for the case of structureless datasets, but it is numerically corroborated also for MNist \cite{deng2012mnist} and Fashion MNist \cite{xiao2017fashion}.
We recall that the network is made of $N$ neurons that interact in groups of $P$ units with a strength encoded by the tensor $\boldsymbol J^{(sup)}$, whose generic entry (retaining only the highest order) reads as
\begin{equation}
    J^{(sup)}_{i_1\hdots i_P} \sim \frac{1}{ M^P N^{P-1}} \sum_{\mu=1}^K \sum_{a_1, \hdots, a_P=1}^{M, \hdots, M} \eta_{i_1}^{\mu,a_1} ...\:\eta_{i_P}^{\mu,a_P}
\end{equation}
where $\{\boldsymbol \eta^{\mu} \}_{a=1,...,M}^{\mu=1,...,K}$ is the dataset, made of $K$ subsets of $M$  examples, labelled by $a$, referred to $K$ unknown archetypes  $\{\boldsymbol \xi^{\mu} \}^{\mu=1,...,K}$. The quality of the dataset is determined by $r$, and
we recover the standard dense Hopfield model by setting $M=1$ and $r=1$:   
\begin{align}
    J^{(Hop)}_{i_1 \hdots i_P} \sim \dfrac{1}{N^{P-1}} \sum_{\mu=1}^K \xi_{i_1}^\mu \hdots \xi_{i_P}^\mu. 
\end{align}
%\cite{Baldi,Bovier,DenseRSB}.

Hereafter we summarize the main outcomes of our work. Some of them are shared with the unsupervised counterpart, whose analysis appeared in the twin paper \cite{unsup}: 
\begin{enumerate}
\item The high-load regime, where the number of stored archetypes grows as $K \sim N^{P-1}$, is still achievable  when the standard Hebbian coupling for storing is replaced by the Hebbian coupling for supervised learning. One can still introduce a load $\alpha_{P-1} = \lim_{N \to \infty} \dfrac{K}{N^{P-1}}$ and determine its critical dataset-dependent value beyond which a black-out scenario emerges.

\item There exists a (quality dependent $r$) threshold value $M^{(sup)}_{\otimes}$, for the  dataset size $M$ allowing archetype's learning (and thus its storage and successive retrieval), that  displays the  power-law scaling $ M^{(sup)}_\otimes\sim 1/r^{\2}$. The fact that this threshold value does not increase with $P$ is a main difference w.r.t. learning in an  unsupervised regime and highlights how learning by dense networks highly benefits from  teacher supervision.

\item In the low-load regime, the excess in the available resources can be exploited to mitigate the effects of noise affecting the synaptic tensor. 
In particular, the network can tolerate 
an additive noise that scales super-linearly with the network size (see also \cite{BarraPRLdetective} to deepen the role of redundancy in lowering the signal-to-noise threshold for pattern detection).
\end{enumerate}
As far as the computational and mathematical technicalities are concerned, in the supervised regime, the post-synaptic potential do not exhibit  any  Gaussian shape, thus universality of the quenched noise in spin glasses \cite{CarmonaWu, Genovese, Longo} does not apply  directly. However, we have exploited CLT in order to replace the potential as a product of Gaussians and, then, thanks to universality of quenched noise results in \cite{CarmonaWu}, we were able to deal with a single gaussian variable. 
%and standard techniques such as the replica trick and interpolation approaches do not work straightforwardly.  We adapted the latter by implementing estimates on \resub{the distributions of the post-synaptic potentials} and evaluated their averages by stability analysis; 
%This route is in full accordance with numerical simulations. 
\newline
Finally, we implemented a numerical MC scheme that employs Plefka's effective dynamics to avoid a numerical slowdown related to the evaluation of the synaptic tensors whose sizes get exponentially bigger as $P$ grows. It is worth noting that these technical extensions are of broad generality and can be applied to other kinds neural network.

Beyond full coherence  between  analytical predictions and numerical simulations in the random setting,  we also tested the whole theory on MNist and Fashion MNist datasets to observe that 
\begin{itemize}
    \item When the network lies in the retrieval region of its phase diagram, a single step of Plefka's dynamics is sufficient to accomplish the classification task. This further corroborates the fact that classification with our networks is computationally inexpensive and highlights a practical usage of the knowledge stemming from the phase diagram
    \item The higher the interaction degree $P$, the higher the performance of the machine, being measured by classification accuracy, until a plateau is reached. This highlights the advantage of using networks characterized by  higher-order interactions, especially equipped with supervised training as in this case the minimal dataset size for secure learning does not scale with $P$.
\end{itemize}

We conclude by pointing out that, despite working at the RS level of description, statistical mechanics techniques have made it possible to successfully employ a neural network at work with a classical computer vision task as image classification. This suggests that such a discipline can contribute to theoretical Artificial Intelligence by providing models and algorithms that can be fully understood both analytically and numerically, still accomplishing computational tasks of practical interest.
%; this can be particularly welcome in eXplainable AI (XAI). 
Further, by the knowledge of the phase diagram, we can set the machine in optimal working regions (in the space of the control parameters) and this can be particularly  useful for  optimization of algorithms and protocols,  en route towards SAI.

\appendix
\section*{Appendices}

\section{Proof of Proposition \ref{prop:highnoise}}
\label{app:proof}

In order to prove Proposition \ref{prop:highnoise} 
% we need to introduce one more observable, linked to the additional set of real variables introduced through the Hubbard-Stratonovich transformation (see Eq. \eqref{eq:partition} and Remark \ref{dualEff}): 
% \begin{equation}
%     p_{\resub{cd}}:=\dfrac{1}{K}\SOMMA{\mu=1}{K}z_{\mu}^{(\resub{c})}z_{\mu}^{(\resub{d})}\,,
% \end{equation}
% where $c,\ d$ are two replicas. 
% Moreover, we state that, under the replica-symmetry assumption, in the thermodynamic limit, also this variable self-averages around their mean values, i.e.
% \begin{equation}
%     	\lim_{N\to \infty} \langle (p_{12} - \bar p)^2 \rangle_t = 0 \Rightarrow \lim_{N \to \infty}  \langle p_{12} \rangle_t = \bar p.
% \end{equation}
% Next, 
we need to prove the following

\begin{lemma} 
The $t$ derivative of interpolating pressure is given by 
\begin{equation}
\label{eq:stream_proof}
\begin{array}{lll}
     \dfrac{d\mathcal A^{(P)}_{N,K,\rho,\beta}}{dt}=&\dfrac{\beta'  (1+\rho)^{P/2}}{2}\left(\l n_{1}^P\r_t - \dfrac{2 \psi }{\beta  (1+\rho)^{P/2}}\l n_{1}\r_t\right)
     +\dfrac{{\b}^2P!K}{8N^{P-1}}\left(1-\l q_{12}^P\r_t\right)-\dfrac{A^2}{2}\left(1-\l q_{12}\r_t\right).
\end{array}
\end{equation}
\end{lemma}

Since the computation is lengthy but not cumbersome we  omit it.

\begin{assumption}
\label{ass:RSassumption}
In the thermodynamic limit, under the RS approximation, the distribution of the generic order parameter $X$ is centred at its expectation value $\bar X$ with vanishing fluctuations. Thus, being $\Delta X = X-\bar X$,  in the thermodynamic limit, the following relation holds
\begin{equation}
    \l (\Delta X)^2\r_t \xrightarrow[N\to +\infty]{} 0\,.
\end{equation}
The RS approximation also implies that, in the thermodynamic limit, $\langle \Delta X \Delta Y \rangle_t \to 0$ for any generic pair of order parameters $X,Y$. Moreover, in the thermodynamic limit, we have $\langle (\Delta X)^k \rangle_t \rightarrow 0$ for $k \geq 2$.
	
	From now on, we omit the subindeces $t$.

\end{assumption}

Using the following relations computed by directly applying Newton's Binomial  

\begin{equation}
 \begin{array}{lll}
     \langle x^P \rangle -P\,\bar{x}^{P-1}\langle x \rangle =- (P-1) \bar{x}^{P} + \SOMMA{k=2}{P} \begin{pmatrix}P\\k\end{pmatrix} \langle (x-\bar{x})^k \rangle \bar{x}^{P-k}  \,,
% \\\\
% \langle p_{12}\,q_{12}^{P/2}\rangle= \q^{P/2}\langle p_{12} \rangle + \dfrac{P}{2}\, \p\q^{P/2-1}\langle q_{12} \rangle-\dfrac{P}{2}\,\p\q^{P/2} 
% \\\\
% \textcolor{white}{\langle p_{12}\,q_{12}^{P/2}\rangle=}+\SOMMA{k=1}{P/2} \begin{pmatrix}\frac{P}{2}\\k\end{pmatrix}\bar{q}^{P/2-k} \langle (p_{12}-\bar{p})(q_{12} - \bar{q})^k \rangle + \SOMMA{k=2}{P/2} \begin{pmatrix}\frac{P}{2}\\k\end{pmatrix} \bar{q}^{P/2-k} \bar{p}\langle (q_{12} - \bar{q})^k \rangle \,
\end{array}  
\label{eq:RS_pq_PotenzialiMiriam}
\end{equation}
% Now, using \eqref{eq:RS_pq_PotenzialiMiriam} relations, 
where $x$ is any order parameter and, if we fix the four constants appearing in \eqref{eq:stream_proof} as 
\begin{equation}
    \begin{array}{lll}
        \psi= \dfrac{P}{2}\beta'(1+\rho)^{P/2}\n^{P-1},&& A^2 =\dfrac{KP!}{2 N^{P-1}}\dfrac{\beta^2  }{2}P \q^{P-1} \,,
        % \\\\
        % B^2=\beta'\green{\dfrac{P!}{2}} N^{1-P/2}\q^{P/2}&&C=\beta'\green{\dfrac{P!}{2}} N^{1-P/2}(1-\q^{P/2}),
    \end{array}
\end{equation}
\normalsize
we can recast \eqref{eq:stream_proof} as 
\begin{equation}
\label{eq:dertfinite}
    \begin{array}{lll}
\dfrac{d\mathcal{A}^{(P)}_{N,K,\rho ,\beta}}{dt} &= -\dfrac{\beta'}{2}(1+\rho)^{P/2}(P-1)\n^{P}+\dfrac{\b\,^2}{4}\dfrac{P!K}{2N^{P-1}}(1-P\q^{P-1}+ (P-1)\q^P) + V_N(t)
    \end{array}
\end{equation}
where 
\begin{equation}
\label{eq:VN}
    \begin{array}{lll}
V_N(t)&= \dfrac{\beta'}{2}(1+\rho)^{P/2}\SOMMA{k=2}{P} \begin{pmatrix}P\\k\end{pmatrix} \langle (n_{1}-\bar{n})^k \rangle_t \bar{n}^{P-k} + \dfrac{\b\,^2}{4}\dfrac{P!K}{2N^{P-1}} \SOMMA{k=2}{P} \begin{pmatrix}P\\k\end{pmatrix} \langle (q_{12}-\bar{q})^k \rangle_t \bar{q}^{P-k}
    \end{array}
\end{equation}
\normalsize

We stress that, due to RS assumption, in the thermodynamic limit, the potential $V_N(t)$ vanishes and the derivative of the interpolating pressure w.r.t. $t$ becomes
\begin{equation}
\begin{array}{lll}
     \dfrac{d\mathcal{A}^{(P)}_{N,K, \rho ,\beta}}{dt}=&-\dfrac{\beta'}{2}(1+\rho)^{P/2}(P-1)\n^{P}+\alpha_{P-1}\dfrac{P!}{2}\dfrac{\b\,^2}{4}\left(1-P\q^{P-1}+ (P-1)\q^P\right) .
\end{array}
\label{eq:dert_TD}
\end{equation}
where we use the thermodynamic limit expression of the network load in \eqref{eq:carico_TDL}, namely $\lim\limits_{N\to \infty} \frac{K}{N^{P-1}}= \alpha_{P-1}$.

\begin{proof}
\textit{(of Proposition \ref{prop:highnoise})}
Let us start from finite size $N$ expression. We apply the Fundamental Theorem of Calculus: 
\begin{equation}
    \mathcal{A}_{N, K,\rho, \beta}^{(P)}=\mathcal{A}_{N, K,\rho, \beta}^{(P)}(t=1)=\mathcal{A}_{N, K,\rho, \beta }^{(P)}(t=0)+\int\limits_0^1\, \partial_s \mathcal{A}_{N, K,\rho, \beta }^{(P)}(s
)\Big|_{s=t}\,dt.
\label{eq:F_T_CalculusMiriam}
\end{equation}
\normalsize

We have already computed the derivative w.r.t. $t$ in Eq. \eqref{eq:dertfinite}. It only remains to compute the one-body term:
\begin{equation}
\begin{array}{lll}
      \mathcal{Z}_{N, K,\rho, \beta }(t=0)=&\SOMMA{\boldsymbol{\sigma}}{}\exp{\Bigg[\SOMMA{i=1}{N}\left(\dfrac{\psi}{2}  \dfrac{r^2}{\mathcal{R}}\tilde{\eta}+A\,J_i\right)\sigma_i\Bigg]} ,
\end{array}
\end{equation}
\normalsize
where  $\tilde{\eta}=\dfrac{1}{rM}\SOMMA{a=1}{M}\eta_i^{1,a}$. Using the definition of quenched statistical pressure \eqref{hop_GuerraAction} we have 
\begin{equation}
\begin{array}{lll}
       \mathcal{A}_{N, K,\rho, \beta}^{(P)}(t=0)&=\mathbb E\left\{\ln{}\left[{2\cosh\left(\dfrac{\psi}{2} \dfrac{r^2}{\mathcal{R}}\tilde{\eta}+A\,J\right)}\right]\right\}
      \\\\
      &=\mathbb{E}\left\lbrace\ln{}{2\cosh\left[\beta'\dfrac{P}{2}(1+\rho)^{P/2-1}\n^{P-1}\tilde{\eta}+J\sqrt{\dfrac{KP!}{4 N^{P-1}}\beta^2  \dfrac{P}{2} \q^{P-1}}\right]}\right\rbrace 
\end{array}
\label{onebody}
\end{equation}
\normalsize
where $\mathbb{E}= \mathbb{E}_\xi\mathbb{E}_{(\eta|\xi)}\mathbb{E}_Y$. Finally, putting  \eqref{onebody} and \eqref{eq:dertfinite} in \eqref{eq:F_T_CalculusMiriam}, we get
\begin{equation}
\begin{array}{lll}
      \mathcal{A}_{N, K,\rho, \beta }^{(P)}=&&\mathbb{E}\left\lbrace\ln{}{2\cosh\left[\beta'\dfrac{P}{2}(1+\rho)^{P/2-1}\n^{P-1}\tilde{\eta}+J\sqrt{\dfrac{KP!}{4 N^{P-1}}\beta^2  \dfrac{P}{2} \q^{P-1}}\right]}\right\rbrace + 
      \\\\
      && -\dfrac{\beta'}{2}(1+\rho)^{P/2}(P-1)\n^{P}+\dfrac{\b\,^2}{4}\dfrac{P!K}{2N^{P-1}} (1-P\q^{P-1}+(P-1)\q^P) +  \displaystyle\int_0^1 V_N(t) dt,
\end{array}
\label{eq:AMiriam_finiteRS}
\end{equation}

so, we reach the thesis considering that in the thermodynamic limit, as for $N\to\infty$, the potential $V_N$ vanishes and we can recast the load of the network as in  Eq. \eqref{eq:carico_TDL}, using the explicit expression of $\alpha_{P-1}= 2\gamma/P! $. 
\end{proof}

\begin{corollary}
For large datasets $M \gg 1$ in the high-storage regime, the replica-symmetric self-consistency equations \eqref{eq:n_selfRS_sup}-\eqref{eq:m_selfRS_sup} can be expressed as
\begin{align}
     \m&=\mathbb{E}\left[\tanh{g(\beta, Z, \mb)}\right], \label{eq:largedatasetm} \\
     \q&=\mathbb{E}\left[\tanh{}^2{g(\beta, Z, \mb)}\right], \label{eq:largedatasetq} \\
     \n&=(1+\rho)^{-1}\m+\tilde\beta\dfrac{P}{2}\rho(1+\rho)^{P-2}(1-\q)\n^{P-1},
     \label{eq:largedatasetn}
\end{align}
where $Z\sim\mathcal{N}(0,1)$ and
\begin{equation}
    g(\beta, Z, \mb)=\Tilde{\beta}\dfrac{P}{2}\m^{P-1}+Z \Tilde{\beta} \sqrt{\rho\left(\dfrac{P}{2}\m^{P-1}\right)^2 +\gamma \dfrac{P}{2}(1+\rho)^{P}\q^{P-1}}
    \label{eq:g_supervised_large_M_app}
\end{equation}
with $\mathbb{E}=\mathbb{E}_Z$ and $\beta'=\Tilde{\beta}(1+\rho)^{P/2}$.
\end{corollary}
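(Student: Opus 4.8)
The plan is to start from the high-storage self-consistency equations \eqref{eq:n_selfRS_sup}--\eqref{eq:m_selfRS_sup} and to exploit the large-$M$ behaviour of the only $M$-dependent object appearing in them, namely $\chiM=\frac{1}{M}\SOMMA{a=1}{M}\chi_i^{1,a}$. Since the $\chi_i^{1,a}$ are i.i.d. with $\mathbb{E}[\chi]=r$ and $\mathrm{Var}[\chi]=1-r^2$, the CLT gives, for $M\gg 1$, $\chiM\sim r+\sqrt{(1-r^2)/M}\,\lambda$ with $\lambda\sim\mathcal N(0,1)$; recalling $\rho=(1-r^2)/(r^2M)$, this is exactly the replacement $\chiM/r\sim 1+\lambda\sqrt{\rho}$ anticipated in Sec.~\ref{sec:datalimit_sup} (the site index $i$ is immaterial under $\mathbb{E}_\chi$). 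First I would insert this into the common argument of the hyperbolic tangents in \eqref{eq:n_selfRS_sup}--\eqref{eq:m_selfRS_sup}, splitting it into a deterministic part $\beta'\frac{P}{2}(1+\rho)^{P/2-1}\n^{P-1}$ and two independent zero-mean Gaussian fluctuations: one carried by $\lambda$, with weight $\beta'\frac{P}{2}(1+\rho)^{P/2-1}\n^{P-1}\sqrt{\rho}$, and one carried by $Y$, with weight $\beta'\V\sqrt{\alpha_{P-1}\frac{P}{2}\q^{P-1}}$.

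Next, since $\lambda$ and $Y$ are independent standard Gaussians, their weighted sum equals in distribution a single Gaussian $Z\sim\mathcal N(0,1)$ whose variance is the sum of the two squared weights. Collecting the terms reproduces precisely the function $g$ of \eqref{eq:g_of_super_n}, and substituting it back yields the equations \eqref{eq:n_of_M_unsup}--\eqref{eq:m_Mgrande}, i.e. \eqref{eq:largedatasetq}--\eqref{eq:largedatasetm}; the double average $\mathbb{E}_\chi\mathbb{E}_Y$ of the proposition collapses to the single Gaussian average $\mathbb{E}_Z$.

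The equation for $\n$ requires one extra ingredient. After inserting $\chiM=r(1+\lambda\sqrt{\rho})$ into \eqref{eq:n_selfRS_sup} the prefactor $r$ cancels, leaving $\n=(1+\rho)^{-1}\mathbb{E}[(1+\lambda\sqrt{\rho})\tanh g]$. The $\lambda$-independent piece reproduces $\m/(1+\rho)$ through $\m=\mathbb{E}[\tanh g]$, while the piece linear in $\lambda$ I would evaluate by Stein's lemma (Gaussian integration by parts), $\mathbb{E}[\lambda\,\tanh g]=\mathbb{E}[\partial_\lambda g\,(1-\tanh^2 g)]$; using $\partial_\lambda g=\beta'\frac{P}{2}(1+\rho)^{P/2-1}\n^{P-1}\sqrt{\rho}$ and $\mathbb{E}[1-\tanh^2 g]=1-\q$ this produces the extra term $\beta'\frac{P}{2}\rho(1+\rho)^{P/2-1}(1-\q)\n^{P-1}$, recovering \eqref{eq:n_Mgrande}; dividing through by $(1+\rho)$ gives \eqref{eq:largedatasetn}. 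Finally I would recast $g$ into the compact form \eqref{eq:g_supervised_large_M_app} by substituting $\beta'=\tilde\beta(1+\rho)^{P/2}$ and applying the leading-order truncation $\m\simeq(1+\rho)\n$ to trade $\n^{P-1}$ for $\m^{P-1}$.

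The main obstacle is controlling the CLT replacement rigorously: it is an asymptotic statement about a variable that enters nonlinearly through $\tanh$ and, in the $\n$-equation, multiplies $\chiM$ itself, so one must check that the subleading moments of $\chiM$ give only vanishing corrections as $M\to\infty$ and that the Stein-lemma manipulation is legitimate at the level of the self-averaged order parameters. Once the independence of $\lambda$ and $Y$ is granted, the merging of the two Gaussians and the remaining algebra are routine.
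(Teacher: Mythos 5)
Your proposal is correct and follows essentially the same route as the paper's proof: the CLT replacement $\chiM\sim r(1+\lambda\sqrt{\rho})$, Stein's lemma to extract the $(1-\q)$ correction in the $\n$-equation, the merging of the two independent Gaussians via $\mathbb{E}_{\lambda,Y}[F(a+b\lambda+cY)]=\mathbb{E}_Z[F(a+Z\sqrt{b^2+c^2})]$, and the final truncation $\n\simeq\m(1+\rho)^{-1}$ with $\beta'=\tilde\beta(1+\rho)^{P/2}$. The algebra checks out against Eqs.~\eqref{eq:n_Mgrande}--\eqref{eq:g_of_super_n}, so no further changes are needed.
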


\begin{proof}
For large datasets, if we use the CLT on $\etaM$ and knowing that
\begin{equation}
\begin{array}{lll}
      \mathbb{E}_{(\eta|\xi)}[\etaM] = \xi^1\,,&\;\;\;\;\;&  \mathbb{E}_{(\eta|\xi)}[(\etaM)^2]-\Big(\mathbb{E}_{(\eta|\xi)}[\etaM]\Big)^2 = \rho (\xi^1)^2,
\end{array}
\end{equation}
we replace $\etaM$ as 
\begin{equation}
    \etaM \sim \xi^1\left[1+Z\sqrt{\rho}\right]\,
    \label{eq:CLT_sup}
\end{equation}
where $Z$ is a standard Gaussian variable $Z \sim \mathcal{N}(0,1)$. Replacing Eq. \eqref{eq:CLT_sup} in the self-consistency equation for $\n$ \eqref{eq:n_selfRS_sup},  exploiting the parity of the hyperbolic tangent, we can explicitly compute the mean over $\xi$ 
\begin{equation}
\small
    \begin{array}{lll}
        \n&=&\dfrac{1}{(1+\rho)}\mathbb{E}_{\xi}\mathbb{E}_{Z}\mathbb{E}_{Y}\left\{\tanh{\left[\beta'\dfrac{P}{2}(1+\rho)^{P/2-1}\n^{P-1} \xi^1\left[1+Z\sqrt{\rho}\right]+Y\beta'\sqrt{   \gamma
         \dfrac{P}{2}\q^{P-1}}\right]} \xi^1\left[1+Z\sqrt{\rho}\right]\right\}
         \\\\
         &=&
         \dfrac{1}{(1+\rho)}\mathbb{E}_{Z}\mathbb{E}_{Y}\left\{\tanh{\left[\beta'\dfrac{P}{2}(1+\rho)^{P/2-1}\n^{P-1} \left[1+Z\sqrt{\rho}\right]+Y\beta'\sqrt{   \gamma
         \dfrac{P}{2}\q^{P-1}}\right]} \left[1+Z\sqrt{\rho}\right] \right\}\,.
    \end{array}
\end{equation}

Then, applying  Stein's lemma\footnote{This lemma, also known as Wick's theorem, applies to standard Gaussian variables, say $J \sim \mathcal N(0, 1)$, and states that, for a generic
function $f(J)$ for which the two expectations $\mathbb{E}\left( J f(J)\right)$ and $\mathbb{E}\left( \partial_J f(J)\right)$ both exist, then
\begin{align}
    \label{eqn:gaussianrelation2}
    \mathbb{E} \left( J f(J)\right)= \mathbb{E} \left( \frac{\partial f(J)}{\partial J}\right)\,.
\end{align}
} in order to recover the expression for $\m$ and $\q$, we get the large dataset equation for $\n$, i.e. Eq. \eqref{eq:largedatasetn}. 

We now replace this new expression for $\n$ in the argument of the other two self-consistency equations. We will use the relation 
\begin{align}
    \mathbb{E}_{\lambda,Y}[F(a+b \lambda+cY)]=\mathbb{E}_{_Z}[F(a+Z\sqrt{b^2+c^2})] \, ,
\end{align}
where $\lambda, Y, Z$ are i.i.d. Gaussian variables, $F(a+b\lambda+cY)= \tanh(a+b\lambda+c)$ and we pose $a={\beta}\dfrac{P}{2}\m^{P-1}{(1+\rho)^{-P/2}}$, $b=\beta '\dfrac{P}{2}\n^{^{P-1}}{(1+\rho)^{P/2}}$ and $c=\beta' \sqrt{\gamma\dfrac{P}{2}\q^{^{P-1}}\;}$. Doing so, we obtain
\begin{equation}
   g(\alpha_{P-1},\Tilde{\beta},Z)=\Tilde{\beta}\dfrac{P}{2}\m^{P-1}+Z \Tilde{\beta} \sqrt{\rho\left(\dfrac{P}{2}\m^{P-1}\right)^2 +\gamma\dfrac{P}{2}(1+\rho)^{P}\q^{P-1}} \ ,
\end{equation}
where we posed $\beta'=\Tilde{\beta}(1+\rho)^{P/2}$ and we also truncated\footnote{We notice that the second term right-side in \eqref{eq:largedatasetm}, as far as the retrieval zone concerns, $1-\q$ has small value, and $P$ is an even integer greater than $2$.} $\n = \m(1+\rho)^{-1}$.
\end{proof}

%It holds the following,
\begin{corollary}
The self-consistency equations in the large dataset assumption ($M\gg1$) and  vanishing noise ($\tilde\beta\to+\infty$) are 
\begin{equation}
    \begin{array}{lll}
         \m = \mathrm{erf}\left[\dfrac{P}{2}\dfrac{\m^{P-1}}{G}\right] \ ,
          \\\\
         G=\sqrt{2\left[\rho\left(\dfrac{P}{2}\m^{P-1}\right)^2 +\gamma \dfrac{P}{2}(1+\rho)^{P}\right]} \ ,
         \\\\
         \q=1.
    \end{array}
\end{equation}
\end{corollary}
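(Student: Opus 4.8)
The plan is to take the zero-temperature limit $\beta\to\infty$ directly in the coupled fixed-point system \eqref{eq:largedatasetm}--\eqref{eq:largedatasetn} supplied by the preceding corollary, regarding $(\m,\q)$ as a solution branch parametrised by $\beta$. The structural fact I would exploit is that the argument $g$ of the hyperbolic tangents, defined in \eqref{eq:g_supervised_large_M_app}, is affine in the standard Gaussian variable $Z$ and carries an overall positive prefactor $\tilde{\beta}$, while $\tilde{\beta}=(2\beta/P!)(1+\rho)^{-P/2}\to\infty$ as $\beta\to\infty$ by \eqref{eq:tildebeta}. Writing $g=\tilde{\beta}\,(a+sZ)$ with $a:=\tfrac{P}{2}\m^{P-1}\ge 0$ and $s:=\sqrt{\rho\,a^{2}+\alpha_{P-1}\V^{2}\tfrac{P}{2}(1+\rho)^{P}\q^{P-1}}>0$ makes the $\beta$-scaling transparent and isolates the only $Z$-dependence.

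First I would dispose of $\q$. For every $Z$ with $a+sZ\neq 0$---which excludes only the single point $Z=-a/s$, a set of Gaussian measure zero---one has $g\to\pm\infty$ and hence $\tanh^{2}g\to 1$. Since $\tanh^{2}\le 1$ uniformly, dominated convergence permits passing the limit through the expectation in \eqref{eq:largedatasetq}, giving $\q\to 1$; this fixes $\q=1$ and removes the factor $\q^{P-1}$ from $s$.

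Next, for $\m$ the same pointwise reasoning gives $\tanh g\to\mathrm{sign}(a+sZ)$ almost everywhere, the positive prefactor $\tilde{\beta}$ being irrelevant to the sign. Dominated convergence applied to \eqref{eq:largedatasetm} then yields $\m=\mathbb{E}_Z[\mathrm{sign}(a+sZ)]$, and the remaining step is the elementary Gaussian integral: since $a\ge 0$ and $s>0$ the event $a+sZ>0$ is $\{Z>-a/s\}$, whence
\begin{equation}
\mathbb{E}_Z[\mathrm{sign}(a+sZ)]=\int_{-a/s}^{\infty}\!\frac{dZ}{\sqrt{2\pi}}\,e^{-Z^{2}/2}-\int_{-\infty}^{-a/s}\!\frac{dZ}{\sqrt{2\pi}}\,e^{-Z^{2}/2}=\mathrm{erf}\!\left(\frac{a}{s\sqrt{2}}\right).
\end{equation}
Substituting $\q=1$ into $s$ and setting $G:=s\sqrt{2}=\sqrt{2\left[\rho\left(\tfrac{P}{2}\m^{P-1}\right)^{2}+\alpha_{P-1}\V^{2}\tfrac{P}{2}(1+\rho)^{P}\right]}$ identifies $a/(s\sqrt{2})=\tfrac{P}{2}\m^{P-1}/G$, reproducing the claimed equation $\m=\mathrm{erf}[\tfrac{P}{2}\m^{P-1}/G]$ together with $\q=1$.

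I expect the only genuine subtlety---as opposed to routine computation---to be the justification that the limit may legitimately be taken along the solution branch of the coupled system: because $\m$ and $\q$ both enter $g$ and are themselves $\beta$-dependent, one must check that the fixed points converge and, crucially, that the noise scale $s$ stays bounded away from zero, since otherwise $\mathrm{sign}(a+sZ)$ would degenerate into the constant $\mathrm{sign}(a)$ and the $\mathrm{erf}$ structure would collapse. This is controlled by the bounds $\m,\q\in[0,1]$ together with $s\ge\sqrt{\alpha_{P-1}\V^{2}\tfrac{P}{2}(1+\rho)^{P}}>0$ whenever $\alpha_{P-1}>0$, which simultaneously legitimise the dominated-convergence interchanges above and keep the noise scale strictly positive, so the argument closes.
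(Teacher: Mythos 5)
Your proposal is correct and lands on the same final identity $\mathbb{E}_Z\,\mathrm{sign}[A+BZ]=\mathrm{erf}\bigl[A/(\sqrt{2}B)\bigr]$, but it gets there by a genuinely different and more elementary route than the paper. The paper follows the classical spin-glass zero-temperature procedure: it posits that $\lim_{\beta\to\infty}\beta(1-\q)$ is finite, reparametrizes $\q=1-\delta q/\beta$, and introduces an auxiliary field $\beta x$ in the argument of the hyperbolic tangent so that $\partial\m/\partial x=\delta q$ identifies the rescaled overlap before sending $\tilde\beta\to\infty$. That machinery buys information about the \emph{rate} at which $\q\to 1$ (the quantity $D=\lim\beta(1-\q)$, which is what one would need for a ground-state entropy or an AT-type analysis), but none of it actually enters the stated corollary, whose $G$ contains no $D$-dependent term. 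Your route --- writing $g=\tilde\beta(a+sZ)$, observing $\tanh^2 g\to 1$ and $\tanh g\to\mathrm{sign}(a+sZ)$ for a.e.\ $Z$, and passing the limit through the Gaussian expectation by dominated convergence --- is leaner and, for the corollary as stated (which starts from the already-truncated system \eqref{eq:largedatasetm}--\eqref{eq:largedatasetq} with $g$ as in \eqref{eq:g_supervised_large_M_app}), entirely sufficient; it also makes explicit the measure-zero exceptional set and the interchange of limit and expectation, which the paper leaves implicit. One small inaccuracy in your closing remark: your claimed lower bound $s\ge\sqrt{\alpha_{P-1}\V^{2}\tfrac{P}{2}(1+\rho)^{P}}$ silently drops the factor $\q^{P-1}\le 1$ inside $s$, so it does not hold until \emph{after} you have established $\q\to 1$. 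This is harmless --- if $a>0$ the sign is determined for a.e.\ $Z$ regardless of how small $s$ is, and $\mathrm{erf}(a/(s\sqrt{2}))\to\mathrm{sign}(a)$ consistently as $s\to 0$, so the degeneration you worry about does not actually break the formula --- but the bound should be stated with the $\q^{P-1}$ factor or deferred until $\q=1$ is in hand.
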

\begin{proof}
We follow the same path of the unsupervised case in \cite{unsup}.
We start by assuming finite the limit
\begin{equation}
    \lim_{\tilde\beta\to\infty}\tilde\beta(1-\q) = D \in \mathbb{R}
\end{equation}
and we notice that, as $\Tilde{\beta} \rightarrow \infty$, we have $\q \rightarrow 1$.
As a consequence, the following reparametrization is found to be useful,
\begin{equation}
    \q=1-\dfrac{\delta q}{\tilde\beta}\;\;\;\;\mathrm{as}\;\;\;\;\tilde\beta\to \infty.
\end{equation}
We introduce the additional term $\beta x$ in the argument of the hyperbolic tangent ($g({\beta},Z,\m)$), thus
\begin{equation}
    \dfrac{\partial\m}{\partial x}=\tilde\beta\left[1-\left(1-\dfrac{\delta\q}{\tilde\beta}\right)\right]=\delta\q.
\end{equation}
Therefore, as $\Tilde{\beta}\to \infty$ and $x\to 0$, it yields
\begin{equation}
    \begin{array}{lll}
         \m = \mathbb{E}_z\left\{\mathrm{sign}\left[\dfrac{P}{2}\m^{P-1}+z\sqrt{\rho\left(\dfrac{P}{2}\m^{P-1}\right)^2 +\gamma\dfrac{P}{2}(1+\rho)^{P}}\right]\right\} \ ,
          \\\\
         \q\to 1 \ ;
    \end{array}
\end{equation}
where we have also used the relation
\begin{equation}
    \mathbb{E}_z\mathrm{sign}[A+Bz]=\mathrm{erf}\left[\dfrac{A}{\sqrt{2}B}\right] \ ,
\end{equation}
where $z$ is a Gaussian variable $\mathcal{N}(0,1)$, $A=\dfrac{P}{2}\m^{P-1}$ and $B=\sqrt{\rho\left(\dfrac{P}{2}\m^{P-1}\right)^2 +\gamma\dfrac{P}{2}(1+\rho)^{P}} \, $.
\end{proof}

\section{Evaluation of momenta of the effective post-synaptic potential}
\label{app:momenta}
In this section we want to compute the first and second momenta $\mu_1$ and $\mu_2$ of the effective post-synaptic potential introduced in Sec. \ref{subsec:S2N_sup}.

Let us start from $\mu_1$:
%\footnotesize
\begin{eqnarray}
\mu_1&\coloneqq& \mathbb{E}_{\xi}\mathbb{E}_{(\eta|\xi)}\left[\xi_{i_1}^{1}h_{i_1}^{(1)}(\boldsymbol{\xi}^{1}\vert\bm\eta)\right]=\dfrac{1}{N^{P-1}M^P \mathcal{R}^{P/2}}\SOMMA{\mu=1}{K}\SOMMA{(i_2,\cdots,i_P)}{N,\cdots,N}\SOMMA{a_1,\cdots,a_P}{M,\cdots,M}\mathbb{E}_{\xi}\mathbb{E}_{(\eta|\xi)}\left[\xi_{i_1}^1\eta^{\mu, a_1}_{i_1}\hdots\xi_{i_P}^1\eta^{\mu,a_P}_{i_P}\right]\notag
\\
&=&\dfrac{1}{N^{P-1}M^P \mathcal{R}^{P/2}}\SOMMA{\mu=1}{K}\SOMMA{(i_2,\cdots,i_P)}{N,\cdots,N}\SOMMA{a_1,\cdots,a_P}{M,\cdots,M}\mathbb{E}_{\xi}\mathbb{E}_{\resub{(\eta|\xi)}}\left[\xi_{i_1}^1\eta^{\mu, a_1}_{i_1}\right]\hdots\mathbb{E}_{\xi}\mathbb{E}_{(\eta|\xi)}\left[\xi_{i_P}^1\eta^{\mu,a_P}_{i_P}\right]\notag
\\
&=&\dfrac{1}{N^{P-1}M^P \mathcal{R}^{P/2}}\SOMMA{\mu=1}{K}\SOMMA{(i_2,\cdots,i_P)}{N,\cdots,N}\SOMMA{a_1,\cdots,a_P}{M,\cdots,M}\mathbb{E}_{\xi}\left[r\xi_{i_1}^1\xi^{\mu}_{i_1}\right]\hdots\mathbb{E}_{\xi}\left[r\xi_{i_P}^1\xi^{\mu}_{i_P}\right].\notag
\end{eqnarray}
\normalsize
Since $\mathbb{E}_{\xi}[\xi_i^{\mu}]=0$ the only non-zero terms are the ones with $\mu=1$:
%\footnotesize
\begin{eqnarray}
\mu_1&\coloneqq& \dfrac{1}{N^{P-1}M^P \mathcal{R}^{P/2}}\SOMMA{\mu=1}{K}\SOMMA{(i_2,\cdots,i_P)}{N,\cdots,N}\SOMMA{a_1,\cdots,a_P}{M,\cdots,M}\mathbb{E}_{\xi}\left[r\left(\xi_{i_1}^1\right)^2\right]\hdots\mathbb{E}_{\xi}\left[r\left(\xi_{i_P}^1\right)^2\right]
\\
&=&\dfrac{1}{ N^{P-1}M^P \mathcal{R}^{P/2}}N^{P-1}M^{P}r^{P}=\dfrac{1}{2 }\left(\dfrac{r^{2}}{\mathcal{R}}\right)^{P/2}=\left(1+\rho\right)^{-P/2}\,.\notag
\end{eqnarray}
\normalsize
Moving on, we start the computation of $\mu_2$:
\begin{align}
\begin{array}{llll}
     &&\mu_2\coloneqq \mathbb{E}_{\xi}\mathbb{E}_{(\eta|\xi)}\left[\lbrace h_{i_1}^{(1)}(\boldsymbol{\xi}^{1}\vert\bm\eta)\rbrace ^2\right]
     \\\\
     &=&\dfrac{1}{N^{2P-2}M^{2P}\mathcal{R}^P}\SOMMA{\mu,\nu=1}{K}\SOMMA{a_1=1}{M}\SOMMA{b_1=1}{M}\SOMMA{ a_2, \hdots a_P}{M, \hdots, M} \SOMMA{b_2, \hdots, b_P}{M,\hdots, M}\SOMMA{(i_2,\cdots,i_P)}{N,\hdots,N}\SOMMA{(j_2,\cdots,j_P)}{N,\hdots,N}\mathbb{E}_{\xi}\mathbb{E}_{(\eta|\xi)}\left[\eta^{\mu, a_1}_{i_1}\eta^{\nu, b_1}_{i_1}\right.
     \\\\
     &&\left. \left(\eta^{\mu, a_2}_{i_2}\eta^{\nu, b_2}_{j_2}...\eta^{\mu,a_P}_{i_P}\eta^{\nu,b_P}_{j_P}\right) \left(\xi^{1}_{i_2}\xi^{1}_{j_2}...\xi^{1}_{i_P}\xi^{1}_{j_P}\right)\right].
\end{array}
\end{align}
Due to $\mathbb{E}_{(\eta|\xi)}[\eta_{i_1}^{\mu}\eta_{i_1}^{\nu}]=\left(r\xi^{\mu}_{i_1}\right)\left(r\xi^{\nu}_{i_1}\right)$ and $\mathbb{E}_{\xi}[\xi_{i_1}^{\mu}\xi_{i_1}^{\nu}]=\delta^{\mu\nu}$, the only non-zero terms are the ones with $\mu=\nu$:
\begin{align}
\begin{array}{llll}
     \mu_2&\coloneqq& \mathbb{E}_{\xi}\mathbb{E}_{(\eta|\xi)}\left[\lbrace h_{i_1}^{(1)}(\boldsymbol{\xi}^{1}\vert\bm\eta)\rbrace ^2\right]\\\\
     &=&\dfrac{1}{N^{2P-2}M^{2P}\mathcal{R}^P}\SOMMA{\mu=1}{K}\SOMMA{a_1=1}{M}\SOMMA{b_1=1}{M}\SOMMA{ a_2, \hdots a_P}{M, \hdots, M} \SOMMA{b_2, \hdots, b_P}{M,\hdots, M}\SOMMA{(i_2,\cdots,i_P)}{N,\hdots,N}\SOMMA{(j_2,\cdots,j_P)}{N,\hdots,N}\mathbb{E}_{\xi}\mathbb{E}_{(\eta|\xi)}\left[\eta^{\mu, a_1}_{i_1}\eta^{\mu, b_1}_{i_1}\right.
     \\\\
     &&\left. \left(\eta^{\mu, a_2}_{i_2}\eta^{\mu, b_2}_{j_2}...\eta^{\mu,a_P}_{i_P}\eta^{\mu,b_P}_{j_P}\right)\left(\xi^{1}_{i_2}\xi^{1}_{j_2}...\xi^{1}_{i_P}\xi^{1}_{j_P}\right)\right]
     \\\\
     &=&A_{\mu=1}+B_{\mu>1},
\end{array}
\end{align}
namely we will analyze separately the case for $\mu=1$ ($A_{\mu=1}$) and $\mu>1$ ($B_{\mu>1}$).
\begin{align}
\begin{array}{llll}
    A_{\mu=1}&=&\dfrac{1}{N^{2P-2}M^{2P}\mathcal{R}^P}\SOMMA{ a_1, \hdots a_P}{M, \hdots, M} \SOMMA{b_1, \hdots, b_P}{M,\hdots, M}\SOMMA{(i_2,\cdots,i_P)}{N,\hdots,N}\SOMMA{(j_2,\cdots,j_P)}{N,\hdots,N}\mathbb{E}_{\xi}\mathbb{E}_{(\eta|\xi)}\left[\eta^{1, a_1}_{i_1}\eta^{1, b_1}_{i_1}\left(\eta^{1, a_2}_{i_2}\eta^{1, b_2}_{j_2}...\eta^{1,a_P}_{i_P}\eta^{1,b_P}_{j_P}\right)\right.
     \\\\
     &&\left. \left(\xi^{1}_{i_2}\xi^{1}_{j_2}...\xi^{1}_{i_P}\xi^{1}_{j_P}\right)\right]
    \\\\
    &=&\dfrac{1}{N^{2P-2}M^{2P}\mathcal{R}^P}\SOMMA{a_1=1}{M}\SOMMA{ a_2, \hdots a_P}{M, \hdots, M} \SOMMA{b_2, \hdots, b_P}{M,\hdots, M}\SOMMA{(i_2,\cdots,i_P)}{N,\hdots,N}\SOMMA{(j_2,\cdots,j_P)}{N,\hdots,N}\mathbb{E}_{\xi}\mathbb{E}_{(\eta|\xi)}\left[\left(\eta^{1, a_2}_{i_2}\eta^{1, b_2}_{j_2}...\eta^{1,a_P}_{i_P}\eta^{1,b_P}_{j_P}\right)\right.
     \\\\
     &&\left. (\eta^{1, a_1}_{i_1})^2\left(\xi^{1}_{i_2}\xi^{1}_{j_2}...\xi^{1}_{i_P}\xi^{1}_{j_P}\right)\right]
     \\\\
    &+&\dfrac{1}{N^{2P-2}M^{2P}\mathcal{R}^P}\SOMMA{a_1=1}{M}\SOMMA{\underset{(b_1\neq a_1=1)}{b_1=1}}{M}\SOMMA{ a_2, \hdots a_P}{M, \hdots, M} \SOMMA{b_2, \hdots, b_P}{M,\hdots, M}\SOMMA{(i_2,\cdots,i_P)}{N,\hdots,N}\SOMMA{(j_2,\cdots,j_P)}{N,\hdots,N}\mathbb{E}_{\xi}\mathbb{E}_{(\eta|\xi)}\left[\eta^{1, a_1}_{i_1}\eta^{1, b_1}_{i_1}\right.
     \\\\
     &&\left. \left(\eta^{1, a_2}_{i_2}\eta^{1, b_2}_{j_2}...\eta^{1,a_P}_{i_P}\eta^{1,b_P}_{j_P}\right)\left(\xi^{1}_{i_2}\xi^{1}_{j_2}...\xi^{1}_{i_P}\xi^{1}_{j_P}\right)\right]
    \\\\
    &=&\dfrac{1}{N^{2P-2}M^{2P}\mathcal{R}^P}\SOMMA{a_1=1}{M}\SOMMA{ a_2, \hdots a_P}{M, \hdots, M} \SOMMA{b_2, \hdots, b_P}{M,\hdots, M}\SOMMA{(i_2,\cdots,i_P)}{N,\hdots,N}\SOMMA{(j_2,\cdots,j_P)}{N,\hdots,N}\mathbb{E}_{\xi}\left[(\xi^{1}_{i_1})^2r^{2P-2}\left(\xi^{1}_{i_2}\xi^{1}_{j_2}...\xi^{1}_{i_P}\xi^{1}_{j_P}\right)^2\right]
    \\\\
    &+&\dfrac{1}{N^{2P-2}M^{2P}\mathcal{R}^P}\SOMMA{a_1=1}{M}\SOMMA{\underset{(b_1\neq a_1=1)}{b_1=1}}{M}\SOMMA{ a_2, \hdots a_P}{M, \hdots, M} \SOMMA{b_2, \hdots, b_P}{M,\hdots, M}\SOMMA{(i_2,\cdots,i_P)}{N,\hdots,N}\SOMMA{(j_2,\cdots,j_P)}{N,\hdots,N}\mathbb{E}_{\xi}\left[r^{2P}\left(\xi^{1}_{i_1}\xi^{1}_{i_2}\xi^{1}_{j_2}...\xi^{1}_{i_P}\xi^{1}_{j_P}\right)^2\right]
    \\\\
    &=&\dfrac{r^{2P}}{\mathcal{R}^P}\left(1+\dfrac{1-r^2}{Mr^2}\right)=\left(1+\rho\right)^{1-P}. 
\end{array}
\end{align}
As for the latter
\begin{align}
\begin{array}{llll}
     B_{\mu>1}&=&\dfrac{1}{N^{2P-2}M^{2P}\mathcal{R}^P}\SOMMA{\mu>1}{K}\SOMMA{ a_2, \hdots a_P}{M, \hdots, M} \SOMMA{b_2, \hdots, b_P}{M,\hdots, M}\SOMMA{(i_2,\cdots,i_P)}{N,\hdots,N}\SOMMA{(j_2,\cdots,j_P)}{N,\hdots,N}\mathbb{E}_{\xi}\mathbb{E}_{(\eta|\xi)}\left\{\left(\SOMMA{a_1=1}{M}\left(\eta^{\mu, a_1}_{i_1}\right)^2\right.\right.
     \\\\
     && \left. +\SOMMA{a_1=1}{M}\SOMMA{\underset{(b_1\neq a_1)}{b_1=1}}{M}\eta^{\mu, a_1}_{i_1}\eta^{\mu, b_1}_{i_1}\right)+ \left(\eta^{\mu, a_2}_{i_2}\eta^{\mu, b_2}_{j_2}...\eta^{\mu,a_P}_{i_P}\eta^{\mu,b_P}_{j_P}\right)\left(\xi^{1}_{i_2}\xi^{1}_{j_2}...\xi^{1}_{i_P}\xi^{1}_{j_P}\right)\Bigg\}
     \end{array}
\end{align}
\begin{align}
\begin{array}{llll}
      B_{\mu>1}&=&\dfrac{1}{N^{2P-2}M^{2P}\mathcal{R}^P}\SOMMA{\mu>1}{K}\SOMMA{ a_2, \hdots a_P}{M, \hdots, M} \SOMMA{b_2, \hdots, b_P}{M,\hdots, M}\SOMMA{(i_2,\cdots,i_P)}{N,\hdots,N}\SOMMA{(j_2,\cdots,j_P)}{N,\hdots,N}\mathbb{E}_{\xi}\left\{\left(\SOMMA{a_1=1}{M}\left(\xi^{\mu}_{i_1}\right)^2+\SOMMA{a_1=1}{M}\SOMMA{\underset{(b_1\neq a_1)}{b_1=1}}{M}\left(r\xi^{\mu}_{i_1}\right)^2\right)\cdot \right.
     \\\\
     && \cdot r^{2P-2}\left(\xi^{\mu}_{i_2}\xi^{\mu}_{j_2}...\xi^{\mu}_{i_P}\xi^{\mu}_{j_P}\right)\left(\xi^{1}_{i_2}\xi^{1}_{j_2}...\xi^{1}_{i_P}\xi^{1}_{j_P}\right)\Bigg\},
\end{array}
\label{eq:amu}
\end{align}
the only non-zero terms are the ones in which the sum over $i$ and $j$ will be equal in pairs:
\begin{align}
\begin{array}{llll}
     B_{\mu>1}&=&\dfrac{(P-1)!}{N^{2P-2}M^{2P}\mathcal{R}^P}\SOMMA{\mu>1}{K}\SOMMA{ a_2, \hdots a_P}{M, \hdots, M} \SOMMA{b_2, \hdots, b_P}{M,\hdots, M}\SOMMA{(i_2,\cdots,i_P)}{N,\hdots,M}\mathbb{E}_{\xi}\left\{\left(\SOMMA{a=1}{M}\left(\xi^{\mu}_{i_1}\right)^2+\SOMMA{a=1}{M}\SOMMA{\underset{(b\neq a)}{b=1}}{M}\left(r\xi^{\mu}_{i_1}\right)^2\right)\right.\notag \\
     &&\left.\cdot r^{2P-2}\left(\xi^{\mu}_{i_2}\xi^{1}_{i_2}...\xi^{\mu}_{i_P}\xi^{1}_{i_P}\right)^2\right\}
     \\\\
     &=&\dfrac{(P-1)!}{N^{2P-2}M^{2P}\mathcal{R}^P}K M^{2P-2}N^{P-1}\left\{\left[M+M(M-1)\right]r^{2P-2}\right\}.
\end{array}
\label{termine2}
\end{align}
To clarify the discussion, we now focus on
\begin{align}
     \dfrac{1}{M^2}\SOMMA{a,b=1}{M}\mathbb{E}_{\xi}\mathbb{E}_{(\eta|\xi)}\left[\eta^{\mu,a}_{i}\eta^{\mu,b}_{i}\right]&=\dfrac{1}{M^2}\SOMMA{a=1}{M}\mathbb{E}_{\xi}\mathbb{E}_{(\eta|\xi)}\left[\left(\eta^{\mu,a}_{i}\right)^2\right]+\dfrac{1}{M^2}\SOMMA{a=1}{M}\SOMMA{\underset{(b\neq a)}{b=1}}{M}\mathbb{E}_{\xi}\mathbb{E}_{(\eta|\xi)}\left[\eta^{\mu,a}_{i}\eta^{\mu,b}_{i}\right] \notag 
     \\
     &=\dfrac{1}{M^2}\SOMMA{a=1}{M}\mathbb{E}_{\xi}\left[\left(\xi^{\mu}_{i}\right)^2\right]+\dfrac{1}{M^2}\SOMMA{a=1}{M}\SOMMA{\underset{(b\neq a=1)}{b=1}}{M}\mathbb{E}_{\xi}\left[\left(r\xi_i^\mu\right)^2\right]  
     \\
     &=\dfrac{1}{M^2}\left(M+M(M-1)r^2\right)=r^2\left(1+\rho\right).\notag
     \label{termine}
\end{align}
Inserting Eq. \eqref{termine} in Eq. \eqref{termine2} we have
\begin{align}
\begin{array}{llll}
     B_{\mu>1}=\dfrac{ (P-1)!}{N^{P-1}\mathcal{R}^P}K r^{2P}(1+\rho)^P=\dfrac{ (P-1)!}{N^{P-1}}K;
\end{array}
\end{align}
if we set $K=\alpha_{P-1}{N^{P-1}}$ we get
\begin{align}
\begin{array}{llll}
     B_{\mu>1}=(P-1)!\alpha_{P-1}.
\end{array}
     \label{eq:bmu}
\end{align}

Putting together Eq. \eqref{eq:amu} and Eq. \eqref{eq:bmu} we reach the expression of $\mu_2$.

\section{Plefka's Expansion on Gibbs potential}
\label{app:plefka}
The purpose of this section is to follow the path from \cite{Plefka1}, namely the computation of the expansion of the Gibbs potential. We use this in the computational part of our work to produce the figures in the main text. We refer to \cite{unsup} for the discussion concerning dense Hebbian neural networks and dense unsupervised Hebbian networks. Here, we focus on the supervised regime.
 
The Hamiltonian of the described system is 
\begin{align}
        \mathcal H_{N,K, M,r}^{(P)}(\bm \sigma \vert \bm \eta; \bm z)= - \frac{1}{\sqrt{M^P \mathcal{R}^{P/2} N^{P-1}}} \sum_{\mu=1}^K  \sum_{a_1, \hdots a_{P/2}=1}^M  \sum_{i_1, \hdots , i_{P/2}=1}^N \eta_{i_1}^{\mu,a_1}\cdots \eta_{i_{P/2}}^{\mu, a_{P/2}}\sigma_{i_1} \hdots \sigma_{i_{P/2}}z_\mu
\end{align}
which is the interaction part of the Hamiltonian in the integral representation of the partition function of dense Hebbian supervised network provided in eq. \eqref{eq:partition}.

In order to proceed to our computations, we need to introduce a parameter $\NotAlpha$ and define an Hamiltonian $\mathcal H_{N,K,M,r}^{(P)}(\bm \sigma \vert \bm \eta; \bm z, \bm h, \bm {\tilde{h}}, \NotAlpha)$ that reads as 
\begin{align}
    \mathcal H_{N,K,M,r}^{(P)}(\bm \sigma \vert \bm \eta; \bm z, \bm h, \bm {\tilde{h}}, \NotAlpha)= \NotAlpha  \mathcal H_{N,K,M,r}^{(P)}(\bm\sigma|\bm \eta; \bm z)- \frac{1}{\sqrt{M^P \mathcal{R}^{P/2} N^{P-1}}}\sum_{i=1}^N h_i \sigma_i - \sum_{\mu=1}^K \tilde h_{\mu} z_\mu;
    \label{Halpha}
\end{align}
we stress that for $\NotAlpha=0$ we have an Hamiltonian with non-interacting units, whereas for $\NotAlpha=1$ we have one with full-interacting units; $\bm h$ and $\bm \tilde h$ are sets of external fields which play a role on $\bm \sigma$ and $\bm z$ respectively. 

Using \eqref{Halpha} we write the partition function as 
\begin{align}
    \mathcal Z^{(P)}_{ N,K, M,r,\beta}(\NotAlpha)=\sum_{\sigma} \int \prod_{\mu=1}^K &dz_\mu \sqrt{\frac{\bbt }{2\pi}} \exp \left( - \frac{\bbt }{2} \sum_{\mu=1}^K z_\mu^2 \right. \notag \\
    &- \dfrac{\NotAlpha \bbt }{\sqrt{M^P r^{P} N^{P-1}}} \sum_{\mu=1}^K \sum_{i_1, \hdots, i_{P/2}}^{N, \hdots, N}\sum_{ a_1, \hdots, a_{P/2}}^{M, \hdots, M} \eta_{i_{1}}^{\mu,a_{1}} \cdots \eta_{i_{P/2}}^{\mu,a_{P/2}} \sigma_{i_1} \hdots \sigma_{i_{P/2}} z_\mu  \notag \\
    &\left.+\bbt   \sum_{\mu=1}^K \tilde h_{\mu} z_\mu +\dfrac{\bbt }{\sqrt{M^P r^{P} N^{P-1} }} \sum_{i=1}^N h_i \sigma_i \right),
\end{align}
where $\tilde{\beta}$ is defined as in Eq. \eqref{eq:tildebeta}.

Let us consider the Gibbs potential of this model, which is the Legendre transformation of the free energy constrained w.r.t. the magnetizations $m_i= \langle \si \rangle$ and $\langle z_\mu \rangle$ averaged w.r.t. Boltzmann distribution $\propto \exp^{-\beta H(\NotAlpha)}$: 
\begin{align}
\label{gibbsPotential_dense}
     %\mathcal G(\NotAlpha, \beta, {m_i}) 
     % \mathcal G(\NotAlpha ) 
    \mathcal G_{N,K,M,r,\beta}^{(P)}( \bm \eta ;\bm z,\bm h,  \tilde{ \bm h}, \NotAlpha )
     = -\frac{1}{\bbt}\ln  \mathcal Z^{(P)}_{ N,K, M,r,\beta}(\NotAlpha) +   \sum_{\mu=1}^K \tilde h_{\mu} \langle z_\mu \rangle + \dfrac{1}{\sqrt{M^P r^{P} N^{P-1}}}\sum_{i=1}^N h_i m_i.
\end{align}

Now we want to expand the Gibbs potential around $\NotAlpha=0$, namely 
\begin{align}
    \mathcal G_{N,K,M,r,\beta}^{(P)}( \bm \eta; \bm z,\bm h,  \tilde{ \bm h}, \NotAlpha)= \mathcal G_{N,K,M,r,\beta}^{(P)}(\bm \eta; \bm z,\bm h,  \tilde{ \bm h}, \NotAlpha=0 ) + \sum_{n=1}^\infty \frac{\NotAlpha^n \mathcal G^{(n)}}{n!} \ \ \ \  \mathcal G^{(n)}= \left. \frac{\partial^n  \mathcal G^{(P)}_{N,K,M r,\beta}(\NotAlpha)}{\partial \NotAlpha^n} \right \vert_{\NotAlpha=0}, 
     \label{eq:gibbexp}
\end{align}
and stop the expansion at the first order. 

All we need is the non-interacting Gibbs potential $\mathcal G_{N,K,M,r,\beta}^{(P)}(\bm \eta;\bm z,\bm h,  \tilde{ \bm h}, \NotAlpha=0 )$ and the first derivative w.r.t. $\NotAlpha$. Let us start from $\mathcal G_{N,K,M,r,\beta}^{(P)}(\bm \eta; \bm z,\bm h,  \tilde{ \bm h}, \NotAlpha=0 )$ (which from now is denoted $\mathcal{G}(0)$).

\begin{align}
     \mathcal G(0)=& -\frac{N}{\bbt} \log 2 -\frac{1}{\bbt} \sum_{i=1}^N \log \cosh \left(\dfrac{\bbt }{\sqrt{M^P r^{P}N^{P-1}}} h_i\right) +\sum_{\mu=1}^K \tilde h_{\mu} \langle z_\mu \rangle  \notag \\
     &+ \dfrac{1}{\sqrt{M^P r^{P}N^{P-1}}}\sum_{i=1}^N h_i \langle \sigma_i \rangle -\frac{1}{2} \sum_{\mu=1}^K \tilde h_{\mu}^2.
     \label{eq:G0}
\end{align}

Now, if we extremize Eq. \eqref{eq:G0} w.r.t. $h_i$ and $\tilde{ h}_\mu$ we find 

\begin{align}
\label{eq:hi}
    h_i&= \frac{\sqrt{M^P r^{P}N^{P-1}}}{\bbt } {\tanh^{-1}}(m_i) \Longrightarrow h_i=\dfrac{\sqrt{M^P r^{P}N^{P-1}}}{2\bbt }\ln\left(\dfrac{1+m_i}{1-m_i}\right),\\
    \label{eq_hmu}
    \tilde h_\mu &= \langle z_\mu \rangle.
\end{align}

The last step is to insert the expressions \eqref{eq:hi} and \eqref{eq_hmu} in the non-interacting Gibbs potential:
\begin{align}
    \mathcal G(0)=& -\frac{N}{\bbt} \log 2 +\frac{1}{2\bbt} \sum_{i=1}^N \left[(1-m_i) \log(1-m_i) + (1+m_i) \log (1+m_i)\right]+\frac{1}{2} \sum_{\mu=1}^K  \langle z_\mu \rangle^2.
\end{align}
%From the expression of $\mathcal G(0)$ we notice that $h_i$ and $\tilde h_{\mu}$ are the same in \eqref{eq:hi} and \eqref{eq_hmu}.
Concerning the first order derivative w.r.t. $\NotAlpha$, we have 
\begin{align}
  \mathcal{G}(1)= -\dfrac{1}{\sqrt{M^P r^{P}N^{P-1}}}\sum_{\mu=1}^K\sum_{i_1, \hdots, i_{P/2}}^{N,\hdots, N} \sum_{a_1, \hdots, a_{P/2}}^{M, \hdots, M} \eta_{i_1 }^{\mu,  a_1}\cdots\eta_{i_{P/2} }^{\mu,  a_{P/2}}  \langle z_\mu \rangle m_{i_1} \hdots m_{i_{P/2}}.
\end{align}
To sum up, we can write the first order Gibbs  potential \eqref{gibbsPotential_dense} as 
\begin{align}
   \mathcal G_{N,K,\beta}^{(P)}(\bm \eta; \bm z,\bm h,  \tilde{ \bm h}, \NotAlpha) =&-\frac{N}{\bbt} \log 2+  \frac{1}{2\bbt} \sum_{i=1}^N \left[(1-m_i) \log(1-m_i) + (1+m_i) \log (1+m_i)\right]\notag \\
    &+\frac{1}{2} \sum_{\mu=1}^K  \langle z_\mu \rangle^2 -  \dfrac{\NotAlpha}{\sqrt{N^{P-1}}}\SOMMA{\mu=1}{K}\langle z_\mu \rangle \left(\dfrac{1}{r M}\SOMMA{i=1}{N}\SOMMA{a=1}{M}\eta_i^{\mu,a}m_i\right)^{P/2},
\end{align}
which allows us to compute the self-consistence equations for $\NotAlpha=1$ as
\begin{align}
    m_i &= \tanh{\left\{\bbt \dfrac{P}{2}\dfrac{1}{\sqrt{N^{P-1}}}\SOMMA{\mu=1}{K}\left[\langle z_{\mu} \rangle\dfrac{1}{r M}\SOMMA{a=1}{M} \eta_i^{\mu,a} \left(\dfrac{1}{r M}\SOMMA{b=1}{M}\SOMMA{j=1}{N} \eta_j^{\mu,b} m_j \right)^{P/2-1}\right]\right\}},
    \label{eq:P_a_1} \\
     \langle z_{\mu} \rangle &=\frac{1}{\sqrt{  N^{P-1}}} \left(\dfrac{1}{r M}\SOMMA{a=1}{M}\SOMMA{i=1}{N} \eta_i^{\mu,a} m_i\right)^{P/2}.
     \label{eq:P_b_1}
\end{align}
In order to lighten the notation, we introduce the average of the examples reads as
\begin{equation}
    (\hat{\eta}_M)^{\mu}_i=\dfrac{1}{rM}\SOMMA{a=1}{M} \eta^{\mu,a}_i\,,
\end{equation}
thus we can rewrite Eq. \eqref{eq:P_a_1} and \eqref{eq:P_b_1} as 
\begin{align}
    m_i=\l \sigma_i \r&= \tanh{\left\{\bbt \dfrac{P}{2}\dfrac{1}{\sqrt{N^{P-1}}}\SOMMA{\mu=1}{K}\left[\langle z_{\mu} \rangle(\hat{\eta}_M)_i^{\mu} \left(\dfrac{1}{r}\SOMMA{j=1}{N} (\hat{\eta}_M)_j^{\mu}\, m_j \right)^{P/2-1}\right]\right\}},
    \label{eq:P_a}
\\
     \langle z_{\mu} \rangle &=\frac{1}{\sqrt{  N^{P-1}}} \left(\SOMMA{i=1}{N} (\hat{\eta}_M)_i^{\mu} \,m_i\right)^{P/2}.
     \label{eq:P_b}
\end{align}

These equations are then used ``in tandem'' to make the system evolve: starting from an initial configuration $\boldsymbol{\sigma}^{(0)}$, we evaluate the related $z_{\mu}^{(0)}$ for any $\mu$ and we use them in Eq. \eqref{eq:P_a} to get $m_i^{(0)}$ for any $i$, the latter is then used in Eq. \eqref{eq:P_b} to get $z_{\mu}^{(1)}$, and we proceed this way, bouncing from \eqref{eq:P_a} to \eqref{eq:P_b}, up to thermalization.

We stress that these equations allow to implement an effective dynamics that avoid the computation of spin configurations. In fact, this coarse grained dynamics only cares about the Boltzmann average of each spin direction, whose behaviour is given by Eq. \eqref{eq:P_a}.
Even if at first glance Eq. \eqref{Halpha} seems to require three set of auxiliary variables, 
$\{z_{\mu} \}$ , $\{h_i\}$ and $\{\tilde{h}_i\}$,  the extremization of the Gibbs potential at first order fixes the external fields $\{h_i\}$ and $\{\tilde{h}\}$. The Gaussian variables $\{z_{\mu} \}$ act  as latent dynamical variables that evolve according to \eqref{eq:P_b}. In such an iterative MC scheme these hidden degrees of freedom are suitably updated in order to effectively retrieve the pattern that constitutes the signal.

% As corroborating our results, we plot in Fig. \ref{fig:s2n_unsup} the results with MC simulations with Plefka's dynamics and signal to noise analysis in terms of magnetization $m$ and susceptibility $\partial_r m$. We show that they are perfectly aligned. Furthermore, in Fig. \ref{fig:sup_attractor} we plot the attractors for a random pattern using MC simulations with Plefka's dynamics. We highlight that, as $\rho$ increases, the attractors shrinks. 

\newpage

\section*{Table of symbols (in alphabetical order)}
\begin{itemize}
    \item $\mathcal{A}$ is the statistical quenched pressure
     \item $\alpha_{b}$ is the storage of the network defined as $\alpha_P=\lim_{N\rightarrow +\infty} K/N^b$
    \item $\mathcal{B}(\bm \sigma; t)$ is the Boltzmann factor, defined as $\mathcal{B}(\bm \sigma; t)=\exp[-\beta {H}(\bm \sigma; t)]$
    \item $\beta$  is a non negative constant called (fast) noise such that for $\beta \to 0$ the behavior of the network is a pure random walk while for $\beta \to \infty$ it is a steepest descent toward the minima
    \item $\tilde \beta $ is defined as $\dfrac{2\beta}{P!}(1+\rho)^{-P/2}$ 
    \item $\mathbb{E}$ denotes the average over all the variables taken in examination
    \item  $\bm \eta \in \{ -1, +1 \}^{K\times M}$ are the noisy examples, namely a noisy version of the archetypes % where $\bm \eta^{\mu,a} = \bm \xi^{\mu} \bm \chi^{\mu,a}$
    \item $\gamma$ is the $P$ independent part of $\alpha_{b}$, namely $\alpha_{b}= \gamma \frac{2}{P!}$
    \item $\mathcal H$ is the Hamiltonian of the model
    \item $K \in \mathbb{N}$ is the amount of archetypes $\bm \xi$ to learn and retrieve {, in the  maximum storage limit $K=\gamma\dfrac{2}{P!}N^{P-1}$}
    \item $N \in \mathbb{N}$ is the amount of spins of the network, also known as network size
    \item $M \in \mathbb{N}$ is the amount of examples per archetype
    \item $m_\mu$ is the Mattis magnetization w.r.t. the archetype $\xi^\mu$ defined as $\frac{1}{N} \sum_{i=1}^N \xi_i^{\mu} \sigma_i$. We always take into consideration $m_1$. In RS assumption the thermodynamic limit is $\bar{m}$
    \item $n_{\mu}$ is the magnetization w.r.t. the example $\eta^{\mu,a}$ defined as $\frac{r}{\mathcal{R}NM} \sum_{i=1}^N\sum_{a=1}^{M} \eta_i^{\mu,a} \sigma_i$. In RS assumption the thermodynamic limit is $\bar{n}_\mu$
    \item $\omega_t(O(\bm \sigma))$ is the generalized Boltzmann measure, namely $\omega_t(O(\bm \sigma))= \frac{1}{\mathcal{Z}_N} \sum_{\bm \sigma} O(\bm \sigma) \mathcal{B}(\bm \sigma; t)$
    \item $P$ is the degree of interaction among spins of the network
%    \item $p$ is the probability of flipping w.r.t. the archetypes
    %\item $p_{ab}$ is the overlap among the real gaussian variables $z_{\mu}$ defined as $\frac{1}{K} \sum_{\mu=1}^K z_{\mu}^{(a)} z_{\mu}^{(b)}$. In RS assumption the thermodynamic limit is $\bar{p}$.
    \item $q_{ab}$ is the overlap among two replicas, namely two identical copies of spin configurations which share quenched noise defined as $\frac{1}{N} \sum_{i=1}^N \sigma_i^{(a)} \sigma_i^{(b)}$. In RS assumption the thermodynamic limit is $\bar{q}$
    \item $\mathcal{R}$ is defined as $\mathcal{R}=r^2 + \frac{1-r^2}{M}$
    \item $r$ assesses the dataset quality; for $r\rightarrow 1$ the example matches perfectly with the archetype whereas for $r=0$ solely noise remains
    \item $\rho$, also known as dataset entropy, quantifies the ignorance of the archetypes dataset, namely $\rho=\frac{1-r^2}{r^2M}$
    \item $t$ is the parameter for Guerra's interpolation; when $t=1$ we recover the original model, whereas for $t=0$ we compute the one-body terms
 %   \item $\V$ is the right variance to be set a posteriori using the signal to noise analysis, namely $\V=\sqrt{\dfrac{P(2P-3)!!}{2}}$
    \item $\mathcal{Z}$ is the partition function
    \item $\langle O(\bm \sigma) \rangle$ is the generalize average defined as $\langle O(\bm \sigma) \rangle = \mathbb{E} \omega_t(O(\bm \sigma))$
\end{itemize}

\section*{Acknowledgments}

E.A. acknowledges ﬁnancial support from Sapienza University of Rome (RM120172B8066CB0) and from PNRR MUR project n. PE0000013-FAIR.
\newline
A.B. acknowledges ﬁnancial support from Ministero degli Affari Esteri e della Cooperazione Internazionale Italy-Israel (F85F21006230001) and PRIN grant {\em Statistical Mechanics of Learning Machines: from algorithmic and information-theoretical limits to new biologically inspired paradigms} n.
20229T9EAT.
\newline
L.A. acknowledges financial support from INdAM –GNFM Project (CUP E53C22001930001) and from PRIN grant Stochastic Methods for Complex Systems n. 2017JFFHS
\newline
D.L. acknowledges INdAM and C.N.R. (National Research Council),  and A.A. acknowledges  UniSalento, both for financial  support via PhD-AI. 
\newline
E.A., L.A., F.A., A.A., A.B acknowledge the stimulating research environment provided by the Alan Turing Institute's Theory and Methods Challenge Fortnights event ``Physics-informed Machine Learning".

\bibliographystyle{abbrv}
\bibliography{Supervised_AB_V5_AllBlack.bib}

\end{document}